\newcommand{\1}[1]{{\bf 1}}
\newcommand{\2}[1]{{\bf 2}}
\newcommand{\mc}{{\rm mc}}
\newcommand{\cw}{{\rm cw}}
\newcommand{\tw}{{\rm tw}}
\newcommand{\pw}{{\rm pw}}
\newcommand{\tc}{{\rm tc}}
\newcommand{\mw}{{\rm mw}}
\newcommand{\vc}{{\rm vc}}
\newcommand{\CX}{{\mathcal X}}
\newcommand{\revised}[1]{\textcolor{black}{#1}}
\newtheorem{observation}{Observation}
\title{Parameterized Algorithms for Maximum Cut with Connectivity Constraints} 
\author{Hiroshi Eto}
{Kyushu University, Fukuoka, Japan}
{h-eto@econ.kyushu-u.ac.jp}
{}
{}
\author{Tesshu Hanaka}
{Chuo University, Tokyo, Japan}
{hanaka.91t@g.chuo-u.ac.jp}
{}
{}
\author{Yasuaki Kobayashi}
{Kyoto University, Kyoto, Japan}
{kobayashi@iip.ist.i.kyoto-u.ac.jp}
{}
{}
\author{Yusuke Kobayashi}
{Kyoto University, Kyoto, Japan}
{yusuke@kurims.kyoto-u.ac.jp}
{}
{}
\authorrunning{H. Eto, T. Hanaka, Y.Kobayashi, and Y.Kobayashi}
\keywords{Maximum cut, Parameterized algorithm, NP-hardness, Graph parameter}
\begin{document}

\maketitle

\begin{abstract}
We study two variants of \textsc{Maximum Cut}, which we call \textsc{Connected Maximum Cut} and \textsc{Maximum Minimal Cut}, in this paper.
In these problems, given an unweighted graph, the goal is to compute a maximum cut satisfying some connectivity requirements.
Both problems are known to be NP-complete even on planar graphs whereas \textsc{Maximum Cut} \revised{on planar graphs} is solvable in polynomial time. 
We first show that these problems are NP-complete even on planar bipartite graphs and split graphs.
Then we give parameterized algorithms using graph parameters such as clique-width, tree-width, and twin-cover number.
Finally, we obtain FPT algorithms with respect to the solution size. 

\end{abstract}

\section{Introduction}

\textsc{Maximum Cut} is one of the most fundamental problems in theoretical computer science. Given a graph and an integer $k$, the problem asks for a subset of vertices such that the number of edges having exactly one endpoint in the subset is at least $k$.
This problem was shown to be NP-hard in Karp's seminal work \cite{Karp1972}. To overcome this intractability, a lot of researches have been done from various view points, such as  approximation algorithms \cite{Goemans1995}, fixed-parameter tractability \cite{Raman2007}, and special graph classes \cite{bodlaender2000,Boyaci2017,Diaz2007,Guruswami1999,Hadlock1975,Orlova1972}. 

In this paper, we study two variants of \textsc{Maximum Cut}, called \textsc{Connected Maximum Cut} and \textsc{Maximum Minimal Cut}.
A cut $(S, V\setminus S)$ is {\em connected} if the subgraph of $G$ induced by $S$ is connected.
Given a graph $G=(V,E)$ and an integer $k$, \textsc{Connected Maximum Cut}  is the problem to determine whether there is a connected cut $(S,V\setminus S)$ of size at least $k$ . This problem is defined in \cite{Haglin1991} and known to be NP-complete even on planar graphs \cite{Hajiaghayi2015} whereas \textsc{Maximum Cut} on planar graphs  is solvable in polynomial time \cite{Hadlock1975,Orlova1972}. 

Suppose $G$ is connected. 
We say that a cut $(S, V \setminus S)$ of $G$ is {\em minimal} if there is no another cut of $G$ whose cutset properly contains the cutset of $(S, V \setminus S)$, where the cutset of a cut is the set of edges between different parts. 
We can also define minimal cuts for disconnected graphs (See Section~\ref{sec:preli}).
\textsc{Maximum Minimal Cut} is the following problem: given a graph $G=(V,E)$ and an integer $k$, determine the existence of a minimal cut $(S,V\setminus S)$ of size at least $k$.
This type of problems, finding a maximum minimal (or minimum maximal) solution on graphs such as \textsc{Maximum Minimal Vertex Cover} \cite{Boria2015,Zehavi2017}, \textsc{Maximum Minimal Dominating Set} \cite{Bazgan2018}, \textsc{Maximum Minimal Edge Cover} \cite{Khoshkhah2019}, \textsc{Maximum Minimal Separator} \cite{Hanaka2017}, \textsc{Minimum Maximal matching} \cite{GJ1979,Yannakakis1980}, and \textsc{Minimum Maximal Independent Set} \cite{Demange1999}, has been long studied.

As a well-known fact, a cut $(S,V\setminus S)$ is minimal if and only if both subgraphs induced by $S$ and $V\setminus S$ are connected when the graph is connected \cite{Diestel2012}. Therefore, a minimal cut is regarded as a {\em two-sided} connected cut, while a connected cut is a {\em one-sided} connected cut\footnote{In \cite{Chaourar2017,Chaourar2019,Haglin1991}, the authors used the term ``connected cut'' for {two-sided} connected cut. In this paper, however, we use ``minimal cut'' for {two-sided} connected cut and ``connected cut'' for {one-sided} connected cut for distinction.}.
Haglin and Venkatean \cite{Haglin1991} showed that deciding if the input graph has a two-sided connected cut (i.e., a minimal cut) of size at least $k$ is NP-complete even on triconnected cubic planar graphs. This was shown by the fact that for any two-sided connected cut on a connected planar graph $G$, the cutset corresponds to a cycle on the dual graph of $G$ and vise versa. Hence, the problem is equivalent to the longest cycle problem on planar graphs~\cite{Haglin1991}.
Recently, Chaourar proved that \textsc{Maximum Minimal Cut} can be solved in polynomial time on series parallel graphs and graphs without $K_5\setminus e$ as a minor in \cite{Chaourar2017,Chaourar2019}. 


Even though there are many important applications of  \textsc{Connected Maximum Cut} and \textsc{Maximum Minimal Cut}  such as image segmentation \cite{Vicente2008}, forest planning \cite{Carvajal2013}, and  computing a market splitting for electricity markets \cite{Grimm2019}, the known results are much fewer than those for \textsc{Maximum Cut} due to the difficult nature of simultaneously maximizing its size and handling the connectivity of a cut.

\subsection{Our contribution}
\begin{table}[tbp]
  \centering
    \caption{The summary of the computational complexity of \textsc{Maximum Cut} and its variants.  \textsc{MC},   \textsc{CMC}, and \textsc{MMC} stand for \textsc{Maximum Cut}, \textsc{Connected Maximum Cut}, and \textsc{Maximum Minimal Cut}.}
  \begin{tabular}{|c| c c c |c c c c| c|} \hline
 & \multicolumn{3}{|c|}{Graph Class} &  \multicolumn{4}{|c|}{Parameter} &kernel\cr \hline
 & Split &Bipartite & Planar & $\cw$ & $\tw$ & $\tc$ & $k$ & $k$ \cr \hline \hline
 \textsc{MC} & NP-c & P & P & $n^{O(\cw)}$ & $2^{\tw}$ & $2^{\tc}$  & $1.2418^k$ & $O(k)$\cr 
 &   \cite{bodlaender2000}&[trivial] & \cite{Hadlock1975,Orlova1972} & \cite{Fomin2014} & \cite{bodlaender2000} &\cite{Ganian2015} &\cite{Raman2007} & \cite{Haglin1991,Mahajan1999} \cr \hline
  \textsc{CMC} &  NP-c &NP-c & NP-c & $n^{O(\cw)}$ & $3^{\tw}$& $2^{2^{\tc}+\tc}$ & $9^k$ & No\cr 
& [Th.~\ref{thm:split:connected}]   & [Th.~\ref{thm:bipartite:Connected}] & \cite{Hajiaghayi2015}& [Th.~\ref{thm:cw:dp}] &[Th.~\ref{thm:treewidth_single:Connected}]  &[Th.~\ref{thm:twincover:concut}] & [Th.~\ref{thm:FPTk:Connected}]  &  [Th.~\ref{thm:NoPolyKernel}]  \cr \hline
     \textsc{MMC} & NP-c & NP-c  & NP-c & $n^{O(\cw)}$ & $4^{\tw}$& $2^{\tc}3^{2^{\tc}}$  & $2^{O(k^2)}$ &No \cr 
   & [Th.~\ref{thm:split:maxmin}] & [Th.~\ref{thm:bipartite:maxmin}]  &\cite{Haglin1991}& [Th.~\ref{thm:cw:dp}]& [Th.~\ref{thm:treewidth_single:Maxmin}] &[Th.~\ref{thm:twincover:maxmin}]  & [Th.~\ref{FPT:solution}]  & [Th.~\ref{thm:NoPolyKernel}] \cr \hline
  \end{tabular}
  \label{table:contribution}
\end{table}

Our contribution is summarized in Table~\ref{table:contribution}.
We prove that both \textsc{Connected Maximum Cut} and \textsc{Maximum Minimal Cut} are NP-complete even on planar bipartite graphs.
Interestingly, although \textsc{Maximum Cut} can be solved in polynomial time on planar graphs \cite{Hadlock1975,Orlova1972} and bipartite graphs, both problems are intractable even on the intersection of these tractable classes. 
We also show that the problems are NP-complete on split graphs.

To tackle to this difficulty, we study both problems from the perspective of the parameterized complexity.
We give $O^*({\tw}^{O(\tw)})$-time algorithms for both problems\footnote{The $O^*(\cdot)$ notation suppresses polynomial factors in the input size.}, where $\tw$ is the tree-width of the input graph.
Moreover, we can improve the running time using the rank-based approach \cite{Bodlaender2015} to $O^*(c^{\tw})$ for some constant $c$ and using the Cut \& Count technique \cite{Cygan2011} to $O^*(3^{{\tw}})$ for \textsc{Connected Maximum Cut} and $O^*(4^{\tw})$ for \textsc{Maximum Minimal cut} with randomization.
Let us note that our result generalizes the polynomial time algorithms for \textsc{Maximum Minimal Cut} on series parallel graphs and graphs without $K_5\setminus e$ as a minor due to Chaourar~\cite{Chaourar2017,Chaourar2019} since such graphs are tree-width bounded~\cite{Robertson1986}.

Based on these algorithms, we give $O^*(2^{k^{O(1)}})$-time algorithms for both problems.
For \textsc{Connected Maximum Cut}, we also give a randomized $O^*(9^k)$-time algorithm.
As for polynomial kernelization, we can observe that \textsc{Connected Maximum Cut} and  \textsc{Maximum Minimal Cut}  admit no polynomial kernel when parameterized by solution size $k$ under a reasonable complexity assumption (see, Theorem~\ref{thm:NoPolyKernel}).

We also consider different structural graph parameters.
We design XP-algorithms for both problems when parameterized by clique-width $\cw$.
Also, we give $O^*(2^{2^{\tc}+\tc})$-time and $O^*(2^{\tc}3^{2^{\tc}})$-time FPT algorithms for \textsc{Connected Maximum Cut} and \textsc{Maximum Minimal Cut}, respectively, where $\tc$ is the minimum size of a twin-cover of the input graph.

\subsection{Related work}
\textsc{Maximum Cut} is a classical graph optimization problem and there are many applications in practice. The problem is known to be NP-complete even on split graphs, tripartite graphs, co-bipartite graphs, undirected path
graphs \cite{bodlaender2000}, unit disc graphs \cite{Diaz2007}, and total graphs \cite{Guruswami1999}.
On the other hand, it is solvable in polynomial time on bipartite graphs, planar graphs \cite{Hadlock1975,Orlova1972}, line graphs \cite{Guruswami1999}, and proper interval graphs \cite{Boyaci2017}.
For the optimization version of \textsc{Maximum Cut}, there is a $0.878$-approximation algorithm using semidefinite programming \cite{Goemans1995}. 
As for parameterized complexity, \textsc{Maximum Cut} is FPT \cite{Raman2007} and has a linear kernel \cite{Haglin1991,Mahajan1999} when parameterized by the solution size $k$. Moreover, the problem is FPT when parameterized by tree-width \cite{bodlaender2000} and twin-cover number \cite{Ganian2015}. Fomin et al. \cite{Fomin2014} proved that \textsc{Maximum Cut} is W[1]-hard but XP when parameterized by clique-width.

\textsc{Connected Maximum Cut} was proposed in \cite{Haglin1991}. The problem is a connected variant of \textsc{Maximum Cut} as with \textsc{Connected Dominating Set} \cite{Guha1998} and \textsc{Connected Vertex Cover} \cite{Cygan2012}.
Hajiaghayi et al. \cite{Hajiaghayi2015} showed that the problem is NP-complete even on planar graphs whereas it is solvable in polynomial time on bounded treewidth graphs.
For the optimization version of \textsc{Connected Maximum Cut}, they proposed a polynomial time approximation scheme (PTAS) on planar graphs and more generally on bounded genus graphs and an $\Omega(1/\log n)$-approximation algorithm on general graphs. 

\textsc{Maximum Minimal Cut} was considered in \cite{Haglin1991} and shown to be NP-complete on planar graphs. Recently, Chaourar proved that the problem can be solved in polynomial time on series parallel graphs \cite{Chaourar2017} and graphs without $K_5\setminus e$ as a minor \cite{Chaourar2019}. 

As another related problem, \textsc{Multi-Node Hubs} was proposed by Saurabh and Zehavi \cite{Saurabh2018}: Given a graph $G=(V,E)$ and two integers $k, p$, determine whether there is a connected cut of size at least $k$ such that the size of the connected part is {\em exactly} $p$.
They proved that \textsc{Multi-Node Hubs} is W[1]-hard with respect to $p$, but solvable in time  $O^*(2^{2^{O(k)}})$.
As an immediate corollary of their result, we can solve \textsc{Connected Maximum Cut} in time  $O^*(2^{2^{O(k)}})$ by solving \textsc{Multi-Node Hubs} for each $0\le p\le n$.
\begin{proposition}[\cite{Saurabh2018}]\label{prop:MNH}
\textsc{Connected Maximum Cut} can be solved in time  $O^*(2^{2^{O(k)}})$.
\end{proposition}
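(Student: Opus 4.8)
The plan is to treat \textsc{Connected Maximum Cut} as the existential combination of \textsc{Multi-Node Hubs} instances over all possible sizes of the connected side. The key observation is that the only difference between the two problems is that \textsc{Multi-Node Hubs} additionally fixes the number of vertices on the connected side to be exactly $p$, whereas \textsc{Connected Maximum Cut} places no such restriction. Hence $G$ admits a connected cut $(S, V\setminus S)$ of size at least $k$ if and only if there exists some integer $p$ with $1 \le p \le n$ such that $G$ admits a connected cut of size at least $k$ whose connected side has size exactly $p$; the forward direction takes $p = |S|$, and the backward direction is immediate.

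First I would invoke the algorithm of Saurabh and Zehavi~\cite{Saurabh2018}, which decides \textsc{Multi-Node Hubs} in time $O^*(2^{2^{O(k)}})$, as a black box. Then I would iterate over every candidate value $p \in \{1, \dots, n\}$, run that algorithm on the instance $(G, k, p)$, and return \emph{yes} precisely when at least one of these calls returns \emph{yes}. Correctness follows directly from the equivalence in the previous paragraph, so nothing beyond the definitional unpacking has to be argued here.

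For the running time, there are at most $n$ calls to the \textsc{Multi-Node Hubs} algorithm, each costing $O^*(2^{2^{O(k)}})$. Since the factor $n$ is polynomial in the input size, it is absorbed by the $O^*(\cdot)$ notation, and the total running time remains $O^*(2^{2^{O(k)}})$.

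I do not expect any genuine obstacle: all of the algorithmic difficulty is outsourced to the cited double-exponential algorithm for \textsc{Multi-Node Hubs}, and the reduction is a routine guess over the size of the connected part. The only point requiring care is to confirm that the notion of a connected cut coincides in both problems and that the boundary cases $p = 0$ and $p = n$, which yield a cut of size $0$, can be safely ignored whenever $k \ge 1$.
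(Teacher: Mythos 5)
Your proposal is correct and matches the paper's argument exactly: the paper also observes that \textsc{Connected Maximum Cut} reduces to solving \textsc{Multi-Node Hubs} for each $0 \le p \le n$ and invokes the $O^*(2^{2^{O(k)}})$-time algorithm of Saurabh and Zehavi as a black box. The polynomial overhead of the $n+1$ calls is absorbed into the $O^*(\cdot)$ notation, just as you argue.
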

In this paper, we improve the running time in Proposition~\ref{prop:MNH} by giving an $O^*(2^{O(k)})$-time algorithm for \textsc{Connected Maximum Cut} in Section~\ref{sec:FPT:solutionsize}.

\section{Preliminaries}\label{sec:preli}
In this paper, we use the standard graph notations.
Let $G = (V, E)$ be an undirected graph.
For $V'\subseteq V$, we denote by $G[V']$ the subgraph of $G$ induced by $V'$. 
We denote the open neighbourhood of $v$ by $N(v)$ and the closed neighbourhood by $N[v]$.

A {\em cut} of $G$ is a pair $(S, V \setminus S)$ for some subset $S \subseteq V$. Note that we allow $S$ (and $V \setminus S$) to be empty. For simplicity, we sometimes denote a cut $(S, V \setminus S)$ by $(S_1, S_2)$ where $S_1=S$ and $S_2=V\setminus S$.
If the second part $V \setminus S$ of a cut is clear from the context, we may simply denote $(S, V\setminus S)$ by $S$.
The {\em cutset} of $S$, denoted by  $\delta(S)$, is the set of {\em cut edges} between $S$ and $V\setminus S$. The size of a cut is defined as the number of edges in its cutset (i.e., $|\delta(S)|$). A cut $S$ is {\em connected} if the subgraph induced by $S$ is connected. 
We say that a cutset is {\em minimal} if there is no non-empty proper cutset of it. 
A cut is {\em minimal} if its cutset is minimal.
It is well known that for every  minimal cut $S$, $G[S]$ and $G[V\setminus S]$ are connected when $G$ is connected~\cite{Diestel2012}.
If $G$ has two or more connected component, every cutset of a minimal cut of $G$ corresponds to a minimal cutset of its connected component.
Therefore, throughout the paper, except in Theorem~\ref{thm:NoPolyKernel}, we assume the input graph $G$ is connected.
Let $p$ be a predicate.
We define the function $[p]$ as follows: if $p$ is true, then $[p]=1$, otherwise $[p]=0$. 

\subsection{Graph parameters}
In this paper, we use several graph parameters such as tree-width, clique-width, and twin-cover number. 

\paragraph*{Tree-width}
\begin{definition}
A {\em tree decomposition} of a graph $G=(V,E)$ is defined as 
a pair $\langle {\cal X}, T\rangle$, where $T$ is a tree with node set $I$ and ${\cal
X}=\{X_i \mid i\in I \}$ is a collection of subsets, called {\em bags}, of $V$ such that:
\begin{itemize}
\item $\bigcup_{i\in I} X_i =V$;
\item For every $\{u, v\}\in E$, there exists an $i\in I$ such that 
	   $\{u, v\} \subseteq X_i$;
\item For every $i, j, k \in I$, if $j$ lies on the
	   path between $i$ and $k$ in $T$, then $X_i \cap X_k \subseteq X_j$.   
\end{itemize}
The {\em width} of a tree decomposition is defined as $\max_{i\in I} |X_i| - 1$ and the {\em tree-width} of $G$, denoted by $\tw(G)$,  is defined as  the minimum width among all possible tree decompositions of $G$.

Moreover, if $T$ of a tree decomposition is a path, it is called a {\em path decomposition} and the {\em path-width} of $G$, denoted by $\pw(G)$,  is defined as  the minimum width among all possible path decompositions of $G$.
\end{definition}

\begin{definition}\label{def:nicetree}
 A tree decomposition  $\langle {\cal X}, T\rangle$ is called a {\em nice tree decomposition} if it satisfies the following:
\begin{description}
\item[1.] $T$ is rooted at a designated node $r(T) \in I$ satisfying $X_{r(T)}=\emptyset$, called 
the {\em root node}.
\item[2.] Every node of the tree $T$ has at most two children.
\item[3.] Each node $i$ in $T$ has one of the following five types:
\begin{itemize}
\item A {\em leaf} node $i$ has no children and its bag $X_i$ satisfies $X_i = \emptyset$,
\item An {\em introduce vertex} node $i$  has exactly one child $j$ with $X_i = X_j \cup \{v\} $ for a vertex $v\in V$,
\item An {\em introduce edge} node $i$ has exactly one child $j$ and  labeled with an edge $\{u, v\} \in E$ where $u, v \in X_i$ and  $X_i = X_j$,
\item A {\em forget} node $i$ has exactly one child $j$ and satisfies $X_i = X_j \setminus \{v\}$ for
a vertex $v \in V$. and
\item A {\em join} node $i$ has exactly two children $j_1, j_2$ and satisfies $X_{j_1}=X_i$ and $X_{j_2}=X_i$.
\end{itemize} 
\end{description}
\end{definition}
We can transform any tree decomposition to a nice tree decomposition with $O(n)$ bags
and the same width in linear time~\cite{Cygan2015}. 
For each node $i$ in a nice tree decomposition $T$, we define a subgraph $G_i=(V_i, E_i)$, where $V_i$ is the union of all bags $X_j$ such that $j=i$ or $j$ is a descendant of $i$ and $E_i\subseteq E$ is the set of all edges introduced at $i$ (if $i$ is an introduce edge node) or a descendant of $i$.

\paragraph*{Clique-width}
\begin{definition}\label{def:clique-width}
Let $w$ be a positive integer.
A {\em $w$-graph} is a graph whose vertices labeled by an integer in $\{1, 2, \ldots, t\}$.
The {\em clique-width} of $G$, denoted by $\cw(G)$, is the minimum integer $w$ such that $G$ can be constructed by means of repeated application of the following operations.
\begin{itemize}
    \item Add a new vertex with label $i \in \{1, 2, \ldots, t\}$;
    \item Take a disjoint union of $w$-graphs $G_1$ and $G_2$;
    \item Take two labels $i$ and $j$ and add an edge between every pair of vertices labeled by $i$ and by $j$; 
    \item Relabel the vertices of label $i$ to label $j$.
\end{itemize}
\end{definition}
\paragraph*{Twin-cover}
Two vertices $u,v$ are called {\em twins} if both $u$ and $v$ have the same neighbors. 
Moreover, if twins $u,v$ have edge $\{u,v\}$, they are called {\em true twins} and the edge is called a {\em twin edge}. 
Then a {\em twin-cover} of $G$ is defined as follows.
\begin{definition}[\cite{Ganian2015}]\label{def:twincover}
A set of vertices $X$ is {\em twin-cover} of $G$ if every edge $\{u,v\}\in E$ satisfies either
\begin{itemize}
\item $u\in X$ or $v\in X$, or
\item $u,v$ are true twins.
\end{itemize}  
The {\em twin-cover number} of $G$, denoted by $\tc(G)$, is defined as the size of minimum twin-cover in $G$. 
\end{definition}
An important observation is that the complement of a twin-cover $X$ induces disjoint cliques.
Moreover, for each clique $Z$ of $G[V \setminus X]$, $N(u) \cap X = N(v) \cap X$ for every $u, v \in Z$~\cite{Ganian2015}.

A {\em vertex cover} $X$ is the set of vertices such that for every edge, at least one endpoint is in $X$.
The {\em vertex cover number} of $G$, denoted by $\vc(G)$, is defined as the size of minimum vertex cover in $G$. Since every vertex cover of $G$ is also a twin-cover of $G$, $\tc(G)\le \vc(G)$ holds.

For clique-width, tree-width, path-width, twin-cover number, and vertex cover number, the following relations hold.
\begin{proposition}[\cite{Bodlaender1995b,Courcelle2000,Ganian2015}]\label{prop:parameters}
For any graph $G$, the following inequalities hold: $\cw(G)\le 2^{\tw(G)+1}+1$, $\cw(G)\le \pw(G)+1$, $\tw(G)\le \pw(G)\le \vc(G)$, $\cw(G)\le 2^{\tc(G)}+\tc(G)$, and $\tc(G)\le \vc(G)$.
\end{proposition}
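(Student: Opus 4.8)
The plan is to treat the five inequalities separately, since three of them are essentially immediate while the two clique-width bounds require an explicit expression. I would first dispose of the easy ones. For $\tc(G)\le\vc(G)$, any vertex cover $X$ already satisfies the first alternative of the twin-cover definition for every edge, so $X$ is itself a twin-cover; this is exactly the remark preceding the statement. For $\tw(G)\le\pw(G)$, a path is a tree, so every path decomposition is a tree decomposition of the same width, and minimizing over the smaller family can only increase the optimum. For $\pw(G)\le\vc(G)$, I would exhibit a path decomposition from a minimum vertex cover $X$: enumerate the independent set $V\setminus X$ as $u_1,\dots,u_m$ and take the bags $B_j=X\cup\{u_j\}$. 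Every edge has an endpoint in $X$ and hence lies in some bag, the vertices of $X$ occur in all bags while each $u_j$ occurs in a single bag, so the decomposition is valid and has width $|X|=\vc(G)$.

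For $\cw(G)\le 2^{\tc(G)}+\tc(G)$, fix a twin-cover $X$ with $|X|=t$ and recall that $G-X$ is a disjoint union of cliques, each clique $Z$ having a common neighborhood $N_Z\subseteq X$, of which there are at most $2^t$ distinct possibilities. I would use $2^t$ labels indexed by subsets of $X$ (one per neighborhood type) and $t$ labels, one per cover vertex. The expression is built in two phases: first each clique is assembled in isolation by the standard two-label trick that turns an independent set into a clique, and is then relabeled to the single label determined by $N_Z$; afterwards the cover vertices are introduced, the edges of $G[X]$ are added between their individual labels, and for each cover vertex $x$ its label is joined to every subset-label whose set contains $x$. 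Introducing the cover vertices \emph{last} lets their $t$ labels double as the scratch labels used while building the cliques, which keeps the total at $2^t+t$.

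The remaining bound $\cw(G)\le 2^{\tw(G)+1}+1$ I would prove by processing a nice tree decomposition bottom-up, keeping the invariant that the vertices of the current bag carry labels encoding subsets of that bag (the source of the $2^{\tw+1}$ factor) together with one extra label acting as a sink for already-forgotten vertices. Introduce-vertex, forget, and introduce-edge nodes translate directly into add-vertex, relabel, and add-edge operations. The linear bound $\cw(G)\le\pw(G)+1$ is the join-free specialization of this construction: following a path decomposition from left to right, the active bag vertices receive distinct labels, and the label of a forgotten vertex—whose incident edges are all already present—can be recycled for the next introduced vertex, so $\pw+1$ labels suffice.

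The step I expect to be the main obstacle is the join node in the tree-width construction. The clique-width edge operation is \emph{global}, adding edges between all vertices of two label classes at once, and the only way to combine two subexpressions is a disjoint union; but at a join node the two children share the entire bag, so a naive union would either duplicate the bag vertices or create spurious edges. Resolving this is precisely what forces the labels to record enough information about each bag vertex to reconstruct the intended adjacencies with a bounded sequence of relabel and add-edge operations after the union, and it is this bookkeeping—ensuring every required edge is added exactly once and no unintended edge is created—that pushes the label count up to $2^{\tw+1}$ and constitutes the delicate part of the argument.
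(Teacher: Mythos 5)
The paper gives no proof of this proposition; it is imported wholesale from \cite{Bodlaender1995b,Courcelle2000,Ganian2015}, so there is no in-paper argument to compare yours against. Your treatments of $\tc(G)\le\vc(G)$, $\tw(G)\le\pw(G)\le\vc(G)$, and the twin-cover bound $\cw(G)\le 2^{\tc(G)}+\tc(G)$ are essentially the standard ones and are sound (modulo degenerate cases such as $\tc(G)\le 1$, where the two scratch labels needed to assemble a clique are not available and the count must be padded).

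The genuine gap is in the two clique-width-versus-width bounds, which are the only nontrivial part of the statement. For $\cw(G)\le 2^{\tw(G)+1}+1$ you correctly identify the join node as the crux and then stop: you describe what the bookkeeping must accomplish rather than exhibiting it, so the bound is asserted, not proved. Moreover your stated invariant --- that only the \emph{current bag's} vertices carry subset labels while forgotten vertices share one sink label --- cannot survive a join. Since the two children of a join share the bag, the bag vertices can be constructed in only one of the two subexpressions; every edge from the other subexpression to a bag vertex must then be deferred past the disjoint union, and by that time its non-bag endpoint may already have been forgotten. This is exactly why \emph{every} vertex of $V_i$, active or not, must carry the label $N(v)\cap X_i$ (restricted to pending edges), which is where $2^{|X_i|}$ really comes from; with your invariant the deferred edges cannot be recovered. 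Separately, your argument for $\cw(G)\le\pw(G)+1$ fails at the recycling step: if a forgotten vertex's label is reused for a newly introduced vertex, every subsequent add-edge operation on that label also attaches edges to the forgotten vertex. One needs a dedicated garbage label, which yields $\pw(G)+2$, and indeed $P_4$ (path-width $1$, clique-width $3$) shows that $\pw(G)+1$ is not provable --- the inequality as printed in the proposition appears to be off by one from the bound in the cited literature.
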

From Proposition \ref{prop:parameters}, we can illustrate the parameterized complexity of \textsc{Maximum Cut}, \textsc{Connected Maximum Cut}, and \textsc{Maximum Minimal Cut} associated with graph parameters in Figure~\ref{structural_parameter}.

\begin{figure}[t]
    \begin{center}
   \includegraphics[clip,width=5cm]{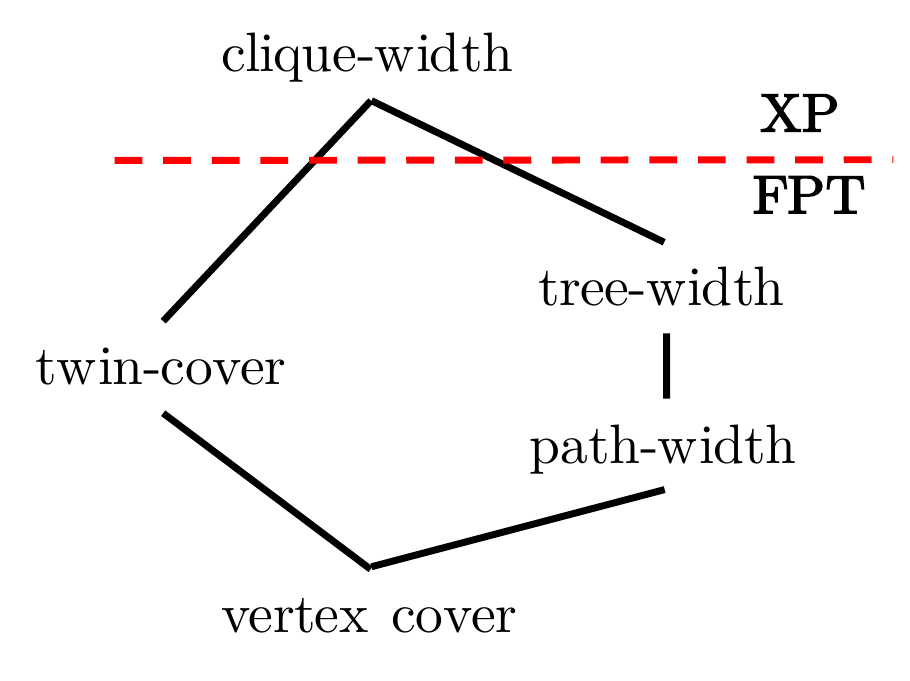}
    \end{center}
    \caption{Graph parameters and the parameterized complexity of \textsc{Maximum Cut}, \textsc{Connected Maximum Cut}, and \textsc{Maximum Minimal Cut}. Connections between two parameters imply the above one is bounded by some function in the below one.}
    \label{structural_parameter}
\end{figure}

\section{Computational Complexity on Graph Classes}\label{sec:graphclass}
In this section, we prove that \textsc{Connected Maximum Cut} and \textsc{Maximum Minimal Cut} are NP-complete on planar bipartite graphs and split graphs.

\subsection{Planar bipartite graphs}\label{sec:bipartite}

\begin{figure}[tbp]
  \begin{center}
       \includegraphics[width=80mm]{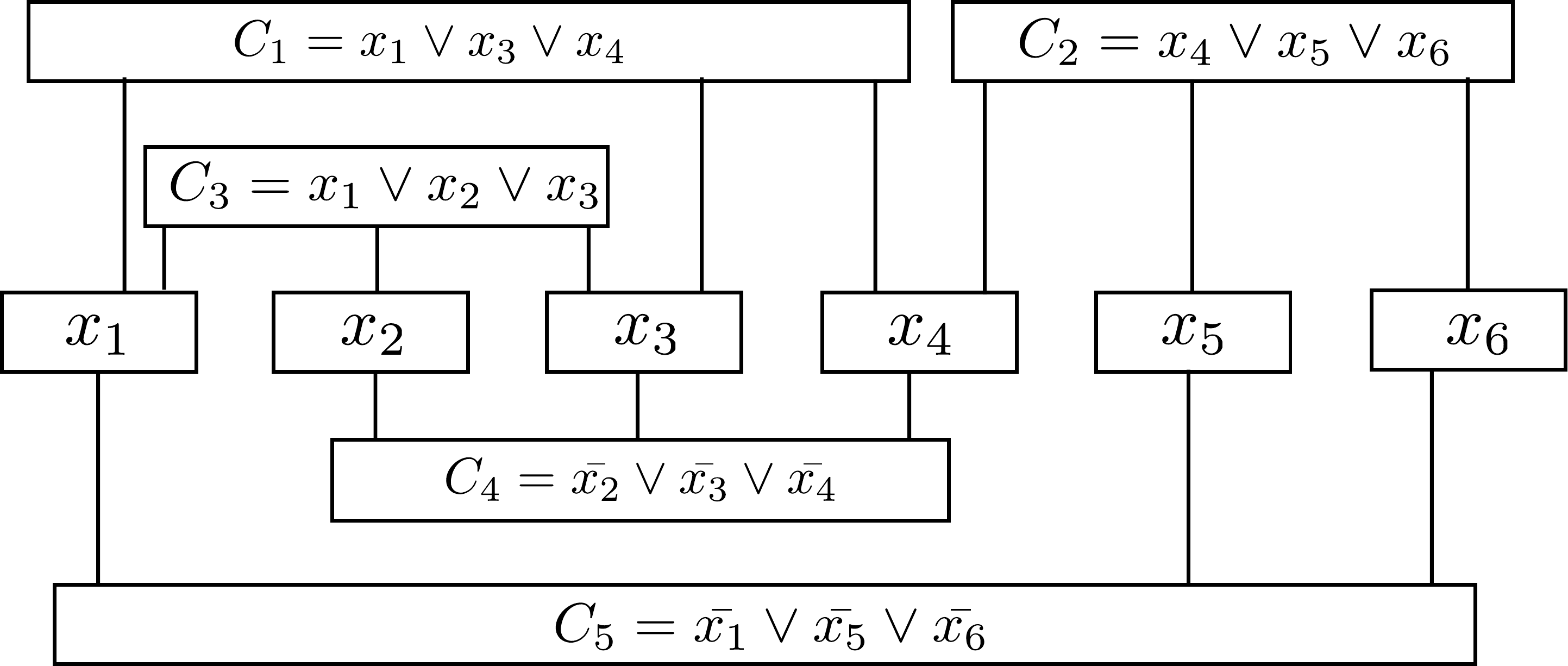}
  \end{center}
  \begin{center}
  \subcaption{A monotone rectilinear representation of an instance of {\sc Planar Monotone 3-SAT}.}
  \end{center}
  \begin{center}
   \includegraphics[width=130mm]{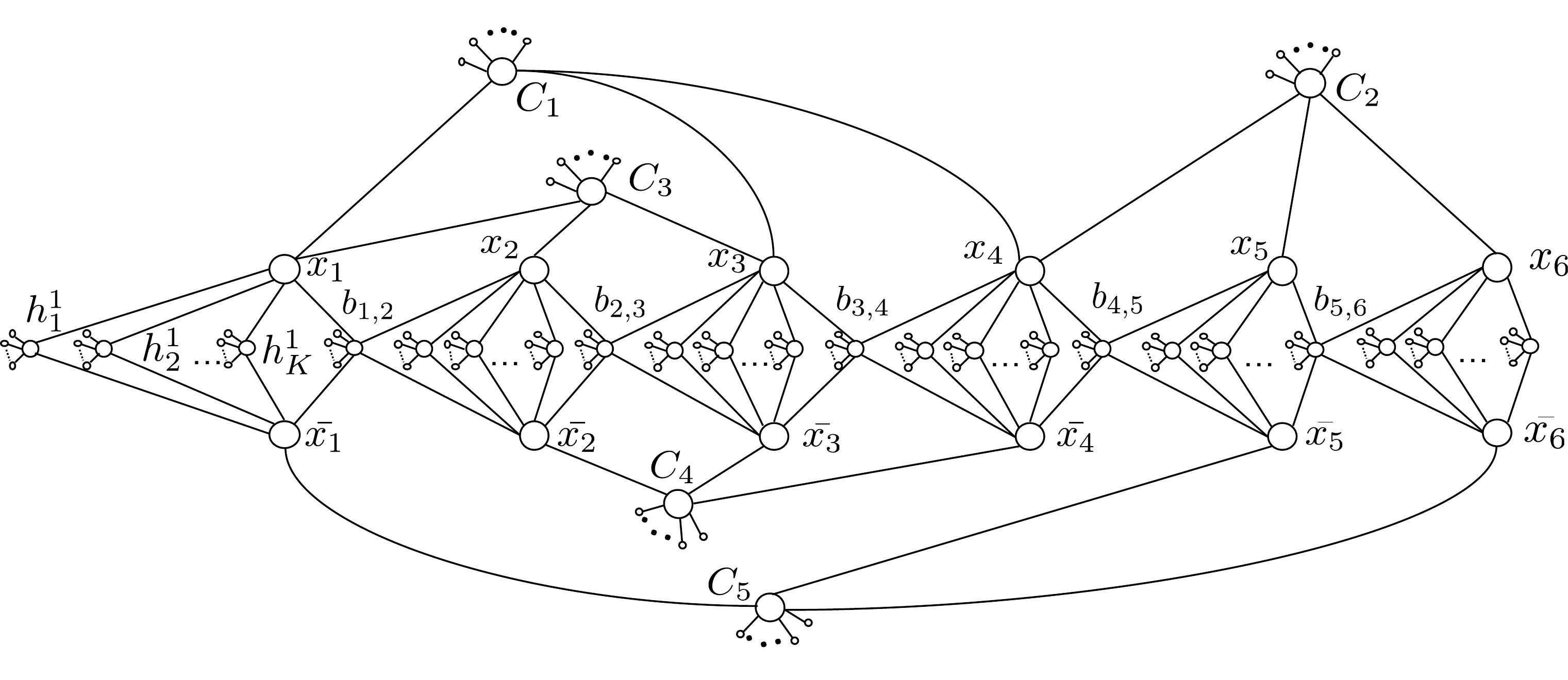}
  \end{center}
  \begin{center}
  \subcaption{The reduced graph $H_\phi$. }
  \end{center}
  \caption{An example illustrating (a) the rectilinear representation of an formula $\phi=(x_1\lor x_3 \lor x_4)\land (x_4\lor x_5 \lor x_6)\land (x_1\lor x_2 \lor x_3)\land (\bar{x_2}\lor \bar{x_3} \lor \bar{x_4})\land(\bar{x_1}\lor \bar{x_5} \lor \bar{x_6})$ of \textsc{Planar Monotone 3-SAT} and (b) the reduced graph $H_\phi$.}
 \label{PM3SATtoCMConPB}
\end{figure}

\begin{theorem}\label{thm:bipartite:Connected}
\textsc{Connected Maximum Cut} is NP-complete on planar bipartite graphs.
\end{theorem}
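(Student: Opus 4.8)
The plan is to give a polynomial-time many-one reduction from \textsc{Planar Monotone 3-SAT} — the restriction of 3-SAT in which every clause is either all-positive or all-negative and the variable–clause incidence graph admits a monotone rectilinear layout (variables on a horizontal line, positive clauses drawn above and negative clauses below without crossings). Membership in NP is immediate: given $S$, one checks in polynomial time that $G[S]$ is connected and that $|\delta(S)|\ge k$. The rectilinear representation in Figure~\ref{PM3SATtoCMConPB}(a) should drive the planar embedding of the reduced graph $H_\phi$ in Figure~\ref{PM3SATtoCMConPB}(b), so planarity will come essentially for free from the layout; the real work is in designing gadgets whose \emph{connected} maximum cut reads off a satisfying assignment, while keeping the whole construction bipartite.

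First I would fix a connected \emph{backbone} — a spine running along the variable line together with attachments reaching the clause gadgets — that any large connected cut is forced to place (almost) entirely on the $S$-side; this is what lets the one-sided connectivity constraint be exploited rather than fought. For each variable $x_i$ I would attach a two-state gadget offering two symmetric ways to connect to the backbone, one encoding $x_i=\text{true}$ and the other $x_i=\text{false}$; the ``weight'' distinguishing a correct choice is simulated in the unweighted setting by attaching bundles of pendant vertices (kept on the $V\setminus S$ side), which add cut edges without disturbing the connectivity of $G[S]$. For each clause I would attach a gadget, placed above or below the spine according to its sign, that yields a fixed bonus of cut edges \emph{iff} at least one of its three incident literal gadgets is in the satisfying state; the monotone separation of positive and negative clauses is exactly what keeps these attachments planar.

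Then I would calibrate a threshold $k$ equal to the backbone contribution plus one unit per variable plus one bonus per clause, and prove both directions. For the forward direction, a satisfying assignment dictates the state of every variable gadget and hence a connected set $S$ whose cut meets $k$. For the converse, I would argue that any connected cut of size at least $k$ must take the backbone, must commit each variable gadget to exactly one of its two states (selecting both or neither strictly loses value), and must collect every clause bonus — which can only happen if each clause has a satisfied literal, yielding a satisfying assignment.

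The main obstacle I anticipate is enforcing \emph{bipartiteness} simultaneously with the correct gadget behavior and planarity: natural ``choose-one-of-two'' and ``OR'' gadgets tend to contain odd cycles (triangles), so each gadget must be redesigned — typically by subdividing edges an even number of times and two-colouring carefully — so that no odd cycle survives while the exact cut counts used in the threshold argument are preserved. A secondary difficulty is ruling out ``cheating'' connected cuts that reach high cut value via unintended connected subsets; since the graph is unweighted, the pendant-bundle multiplicities must be large enough that any deviation from an assignment-encoding cut strictly loses, and this bookkeeping has to be arranged so that $k$ and all gadget sizes stay polynomial in $|\phi|$.
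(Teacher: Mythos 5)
Your proposal identifies the same source problem (\textsc{Planar Monotone 3-SAT}) and the same high-level strategy as the paper -- a connectivity-forced backbone, pendant bundles simulating weights, a two-state variable gadget, and clause gadgets whose contribution can only be collected when a literal is satisfied -- but it stops at the level of a plan: the gadgets themselves are never specified, and the two issues you defer (designing bipartite gadgets without odd cycles, and calibrating multiplicities to rule out cheating cuts) are exactly where the proof's content lives. In particular, your stated ``main obstacle'' dissolves once the gadgets are chosen correctly: the paper uses, for each variable $x_i$, two literal vertices $v(x_i), v(\bar{x}_i)$ joined to $K$ common \emph{helper} vertices (each carrying $K$ pendants), consecutive variables are linked by a \emph{bridge} vertex adjacent to all four of their literal vertices (carrying $K$ pendants), and each clause vertex is adjacent only to its three literal vertices and carries $\sqrt{K}$ pendants. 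Every auxiliary vertex attaches only to literal vertices or to a single host, so the graph contains no triangles and is bipartite by construction -- no subdivision or two-colouring repair is needed. The clause ``bonus iff satisfied'' is not a separate OR gadget at all: the clause vertex must lie in $S$ to contribute its $\sqrt{K}$ pendant edges, and since its only neighbours besides pendants are literal vertices, connectivity of $G[S]$ forces it to be adjacent to a true literal in $S$.

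The second gap is the threshold calibration. Your proposed $k$ (``backbone plus one unit per variable plus one bonus per clause'') has a single scale, which is not enough to carry out the exchange arguments in the converse direction. The paper uses a three-tier hierarchy, $k = m\sqrt{K} + nK^2 + (2n-1)K + 2(n-1)$ with $K > m^2$: the $K^2$-scale terms force every helper (hence at least one literal per variable) into $S$; the $K$-scale terms force \emph{exactly} one literal per variable into $S$ (moving a redundant literal out gains $K$ helper edges and loses at most $m(\sqrt{K}+3)$ clause-side edges); and the $\sqrt{K}$-scale terms force every clause vertex into $S$. Without an explicit separation of magnitudes of this kind, the claim that ``any deviation strictly loses'' cannot be verified, so as written the converse direction of your reduction is not established.
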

\begin{proof}
We give a polynomial time reduction for \textsc{Connected Maximum Cut} on planar bipartite graphs. 
The reduction is based on the proof of Theorem $4$ in \cite{Hajiaghayi2015}, which proves that \textsc{Connected Maximum Cut} is NP-hard on planar graphs.
We transform the planar reduced graph in \cite{Hajiaghayi2015} into our planar bipartite graph by using additional vertices, called {\em bridge vertices}.

In this proof, we reduce an instance of \textsc{Planar Monotone 3-SAT}, which is known to be NP-complete \cite{Berg2009}, to a planar bipartite instance of \textsc{Connected Maximum Cut}.
An instance of \textsc{Planar Monotone 3-SAT} consists of a 3-CNF formula $\phi$ satisfies the following properties: (a) each clause contains either all possible literals or all negative literals, (b) the associated bipartite graph $G_{\phi}$ is planar, and (c) $G_{\phi}$ has a {\em monotone rectilinear representation}.
In a monotone rectilinear representation of $G_{\phi}$, the variable vertices are drawn on a straight line in the order of their indices and each positive (resp., negative) clause vertex is drawn in the ``positive side'' (resp., ``negative side'') of the plane defined by the straight line (See Figure~\ref{PM3SATtoCMConPB}).

\noindent\textbf{The Reduction.}
Given a formula $\phi$  of \textsc{Planar Monotone 3-SAT} with a monotone rectilinear representation as in Figure~\ref{PM3SATtoCMConPB} (a), let $X=\{x_1, x_2,\dots, x_n\}$ be a set of variables and ${\mathcal C}=\{C_1, C_2, \dots, C_m\}$ be a set of clauses.
Let $K>m^2$ be sufficiently large. 
Then we create the graph $H_{\phi}=(V,E)$ as follows (see Fig.\ref{PM3SATtoCMConPB}). 
For each variable $x_i\in X$, we create two {\em literal vertices} $v(x_i)$ and $v(\bar{x}_i)$ corresponding to the literals $x_i$ and $\bar{x}_i$, respectively. 
Moreover, we add $K$ {\em helper vertices} $h^i_1, \ldots, h^i_{K}$ and connect $h^i_{k}$ to $v(x_i)$ and $v(\bar{x}_i)$ for each $k=1, \ldots, K$.
For every clause $C_j\in {\mathcal C}$, we create a {\em clause vertex} $v(C_j)$ and connect $v(x_i)$ (resp., $v(\bar{x}_i)$) to $v(C_j)$ if $C_j$ contains $x_i$ (resp., $\bar{x}_i$).
Moreover, we attach $\sqrt{K}$ pendant vertices to each $v(C_j)$.

Then we attach $K$ pendant vertices to each helper vertex  $h^i_k$. 
Finally, we add a {\em bridge vertex} $b_{i,i+1}$ with  $K$ pendant vertices that we make adjacent to each  $v(x_{i}), v(\bar{x}_{i}), v(x_{i+1})$, and $v(\bar{x}_{i+1})$ for $1\le i\le n-1$. 
We denote by $H_\phi$ the graph we obtained.
Notice that we can draw $H_\phi$ in the plane according to a monotone rectilinear representation.
Moreover, $H_\phi$ is bipartite since we only add helper and bridge vertices, which have a neighbor only in literal vertices, and pendant vertices to the planar drawing of $G_\phi$.

Clearly, this reduction can be done in polynomial time.
To complete the proof, we prove the following claim.
\begin{claim*}
A formula $\phi$ is satisfiable if and only if there is a connected maximum cut of size at least $m\sqrt{K} + nK^2 + (2n-1)K +2(n-1)$ in $H_\phi$. 
\end{claim*}
\begin{proof}
Let $V_X$, $V_C$, $V_H$, $V_B$, and $V_P$ be the set of literal vertices, clause vertices, helper vertices, bridge vertices, and pendant vertices, respectively.

\noindent ($\Rightarrow$) We are given a satisfiable assignment $\alpha$ for $\phi$. 
For $\alpha$, we denote a true literal by $l_i$. We also call $v(l_i)$ a {\em true literal vertex}.
Let $S=\bigcup_{i=1}^n \{v(l_i)\}\cup V_C\cup V_H\cup V_B$.
That is, $S$ consists of  the set of true literal vertices, all the clauses vertices, all helper vertices and bridge vertices.
Observe that the induced subgraph by $S$ is connected.
This follows from the facts that each clause has at least one true literal and literal vertices are connected by bridge vertices.

Finally, we show that $|\delta(S)|\ge m\sqrt{K} + nK^2 + (2n-1)K +2(n-1)$.
Since each clause vertex has $\sqrt{K}$  pendant vertices and each helper vertex $K$ pendant vertices, there are $m\sqrt{K} + nK^2$ cut edges.
Moreover, each bridge vertex has $K$ cut edges incident to its pendant vertices and two cut edges incident to literal vertices not in $S$.
Finally, since either $v(x_i)$ or $v(\bar{x_i})$ is not in $S$, there are $nK$ cut edges between literal vertices and helper vertices.
Therefore, we have $|\delta(S)|\ge m\sqrt{K} + nK^2 + (n-1)(K+2) + nK = m\sqrt{K} + nK^2 + (2n-1)K + 2(n-1)$.

\medskip
\noindent ($\Leftarrow$)
We are given a connected cut $S$ in $H_\phi$ such that $|\delta(S)|\ge m\sqrt{K} + nK^2 + (2n-1)K +2(n-1)$.
Here, we assume without loss of generality that $S$ is an optimal connected cut of $H_\phi$.
Suppose, for contradiction, that neither of $v(x_i)$ and $v(\bar{x}_i)$ is contained in $S$ for some variable $x_i$. 
Then, all helper vertices $h^i_k$ cannot be contained in $S$ due to the connectivity of $S$. 
There are $m\sqrt{K} + 3m + 2nK + (K + 4)(n-1)$ edges except for those between helper vertices and its pendant vertices.
It immediately follows that $|\delta(S)| \le m\sqrt{K} + 3m + 2nK + (K + 4)(n-1) + nK^2$. 
Since $K>m^2$ is sufficient large, this contradicts the assumption that $|\delta(S)|\ge m\sqrt{K} + nK^2 + (2n-1)K +2(n-1)$.
Thus, at least one literal vertex must be contained in $S$ for each $x_i$.

Next, we show that every helper vertex must be contained in $S$.
Suppose a helper vertex $h^i_k$ is not contained in $S$.
Then, all $K$ pendant vertices attached to $h^i_k$ is not contained in $S$ due to the connectivity of $S$.
Since at least one literal vertex of $x_i$ is contained in $S$, we can increase the size of the cut by moving $h^i_k$ to $S$, contradicting the optimality of $S$.
This contradicts the optimality of $S$.
Therefore, we assume that every helper vertex is contained in $S$.
Similar to helper vertices, we can prove that every bridge vertex is contained in $S$.

Then, we observe that exactly one literal vertex must be contained in $S$ for each $x_i$.
Suppose that both $v(x_i)$ and $v(\bar{x}_i)$ are contained in  $S$ for some $x_i$.
Since all helper vertices and bridge vertices are contained in $S$, we may increase the size of the cut by moving either of $v(x_i)$ and $v(\bar{x}_i)$ to $V \setminus S$.
However, there are some issues we have to consider carefully.
Suppose that $v(x_i)$ is moved to $V \setminus S$.
Then, some clause vertices $v(C_j)$ in $S$ can be disconnected in $G[S]$.
If so, we also move $v(C_j)$ together with its pendant vertices to $V \setminus S$.
Since there are at least $K + 1$ cut edges newly introduced but at most $m(\sqrt{K} + 3)$ edges removed from the cutset, the size of the cutset is eventually increased, also contradicting the optimality of $S$.

Finally, we show that every clause vertex is, in fact, contained in $S$.
Suppose that $|S \cap V_C| = m' < m$.
If $v(C_j)$ is not in $S$, its pendant vertices are also not in $S$.
Due to the optimality of $S$, the pendant vertices of every helper vertex and every bridge vertex is in $V \setminus S$.
Thus, we have $|\delta(S) \cap \delta(V_H \cup V_B)| = nK(K+1) + (K+2)(n-1) = nK^2 + (2n-1)K+2(n-1)$.
Therefore, $|\delta(S)| = |\delta(S) \cap \delta(V_C)| +  |\delta(S) \cap \delta(V_H \cup V_B)| \le m'\sqrt{K} +3(m-m')+ nK^2 + (2n-1)K +2(n-1) < m\sqrt{K} + nK^2 + (2n-1)K +2(n-1)$ as $K > m^2$.
This is also contradicting to the assumption of the size of the cut.

To summarize, exactly one literal vertex is in $S$ for each variable and every clause vertex is in $S$.
Since $G[S]$ is connected, every clause vertex is adjacent to a literal vertex included in $S$.
Given this, we can obtain a satisfying assignment for $\phi$.
\end{proof}
This completes the proof.
\end{proof}

\begin{theorem}\label{thm:bipartite:maxmin}
\textsc{Maximum Minimal Cut} is NP-complete on planar bipartite subcubic graphs.
\end{theorem}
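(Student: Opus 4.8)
The plan is to reduce from \textsc{Maximum Minimal Cut} on cubic planar graphs, which Haglin and Venkatesan~\cite{Haglin1991} proved NP-complete even on triconnected cubic planar graphs. Given such an instance $(H,k)$ with $H$ connected and $3$-regular, I would build $H'$ by \emph{subdividing every edge of $H$ exactly once}: replace each edge $e=\{u,v\}$ by a path on three vertices $u, w_e, v$ through a fresh vertex $w_e$, and output $(H',k)$. Subdivision preserves planarity; each original vertex keeps degree $3$ while each new vertex $w_e$ has degree $2$, so $H'$ is subcubic; and since every edge of $H'$ joins an original vertex to a subdivision vertex, the pair $(V(H),\{w_e\})$ is a bipartition, so $H'$ is planar, bipartite, and subcubic. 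The reduction is clearly polynomial, and membership in NP is immediate since a minimal cut is a polynomial-size certificate (one checks in polynomial time that both sides induce connected subgraphs and counts the cutset).

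The heart of the argument is a lemma that subdivision preserves the value of the maximum minimal cut, i.e.\ the largest minimal cut of $H'$ has the same size as that of $H$, provided this value exceeds $2$ (which holds for the non-trivial instances of~\cite{Haglin1991}). For the easy direction, any minimal cut $(S,V(H)\setminus S)$ of $H$ lifts to $H'$: place each $w_e$ on the side containing an endpoint of $e$ (either side if $e$ is a cut edge). Then each cut edge of $H$ contributes exactly one cut edge of $H'$ (the one incident to the side avoiding $w_e$) while each uncut edge contributes none, so the size is unchanged; both sides remain connected because each path $u, w_e, v$ lies entirely on one side. Hence the maximum minimal cut of $H'$ is at least that of $H$.

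For the reverse direction I would take a minimal cut $(S',V(H')\setminus S')$ of $H'$ of size $t>2$ and set $S=S'\cap V(H)$. Recall that minimality is equivalent to both sides being connected. The key observation is that no original edge can contribute two cut edges: if both endpoints $u,v$ of $e$ lay on one side while $w_e$ lay on the other, then $w_e$ would be isolated in its own side, forcing that side to equal $\{w_e\}$ and hence $t=2$, contradicting $t>2$. Consequently every $w_e$ shares a side with at least one endpoint of $e$, each original edge contributes at most one cut edge, and one checks that $|\delta(S)|=t$. Connectivity transfers downward because any path in $H'[S']$ between two original vertices passes only through subdivision vertices whose \emph{two} neighbours both lie in $S'$, so it projects to a walk in $H[S]$; thus $H[S]$ and $H[V(H)\setminus S]$ are connected, making $(S,V(H)\setminus S)$ a minimal cut of $H$ of size $t$. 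This yields the matching inequality, so the two optima coincide and $(H,k)$, $(H',k)$ are equivalent.

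I expect the main obstacle to be the bookkeeping in this reverse direction: making precise that the placement of the degree-$2$ subdivision vertices is forced (up to the harmless choice on cut edges) by two-sided connectivity, and that connectivity of $H'[S']$ faithfully projects to connectivity of $H[S]$ via the walk argument above. Once that correspondence is nailed down, size preservation and the threshold transfer are routine; the only auxiliary point worth stating explicitly is why trivial cuts of size at most $2$ may be discarded, which holds because the source instances have large optimum.
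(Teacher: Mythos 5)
Your proposal is correct and takes essentially the same route as the paper: both reduce from \textsc{Maximum Minimal Cut} on planar cubic graphs by subdividing every edge once, use the bipartition between original and subdivision vertices, and argue both directions identically, including the assumption $k>2$ that forces each degree-two subdivision vertex onto the side of one of its two neighbours. The only cosmetic difference is that you argue the optima coincide exactly, while the paper only needs the threshold version (size at least $k$ is preserved in both directions).
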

\begin{proof}
We give a reduction from \textsc{Maximum Minimal Cut}  on planar cubic graphs, which is known to be NP-complete \cite{Haglin1991}.
Given a connected planar cubic graph $G=(V,E)$, we split each edge $e=\{u,w\}\in E$ by a vertex $v_e$, that is, we introduce a new vertex $v_e$ and replace $e$ by $\{u,v_e\}$ and $\{w, v_e\}$.
Let $V_E=\{v_e\mid e\in E\}$ and $G'=(V\cup V_E,E')$ the reduced graph. 
Since we split each edge by a vertex and $G$ is a planar cubic graph, $G'$ is not only planar but also bipartite and subcubic.
In the following, we show that there is a minimal cut of size at least $k$ in $G$ if and only if so is in $G'$.
We can assume that $k > 2$.

Let $(S_1,S_2)$ be a minimal cut of $G$.
We construct a cut $(S'_1, S'_2)$ of $G'$ with $S_i \subseteq S'_i$ for $i = 1, 2$. 
For each edge $e\in E$, we add $v_e$ to $S'_2$ if both endpoints of $e$ are contained in $S_2$, and otherwise add $v_e$ to $S'_1$.
Recall that a cut is minimal if and only if both sides of the cut induce connected subgraphs.
We claim that both $G'[S'_1]$ and $G'[S'_2]$ are connected.
To see this, consider vertices $u, v \in S_1$.
As $G[S_1]$ is connected, there is a path between $u$ and $v$ in $G[S_1]$.
By the construction of $S'_1$, every vertex of the path is in $S'_1$ and for every edge $e$ in the path, we have $v_e \in S'_1$.
Therefore, there is a path between $u$ and $v$ in $G'[S'_1]$.
Moreover, for every $v_e \in S'_1$, at least one endpoint of $e$ is in $S'_1$.
Hence, $G'[S'_1]$ is connected. Symmetrically, we can conclude that $G'[S'_2]$ is connected.
Moreover, for each $e=\{u,w\}$ with $u\in S'_1$ and $w\in S'_2$, $\{v_e,w\}$ is a cut edge in $G'$.
Therefore, $(S'_1,S'_2)$  is a minimal cut of size at least $k$.

Conversely, we are given a minimal cut $(S'_1,S'_2)$ of size $k$.
We let $S_i = S'_1 \cap V$ for $i = 1, 2$.
For each $e = \{u,v\}$, we can observe that $v_e\in S'_i$  if $u,w\in S'_i$ due to the connectivity of $S'_i$ and $k > 2$.
This means that an edge $\{u, v_e\}$ (or $\{w, v_e\}$) contributes to the cut if and only if exactly one of $u$ and $w$ is contained in $S'_1$ (and hence $S_1$), that is, the edge $e$ contributes to the cut $(S_1, S_2)$ in $G$.
Therefore, the size of the cut $(S_1, S_2)$ is at least $k$.
Moreover, $u$ and $v$ are connected by a path through $v_e$ in $G'[S'_i]$ if and only if $u$ and $v$ are contained in $S_i$ and adjacent to each other in $G[S_i]$.
Hence, $G[S_i]$ is connected for each $i = 1, 2$, and the theorem follows.
\end{proof}

\subsection{Split graphs}\label{sec:split}

\begin{figure}[tbp]
  \begin{center}
   \includegraphics[width=77mm]{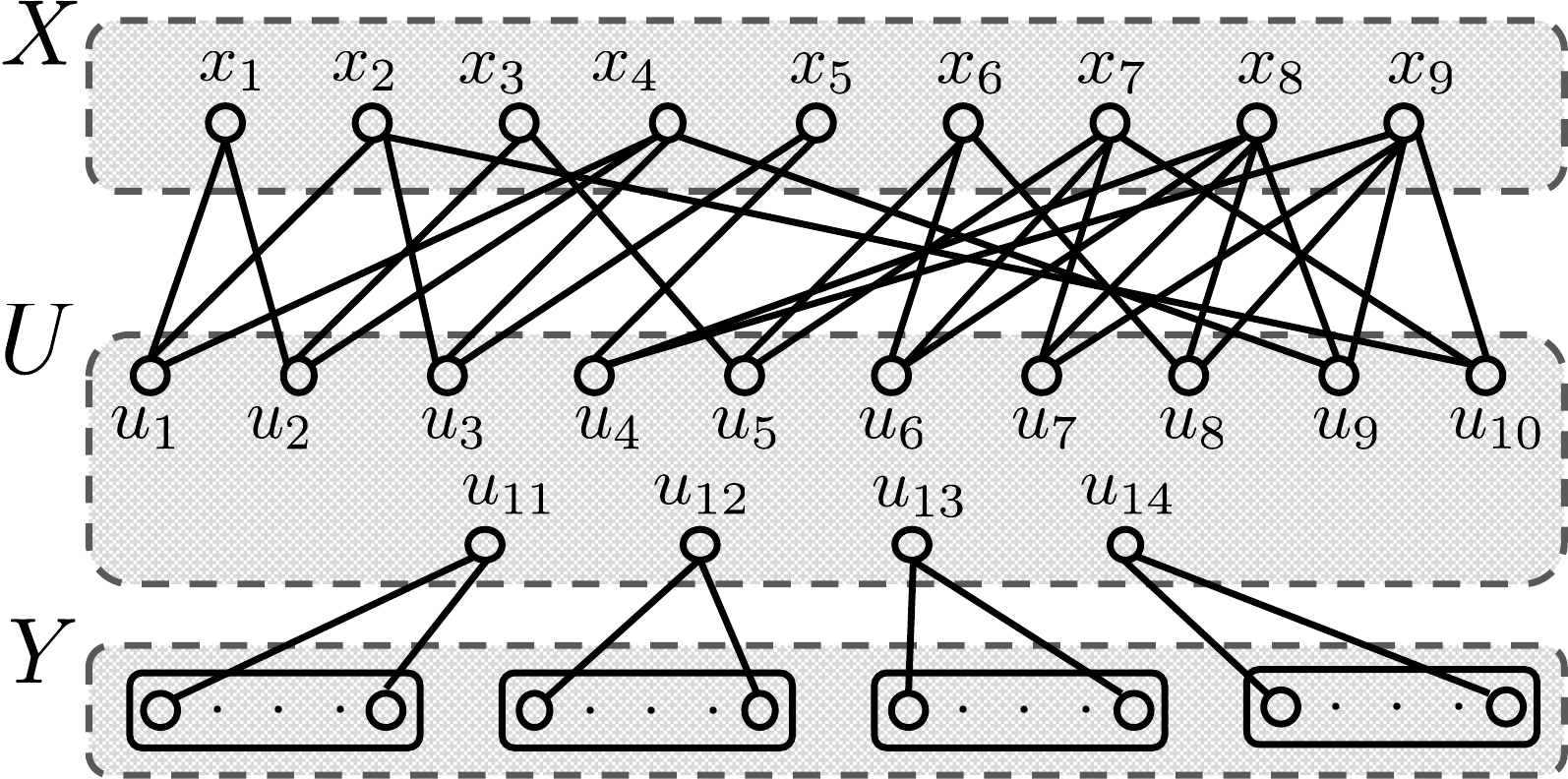}
  \end{center}
  \caption{An instance of \textsc{Connected Maximum Cut} on split graphs reduced
 from an instance of {\sc Exact 3-cover} where $X = \{x_1, x_2,x_3,x_4,x_5, x_6,x_7,x_8,
  x_9\}$ and $\mathcal F = \{
  \{x_1,x_2,x_3\},
  \{x_1,x_3,x_4\},
  \{x_2,x_4,x_5\},
  \{x_5,x_8,x_9\},
  \{x_3,x_6,x_7\},
  \{x_6,x_7,x_8\},
  \{x_7,x_8,x_9\},
  \{x_6,x_8,x_9\},\\
  \{x_4,x_8,x_9\},
  \{x_2,x_7,x_9\}
  \}$.
  }
 \label{CMConsplit}
\end{figure}

\begin{theorem}\label{thm:split:connected}
    \textsc{Connected Maximum Cut} is NP-complete on split graphs.
\end{theorem}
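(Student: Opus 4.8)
The plan is to prove NP-hardness by a reduction from \textsc{Exact Cover by 3-Sets} (X3C), which is NP-complete \cite{GJ1979}; membership in NP is immediate, since given $S$ one checks connectivity of $G[S]$ and counts $|\delta(S)|$ in polynomial time. Consistent with the instance depicted in Figure~\ref{CMConsplit}, from a universe $X$ with $|X| = 3q$ and a family $\mathcal{F}$ of $3$-element subsets I build a split graph whose clique $K$ consists of one \emph{set vertex} $w_F$ per $F \in \mathcal{F}$, and whose independent set consists of \emph{element gadgets}: for each $x \in X$ I add a bundle of $R$ mutually non-adjacent \emph{reward vertices}, each joined to exactly the set vertices $w_F$ with $x \in F$. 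The only edges are inside $K$ and between reward vertices and set vertices, so the graph is split. Here $R$ is a large polynomial parameter (e.g. $R = |\mathcal{F}|^2$) playing the role of the constant $K$ in the proof of Theorem~\ref{thm:bipartite:Connected}, and I first normalize the X3C instance so that every element lies in at least two sets (by duplicating $\mathcal{F}$), which is harmless for the existence of an exact cover.

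The intended correspondence is that the set vertices placed in $S$ form the chosen subfamily $\mathcal{F}'$, and a reward vertex of $x$ can lie in the connected side $S$ only if some $w_F$ with $x \in F$ is also in $S$, i.e. only if $x$ is covered. Writing $M = |\mathcal{F}|$, letting $d_x$ be the number of sets containing $x$, and $c_x$ the number of \emph{chosen} sets containing $x$, a reward vertex of a covered $x$ contributes $d_x - c_x$ cut edges when placed in $S$ (edges to the unchosen containing sets), while the $s = |S \cap K|$ chosen set vertices contribute $s(M-s)$ clique cut edges. Thus for a selection covering all elements the cut value is $s(M-s) + R\sum_x(d_x - c_x) = (M-s)(s + 3R)$, which is strictly decreasing in $s$ once $R$ dominates $M$. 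I would set the threshold to $k = (M - q)(3R + q)$.

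For the forward direction, given an exact cover $\mathcal{F}^\ast$ (necessarily $|\mathcal{F}^\ast| = q$), I take $S$ to be the $q$ corresponding set vertices together with all reward vertices; $G[S]$ is connected because the set vertices form a clique and every reward vertex is adjacent to its unique (by exactness) covering set vertex, and a direct count gives $|\delta(S)| = q(M-q) + R\sum_x(d_x - 1) = (M-q)(3R+q) = k$. For the converse I would argue that any connected cut of size at least $k \approx 3RM$ must harvest almost all of the reward: since the clique contributes at most $M^2/4 \ll R$, the reward term must reach its global maximum $R\sum_x(d_x - 1) = R(3M - 3q)$, and because every $d_x \ge 2$ this maximum is attained only when every element is covered exactly once, i.e. when $S \cap K$ is an exact cover of size $q$.

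The main obstacle is precisely this calibration argument in the converse: I must rule out spurious large cuts that cover only a high-degree subset of elements, overlap on some elements, or trade clique cut edges for reward, and show that none of them reaches $k$. The normalization $d_x \ge 2$ is what guarantees that omitting or doubly covering any element strictly loses at least $R$ (respectively $3R$) in the reward term, a loss the quadratic clique term of order $M^2$ cannot compensate once $R$ is a sufficiently large polynomial in $M$; this is the same ``large weight forces the intended structure'' mechanism used with $K > m^2$ in the proof of Theorem~\ref{thm:bipartite:Connected}.
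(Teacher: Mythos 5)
Your reduction has a fatal flaw in the converse direction, and it is not one that more careful "calibration" can repair. You assert that the reward term has global maximum $R\sum_x(d_x-1)$, attained only by an exact cover. This is false: take $S$ to be the \emph{entire} clique $K$ of set vertices and put every reward vertex in $V\setminus S$. Then $G[S]$ is a clique (connected), the clique contributes $0$ cut edges, and each reward vertex of each element $x$ contributes all $d_x$ of its edges to the cutset, for a total of $R\sum_x d_x = 3RM$. Since $3RM - k = 3RM - (M-q)(3R+q) = q(3R - M + q) > 0$ once $R > M$, this cut exceeds your threshold for \emph{every} instance, so your reduction maps no-instances to yes-instances. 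The root cause is that a reward vertex of a covered element is free to sit on either side: outside $S$ it contributes $c_x$ cut edges, so when $c_x = d_x$ it contributes $d_x > d_x - 1$. Your formula $s(M-s)+R\sum_x(d_x-c_x)$ silently assumes all reward vertices join $S$, which a maximizing adversary will not do. In fact one can check that with your gadget the optimum connected cut equals $3RM$ for every instance (any unchosen set forces a loss of at least $3R$ in the reward term that the clique term, bounded by $M^2/4 \ll R$, cannot recover), so the construction carries no information about the X3C instance at all.

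The paper's construction is built precisely to block this escape. It keeps each element as a \emph{single} independent-set vertex (placed in $S$ in the intended solution, contributing $d_x - 1$), and instead attaches the heavy loads to $m-2n$ \emph{dummy} clique vertices $u_{m+1},\dots,u_{2(m-n)}$, each with its own private set $Y_i$ of $M$ pendant independent-set vertices. The $Y_i$'s force all dummies into $S$, so the clique term becomes $(m-t)(m-2n+t) = (m-n)^2 - (t-n)^2$ where $t$ is the number of chosen sets; this makes the clique cut peak exactly at $t=n$ and penalizes any deviation quadratically. Combined with the normalization that every element lies in at least $3(n+2)$ sets (much stronger than your $d_x\ge 2$), the inequality $|X_{\rm all}| - (3n+5)|X_0| - (t-n)^2 \ge 0$ together with $t \ge (n+2)|X_{\rm all}|$ forces $X_{\rm all}=\emptyset$, $X_0=\emptyset$, and $t=n$, i.e.\ an exact cover. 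Your construction has no mechanism whatsoever that penalizes choosing too many set vertices, which is exactly what the counterexample exploits; you would need to add something playing the role of the paper's dummies (or otherwise pin $|S\cap K|$) before the converse direction could go through.
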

\begin{proof} 
We reduce the following problem called {\sc Exact 3-cover}, which is known to be NP-complete:
Given a set $X = \{x_1, x_2, \dots , x_{3n}\}$ and a family $\mathcal F = \{F_1, F_2, \dots , F_m\}$,
where each $F_i = \{x_{i_1}, x_{i_2}, x_{i_3}\}$ has three elements of $X$, 
the objective is to find a subfamily $\mathcal F' \subseteq \mathcal F$ such that 
every element in $X$ is contained in exactly one of the subsets $\mathcal F'$. 
By making some copies of $3$-element sets if necessary, 
we may assume that 
$|\{F \in \mathcal F\mid x \in F\} | \ge 3 (n+2)$ for each $x \in X$, 
which implies that $m$ is sufficiently large compared to $n$. 

Given an instance of {\sc Exact 3-cover}
with $|\{F \in \mathcal F\mid x \in F\} | \ge 3 (n+2)$ for each $x \in X$, 
we construct an instance of 
{\sc Connected Maximum Cut} in a split graph as follows. 
We introduce $m$ vertices $u_1, u_2, \dots, u_m$, where each $u_i$ corresponds to $F_i$, 
and introduce $m - 2n$ vertices $u_{m+1}, u_{m+2}, \dots, u_{2(m-n)}$. 
Let $U := \{u_1, u_2, \dots, u_{2(m-n)}\}$. 
For $i=m+1, m+2, \dots, 2(m-n)$, introduce a vertex set $Y_i$ of size $M$, 
where $M$ is a sufficiently large integer compared to $n$ (e.g. $M =3n+1$). 
Now, we construct a graph $G=(U \cup X \cup Y, E)$, where
$Y := \bigcup_{m+1 \le i \le m-2n}Y_i$,
$E_U := \{\{u, u'\} \mid u, u' \in U, u \neq u' \}$, 
$E_X := \{\{u_i, x_j\} \mid  1\le i \le m, 1 \le j \le 3n, x_j \in F_i \}$, 
$E_Y := \{\{u_i, y\} \mid m + 1 \le i \le 2(m-n), y \in Y_i \}$,  and
$E   := E_U \cup E_X \cup E_Y$. 
Then, $G$ is a split graph in which $U$ induces a clique and $X \cup Y$ is an independent set. 
We now show the following claim. 

\begin{claim*}
The original instance of {\sc Exact 3-cover} has a solution if and only if 
the obtained graph $G$
has a connected cut of size at least $(m-n)^2 + 3m - 3n + (m-2n) M$. 
\end{claim*}

\begin{proof}
Suppose that the original instance of {\sc Exact 3-cover} has a solution $\mathcal F'$. 
Then 
$S := \{u_i \mid F_i \in \mathcal F'\} \cup \{u_i \mid m+1\le i \le 2(m-n) \} \cup X$
is a desired connected cut, because 
$|\delta(S) \cap E_U| = (m-n)^2$, 
$|\delta(S) \cap E_X| = \sum_{i=1}^m |F_i| - |X| =  3m-3n$, and
$|\delta(S) \cap E_Y| = (m-2n) M$. 

Conversely, suppose that 
the obtained instance of {\sc Connected Maximum Cut} 
has a connected cut $S$ such that $|\delta(S)| \ge (m-n)^2 + 3m - 3n + (m-2n) M$. 
Since 
$|\delta(S) \cap E_U| \le (m-n)^2$, 
$|\delta(S) \cap E_X| \le  3m$, and
$|\delta(S) \cap E_Y| \le |S \cap \{ u_{m+1}, \dots , u_{2(m-n)} \}| \cdot M$,  
we obtain $|S \cap \{ u_{m+1}, \dots , u_{2(m-n)} \}| = m-2n$, that is, 
$\{ u_{m+1}, \dots , u_{2(m-n)} \} \subseteq S$. 
Let $t = |S \cap \{u_1, \dots ,u_m\}|$, $X_0 = \{x \in X \mid N(x) \cap S = \emptyset \}$ the vertices in $X$ that has no neighbor in $S$, $X_{\rm all} = \{x \in X \mid N(x) \subseteq S\}$ the vertices in $X$ whose neighbor is entirely included in $S$, and $X_{\rm part} = X \setminus (X_0 \cup  X_{\rm all})$ all the other vertices in $X$. 
Recall that every element in $X$ is contained in at least $3(n+2)$ subsets of $\mathcal F$.
Then, since 
$|\delta(S) \cap E_U| = (m - t) (m-2n+t) = (m-n)^2 - (t-n)^2$, 
$|\delta(S) \cap E_X| \le |E_X| -  |X_{\rm part}| - |\delta(X_0)|  \le  3m -  (3n - |X_{\rm all}| - |X_0|) - 3(n+2) |X_0|$, 
$|\delta(S) \cap E_Y| \le (m-2n) M$, and
$|\delta(S)| \ge (m-n)^2 + 3m - 3n + (m-2n) M$,  
we obtain 
\begin{align}\label{eq:01}
|X_{\rm all}| - (3n+5) |X_0| - (t-n)^2 \ge 0. 
\end{align}
By counting the number of edges between $S \cap \{u_1, u_2, \dots, u_m\}$ and $X$, we obtain
$3t \ge |\delta(X_{\rm all})| \ge 3(n+2) |X_{\rm all}|$, 
which shows that 
$t \ge (n+2) |X_{\rm all}|$. 
If $|X_{\rm all}| \ge 1$, then $t \ge (n+2) |X_{\rm all}| \ge n+ 2 |X_{\rm all}|$, and hence
$
|X_{\rm all}| - 3(n+5) |X_0| - (t-n)^2 
\le |X_{\rm all}| - (2 |X_{\rm all}|)^2 <0$, 
which contradicts (\ref{eq:01}). 
Thus, we obtain $|X_{\rm all}| = 0$, and hence 
we have $t=n$ and $X_0 = \emptyset$ by~(\ref{eq:01}). 
Therefore, $\mathcal F':= \{ F_i \mid  1 \le i \in m, \ u_i \in S \}$ 
satisfies that $|\mathcal F'| = n$ and $\bigcup_{F \in \mathcal F'} F = X$. 
This shows that $\mathcal F'$ is a solution of the 
original instance of {\sc Exact 3-cover}. 
\end{proof}

This shows that {\sc Exact 3-cover} is reduced to 
{\sc Connected Maximum Cut} in split graphs, 
which completes the proof. 
\end{proof}

\begin{theorem}\label{thm:split:maxmin}
\textsc{Maximum Minimal Cut} is NP-complete on split graphs.
\end{theorem}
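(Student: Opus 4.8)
The plan is to reduce from \textsc{Exact 3-Cover} once more, reusing the split-graph skeleton from the proof of Theorem~\ref{thm:split:connected}: a clique $U$ consisting of set-vertices $u_1,\dots,u_m$ together with auxiliary vertices, and an independent set containing the element-vertices $X$ (with $x_j$ adjacent to $u_i$ exactly when $x_j\in F_i$) and large gadget sets $Y_i$. The essential difference is that a cut now has to be \emph{minimal}, i.e.\ \emph{both} $G[S]$ and $G[V\setminus S]$ must be connected, whereas the reduction for \textsc{Connected Maximum Cut} only needed one side connected.

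Before constructing the graph I would record the following structural fact, which dictates the design. In a connected split graph with clique $C$, a cut $(S,V\setminus S)$ with $C_1:=S\cap C$ and $C_2:=(V\setminus S)\cap C$ both nonempty is minimal if and only if every independent vertex placed in $S$ (resp.\ $V\setminus S$) has a neighbour in $C_1$ (resp.\ $C_2$); its size equals $|C_1|\,|C_2|$ plus, over all independent vertices, the number of neighbours on the opposite side. The immediate consequence is that the pendant gadgets used for \textsc{Connected Maximum Cut} are worthless here: a vertex attached to a single clique vertex can contribute to the cutset only by sitting on the opposite side, where it would be disconnected. Hence every gadget vertex that is to contribute to the cut must \emph{straddle}, i.e.\ have neighbours on both prospective sides.

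Accordingly, I would modify the gadgets so that each $y\in Y_i$ is adjacent both to its auxiliary clique vertex $u_i$ and to a designated anchor clique vertex that is forced onto the opposite side; then $Y_i$ contributes its $M$ edges to the cutset while remaining connected to the anchor on its own side. Mutually opposing large gadgets of size $M\gg n$ force the auxiliary vertices and the anchor(s) to opposite parts and pin down the size of the clique split, exactly as the term $(m-n)^2$ was pinned in Theorem~\ref{thm:split:connected}. With the clique split and the gadget placement fixed, both $G[S]$ and $G[V\setminus S]$ are connected by construction (each side contains an anchor to which all of its independent vertices attach), and the only remaining freedom is the placement of the element-vertices; the count over $E_X$ is then maximized precisely when the selected set-vertices form an exact cover. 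The forward direction builds the cut from a cover and checks that both sides are connected; the reverse direction uses the dominant $M$-terms to recover the prescribed split and the full coverage of $X$, yielding an exact cover.

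The main obstacle is exactly the minimality requirement. The quantity $|C_1|\,|C_2|$ is maximized at a balanced split and each independent vertex would ``prefer'' the side opposite its majority, so I must calibrate $M$, the number of auxiliary vertices, and the incidence multiplicities so that deviating from the intended split or stranding any independent vertex strictly decreases the cut; simultaneously I must verify that the opposing anchor gadgets genuinely force the anchors to the correct sides and cannot be exploited to inflate the cutset in some unintended configuration. Checking that no such cheating configuration beats the exact-cover cut is the delicate part of the calculation.
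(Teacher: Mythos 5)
Your structural observation about minimal cuts in split graphs is correct, and your diagnosis that the pendant gadgets from Theorem~\ref{thm:split:connected} become worthless under two-sided connectivity is the right instinct. But what you have written is a plan, not a proof: the entire burden of the argument --- calibrating $M$, the number of auxiliary clique vertices, and the incidence multiplicities so that no cheating configuration (an unbalanced clique split, independent vertices re-homed to the side opposite their majority, anchors forced to the wrong part) beats the intended exact-cover cut --- is precisely the step you defer. It is not routine. In the reduction for \textsc{Connected Maximum Cut} the analogous counting already needed the padding assumption $|\{F\in\mathcal F\mid x\in F\}|\ge 3(n+2)$ and a careful inequality trading $(t-n)^2$ against $|X_{\rm all}|$ and $|X_0|$; in your variant the trade-offs change again, because every independent vertex that contributes must now retain a neighbour on its own side, and because your straddling gadgets attach to anchor clique vertices and thereby perturb both the $|C_1|\,|C_2|$ term and the degrees that the inequality relies on. Until those inequalities are written down and verified, the reverse direction of your claim is unsupported, so there is a genuine gap.

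You should also know that the paper's proof avoids all of this by reducing from \textsc{Maximum Cut} on general graphs rather than from \textsc{Exact 3-Cover}: given $G=(V,E)$ with $n$ vertices, turn $V$ into a clique and replace each edge $e=\{u,w\}$ by $n^3$ independent-set vertices, each adjacent to both $u$ and $w$. By your own structural fact, each such vertex must sit on a side containing one of its two endpoints, it contributes exactly one cut edge precisely when $u$ and $w$ are separated, and both sides are automatically connected because the clique part of each side is connected and every independent vertex attaches to it. Since the clique itself contributes at most $n^2$ cut edges, which is negligible against the multiplicity $n^3$, one gets that $G$ has a cut of size at least $k$ if and only if $G'$ has a minimal cut of size at least $kn^3$. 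This sidesteps every calibration issue you flag; if you wish to salvage your route you must supply the missing counting, but the simpler reduction is the one to keep in mind.
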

\begin{proof}
We give a reduction from \textsc{Maximum Cut}.
Given a graph $G=(V,E)$ with $n$ vertices, we create a split graph $G'= (V\cup V_E, E')$  where  $V$ is a clique, $V_E = \{e^{\ell} \mid e \in E, 1 \le \ell \le n^3\}$ is an independent set, and $E'=\{ \{u, e^{\ell}\}, \{v,e^{\ell}\} \mid e=\{u,v\}\in E, 1 \le {\ell} \le n^3\}$. We show that $G$ has a cut of size at least $k$ if and only if $G'$ has a connected cut of size at least $kn^3$.
Without loss of generality, we assume that $n > 1$ and $k > 2$.

Let $(S_1,S_2)$ be a cut of $G$ of size $k$.
We define a cut $(S'_1, S'_2)$ of $G'$ with $S_i \subseteq S'_i$ for $i \in \{ 1, 2\}$.
For each $e\in E$ and $1 \le {\ell} \le n^3$, we set $e^{\ell} \in S'_2$ if both endpoints of $e$ are in $S_2$ in $G$, and otherwise $e^{\ell}\in S'_1$.
It is straightforward to verify that $G'[S'_1]$ and $G'[S'_2]$ are connected.
If $e = \{u, v\}$ contributes to the cut $(S_1, S_2)$, there are $n^3$ edges ($\{u, e_{\ell}\}$ or $\{v, e_{\ell}\}$) in $G'$ that contribute to $(S'_1, S'_2)$. Therefore, the size of $(S'_1, S'_2)$ is at least $kn^3$.

Conversely, we are given a minimal cut $(S'_1,S'_2)$ of size  $kn^3$ in $G'$. Let $S_i = S'_i \cap V$ for $i \in \{ 1, 2\}$. As with the proof of Theorem~\ref{thm:bipartite:maxmin}, for each $e=\{u,w\}$ and $i\in \{1,2\}$, we can observe that $e^{\ell}\in S_i$ if $u,w\in S_i$ due to the connectivity of $S_i$ and $k > 2$.
Since $V$ forms a clique in $G'$, there are at most $n^2$ cut edges in $G[V]$.
Thus, at least $kn^3-n^2 > kn^3 - n^3 = (k - 1)n^3$ edges between $V$ and $V_E$ belong to the cutset.
This implies that there are at least $k$ pairs $\{u, w\}$ with $u \in S'_1 \cap V$ and $w \in S'_2 \cap V$, and hence $G$ has a cut of size at least $k$.
\end{proof}

\section{Parameterized Complexity}

\subsection{Tree-width}\label{sec:treewidth}

In this section, we give FPT algorithms for \textsc{Connected Maximum Cut}  and \textsc{Maximum Minimal Cut} parameterized by tree-width.
In particular, we design $O^*(c^{\tw})$-time algorithms where $c$ is some constant.

\subsubsection{$O^*({\tw}^{O(\tw)})$-algorithm}
\label{sec:tw^tw}
We design an $O^*({\tw}^{O(\tw)})$-algorithm for \textsc{Maximum Minimal Cut}.
To do this, we consider a slightly different problem, called \textsc{Maximum Minimal $s$-$t$ Cut}: Given a graph $G=(V,E)$, an integer $k$ and two vertices $s,t\in V$, determine whether there is a cut $(S_1,S_2)$ of size at least $k$ in $G$ such that $s\in S_1$, $t\in S_2$ and $(S_1,S_2)$ is minimal, that is, both $G[S_1]$ and  $G[S_2]$ are connected.
If we can solve {\sc Maximum Minimal $s$-$t$ Cut} in time $O^*({\tw}^{O(\tw)})$, we can also solve \textsc{Maximum Minimal Cut} in the same running time up to a polynomial factor in $n$ since it suffices to compute \textsc{Maximum Minimal $s$-$t$ Cut} for each pair of $s$ and $t$.

Our algorithm is based on standard dynamic programming on a tree decomposition. This algorithm outputs a maximum minimal cut $(S_1, S_2)$.
Basically, the algorithm is almost the same as an $O^*(2^{\tw})$-algorithm for \textsc{Max Cut} in \cite{bodlaender2000} except for keeping the connectivity of a cut.
To keep track of the connectivity, for each bag $X_i$, we consider two partitions ${\mathcal S}_1$ and ${\mathcal S}_2$ of $S_1\cap X_i$ and $S_2\cap X_i$, respectively.
We call each set in a partition a {\em block}.

Let $i$ be a node in $T$.
Then we define a  {\em partial solution} of \textsc{Maximum Minimal Cut} at node $i$.

\begin{definition}\label{def:tw:partial_solution}
Let $S_i, T_i \subseteq X_i$ with $S_i \cap T_i = \emptyset$ and $S_i \cup T_i = X_i$ and
let $\mathcal S_i$ (resp. $\mathcal T_i$) be a partition of $S_i$ (resp. a partition of $T_i$).
A \emph{partial solution} for $(S_i, S_2, \mathcal S_i, \mathcal T_i)$ is a cut $(S, V_i \setminus S)$ of $G_i$ such that
\begin{itemize}
	\item $S \cap X_i = S_i$ and $(V_i \setminus S) \cap X_i = T_i$, 
	\item $\forall u, v\in S_i$ (resp. $\forall u, v \in T_i$), $u$ and $v$ are in the same block in $\mathcal S_i$ (resp. in $\mathcal T_i$) $\iff$ there is a path between $u$ and $v$ in $G_i[S_i]$ (resp. in $G_i[T_i]$), and
	\item $s\in V_i \implies s\in S_i$ and $t\in V_i \implies t\in T_i$.
\end{itemize}
\end{definition}

For each $S_i, T_i \subseteq X_i$ and for $\mathcal S_i$ and $\mathcal T_i$ as in Definition~\ref{def:tw:partial_solution},
we compute the value $\mc_i(S_i, T_i, \mathcal S_i, \mathcal T_i)$ that is the maximum size of a partial solution for $(S_i, T_i, \mathcal S_i, \mathcal T_i)$ of $G_i$.

We now define recursive formulas for computing each row $\mc_i(S_i, T_i, {\mathcal S}_i, {\mathcal T}_i)$ at node $i$ on a nice tree decomposition. 
First, we add $s,t\in V$ to each bag of $T$ and remove the bags introduce $s$ and $t$ by connecting its parent and child directly.
Thus, the root and leaf bags satisfy $X_{r(T)}=\{s,t\}$ and the width of this tree decomposition increases by at most two. 
Notice that our goal is to compute $\mc_r(\{s\}, \{t\}, \{\{s\}\}, \{\{t\}\})$.
We denote the current node by $i$ and its child node by $j$ except for leaf and join nodes. For a join node, we write $j_1$ and $j_2$ to denote its two children. 

\paragraph*{Leaf node:}  
In a leaf node $i$, the bag contain only $s$ and $t$. As $G_i$ contains only two isolated vertices $s$ and $t$, $\mc_i(\{s\}, \{t\}, \{\{s\}\}, \{\{t\}\}) = 0$. For any other $S_i, T_i, \mathcal S_i, \mathcal T_i$, $\mc_i(S_i, T_i, \mathcal S_i, \mathcal T_i) = -\infty$.

\paragraph*{Introduce vertex $v$ node:} 
In an introduce vertex $v$ node $i$, we have two choices: $v$ is included in either $S_i$ or $T_i$. 
Note that $v$ must be an isolated vertex in $G_i$ since edges incident to $v$ have not yet been introduced.
Then, the recursive formula is defined as: 
\[
\mc_i(S_i, T_i, {\mathcal S}_1, {\mathcal S}_2):= 
\begin{cases}
    \mc_{j}(S_i \setminus \{v\}, T_i, {\mathcal S}_i\setminus \{\{v\}\}, {\mathcal T}_i) &  \mbox{if $v \in S_i$ and $\{\{v\}\}\in {\mathcal S}_i$},\\
    \mc_{j}(S_i, T_i \setminus \{v\}, {\mathcal S}_i, {\mathcal T}_i\setminus \{\{v\}\}) &  \mbox{if $v \in T_i$ and $\{\{v\}\}\in {\mathcal T}_i$},\\
 -\infty & \mbox{otherwise}.
\end{cases}
\]

\paragraph*{Introduce edge $\{u, v\}$ node:} 
In an introduce edge node $i$, we have the following cases.

\begin{description}
\item[Case: either $u, v \in S_i$ or $u, v \in T_i$]
\

Suppose that $u, v \in S_i$.
If $u$ and $v$ are in the different blocks in ${\mathcal S}_i$, $\mc_i[S_i, T_i, {\mathcal S}_i, {\mathcal T}_i] := -\infty$ since $u$ and $v$ should be in the same block of partition ${\mathcal S}_i$ due to the existence of the edge $\{u, v\}$ in $G_i$.  

Otherwise, let $\Pi(\mathcal S_i, u, v)$ be the set of partitions of $S_i$ such that each partition $\mathcal S$ contains distinct two blocks containing $u$ and $v$ and $\mathcal S_i$ is obtained from $\mathcal S$ by merging the two blocks containing $u$ and $v$.
Then, 
\[
    \mc_i(S_i, T_i, {\mathcal S}_i, {\mathcal T}_i) := \max \{\mc_j(S_i, T_i, {\mathcal S}_i, {\mathcal T}_i), \max_{{\mathcal S} \in \Pi(\mathcal S_i, u, v)} \mc_j(S_i, T_i, {\mathcal S}, {\mathcal T}_i)\}.
\]

Symmetrically, if $u, v \in T_i$, we have
\[
    \mc_i(S_i, T_i, {\mathcal S}_i, {\mathcal T}_i) := \max \{\mc_j(S_i, T_i, {\mathcal S}_i, {\mathcal T}_i), \max_{{\mathcal T} \in \Pi(\mathcal T_i, u, v)} \mc_j(S_i, T_i, {\mathcal S}_i, {\mathcal T})\}.
\]

\item[Case: $|S_i \cap \{u, v\}| = |T_i \cap \{u, v\}| = 1$]
\

In this case, $\{u,v\}$ is included in the cutset and it does not involve the partitions ${\mathcal S}_i, {\mathcal T}_i$.
Thus, the recursive formula is defined as: 
\[
    \mc_i(S_i, T_i, {\mathcal S}_i, {\mathcal T}_i) := \mc_j(S_i, T_i, {\mathcal S}_i, {\mathcal T}_i) + 1.
\]
\end{description}

\paragraph*{Forget $v$ node:}
Let $S_j, T_j \subseteq X_j$ with $S_j \cap T_j = \emptyset$ and $S_j \cup T_j = X_j$.
Let $\mathcal S_j$ and $\mathcal T_j$ be arbitrary partitions of $S_j$ and $T_j$, respectively.
Suppose that vertex $v\in S_j$ is in a singleton block of $\mathcal S_j$, that is, $\{\{v\}\}\in {\mathcal S}_j$.
Every edge incident to $v$ has already been introduced at some descendant node of $j$.
This means that there is no path between $v$ and $s$ in $G[S]$ of any partial solution $(S, V_k \setminus S)$ at any ancestor node $k$ of $i$.
Therefore,  $v$ is contained in a block of size at least two in node $j$.
Thus, the recursive formula is defined as: 
\begin{align*}
    \mc_i(S_i, T_i, {\mathcal S}_i, {\mathcal T}_i):=
        \max \left\{\max_{\mathcal S}\mc_{j}(S_i \cup \{v\}, T_i, \mathcal S, {\mathcal T}_i),
        \max_{\mathcal T}\mc_{j}(S_i, T_i \cup \{v\}, \mathcal S_i, \mathcal T) \right\}
\end{align*}
where, in the first case, the maximum is taken among all partitions $\mathcal S$ of $S_j=S_i\cup \{v\}$, each of which is obtained from $\mathcal S_i$ by adding $v$ to one of the existing blocks, and
in the second case, the maximum is taken among all partitions $\mathcal T$ of $T_j=T_i\cup \{v\}$, each of which is obtained from $\mathcal T_i$ by adding $v$ to  one of the existing blocks.

\paragraph*{Join node:}
Let $\mathcal X_1$ and $\mathcal X_2$ be two partitions of the same set $X$.
We denote by $J(\mathcal X_1, \mathcal X_2)$ be a partition of $X$ that is obtained by joining two partitions $\mathcal X_1$ and $\mathcal X_2$. In other words, $J(\mathcal X_1, \mathcal X_2)$ is the finest partition of $X$ that is coarser than both $\mathcal X_1$ and $\mathcal X_2$.

For join node $i$ with children $j_1$ and $j_2$, we have
\begin{align*}
\mc_i[S_i, T_i, {\mathcal S}_i, {\mathcal T}_i] :=  \max_{\substack{\mathcal S_{j_1}, \mathcal S_{j_2}, \mathcal T_{j_1}, \mathcal T_{j_2}:\\
\mathcal S_i = J(\mathcal S_{j_1}, \mathcal S_{j_2}),\ \mathcal T_i = J(\mathcal T_{j_1}, \mathcal T_{j_2})}} 
\{
&\mc_{j_1}(S_{j_1}, T_{j_1}, {\mathcal S}_{j_1}, {\mathcal T}_{j_1}) + \mc_{j_2}(S_{j_2}, T_{j_2}, {\mathcal S}_{j_2}, {\mathcal T}_{j_2})
 \}, 
\end{align*}
where $S_i = S_{j_1} = S_{j_2}$ and $T_i = T_{j_1} = T_{j_2}$.

Clearly, in each node, we can compute the recursive formulas in time ${\tw}^{O(\tw)}$.
Therefore, the total running time is $O^*({\tw}^{O(\tw)})$.
The correctness of the formulas are similar to ones for other connectivity problems, and hence we omit the proof here. 
As we said, for \textsc{Maximum Minimal Cut}, it suffices to apply the algorithm to each pair of $s$ and $t$.

\begin{theorem}\label{thm:mmc:tw^tw}
Given a tree decomposition of width $\tw$, \textsc{Maximum Minimal $s$-$t$ Cut} and \textsc{Maximum Minimal Cut} are solvable in time $O^*({\tw}^{O(\tw)})$, respectively.
\end{theorem}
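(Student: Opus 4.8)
The plan is to establish Theorem~\ref{thm:mmc:tw^tw} by verifying the correctness of the dynamic programming recurrences already presented, since the running time bound $O^*({\tw}^{O(\tw)})$ follows immediately: at each of the $O(n)$ nodes of the nice tree decomposition, a table is indexed by a bipartition $(S_i, T_i)$ of the bag together with partitions $\mathcal{S}_i, \mathcal{T}_i$ of the two sides, and the number of such partitions of a set of size at most $\tw+3$ is bounded by the Bell number, which is ${\tw}^{O(\tw)}$. Each recurrence combines a bounded number of table entries (the join node iterates over quadruples of partitions, still ${\tw}^{O(\tw)}$ many), so the total work is $O^*({\tw}^{O(\tw)})$. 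The reduction from \textsc{Maximum Minimal Cut} to \textsc{Maximum Minimal $s$-$t$ Cut} costs only a polynomial factor $O(n^2)$ for iterating over all pairs $(s,t)$, which is absorbed into $O^*(\cdot)$.

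The substance of the proof is therefore a correctness argument, which I would phrase as the invariant that for every node $i$ the quantity $\mc_i(S_i, T_i, \mathcal{S}_i, \mathcal{T}_i)$ equals the maximum cutset size over all partial solutions in the sense of Definition~\ref{def:tw:partial_solution}, proved by induction on the tree decomposition from the leaves to the root. First I would handle the base case (leaf nodes), where $G_i$ has only the two isolated vertices $s,t$ and the single consistent entry has value $0$. Then for each node type I would check both directions: that every partial solution at $i$ projects to a compatible partial solution at the child(ren) whose value the recurrence considers, and conversely that every child partial solution extends to one at $i$ of the claimed size. The introduce-edge case is where the partition bookkeeping matters: adding an edge inside $S_i$ may merge two blocks, so the recurrence must range over all preimage partitions $\Pi(\mathcal{S}_i,u,v)$ that collapse to $\mathcal{S}_i$, while a cut edge simply adds $1$ and leaves partitions untouched. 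The forget case encodes the crucial connectivity constraint: once $v$ is forgotten all its incident edges are introduced, so if $v$ sat in a singleton block it could never later connect to $s$ (or $t$), hence we only retain entries where $v$ lies in a block of size at least two; I would argue this exactly enforces that both sides of the final cut are connected, matching the characterization of minimal cuts stated in Section~\ref{sec:preli}.

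The main obstacle, and the step deserving the most care, is the join node. Here two partial solutions over the same bag but edge-disjoint subgraphs $G_{j_1}, G_{j_2}$ must be merged, and the connectivity information combines nontrivially: two vertices that are in separate blocks in each child may become connected in the union because a path can alternate between $G_{j_1}$ and $G_{j_2}$. This is precisely captured by the join operation $J(\mathcal{S}_{j_1},\mathcal{S}_{j_2})$, the finest common coarsening, and I would verify that the reachability relation of $G_i[S_i]$ is exactly the transitive closure of the union of the two child reachability relations, so that $\mathcal{S}_i = J(\mathcal{S}_{j_1},\mathcal{S}_{j_2})$ characterizes the correct merged partition; the analogous statement holds for $\mathcal{T}_i$. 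One must also confirm no edge is double-counted, which holds because $E_{j_1}$ and $E_{j_2}$ are disjoint by the definition of $G_i$. Since these connectivity-tracking arguments are entirely standard for dynamic programming over tree decompositions for connectivity-constrained problems (as in \cite{Cygan2015}), I would, as the authors do, state the invariant precisely and note that the verification for each node type follows the familiar template, thereby completing the proof of both running-time claims.
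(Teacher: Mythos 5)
Your proposal is correct and follows essentially the same route as the paper: the same DP table indexed by a bipartition of the bag and partitions of each side, the same treatment of the introduce-edge, forget, and join nodes (including the singleton-block pruning at forget nodes and the finest-common-coarsening at join nodes), and the same reduction to the $s$-$t$ version by trying all pairs. The paper itself states the recurrences and omits the correctness verification as standard, so your sketch of the inductive invariant is, if anything, slightly more explicit than what appears in the paper.
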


We can obtain a similar algorithm for \textsc{Connected Maximum Cut}: we do not have to take care of the connectivity information for $S_2$ and simply drop it in the above computation.
Thus, we have the following theorem.
\begin{theorem}\label{thm:cmc:tw^tw}
Given a tree decomposition of width $\tw$, \textsc{Connected Maximum $s$-$t$ Cut} and \textsc{Connected Maximum Cut} are solvable in time $O^*({\tw}^{O(\tw)})$, respectively.
\end{theorem}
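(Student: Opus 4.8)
The plan is to adapt the dynamic program of Theorem~\ref{thm:mmc:tw^tw} to the one-sided setting, exploiting the fact that a connected cut only requires $G[S]$ to be connected while $G[V\setminus S]$ may be arbitrary. As before, I first reduce \textsc{Connected Maximum Cut} to \textsc{Connected Maximum $s$-$t$ Cut}, where we additionally prescribe $s\in S$ and $t\in V\setminus S$; solving the latter for every ordered pair $(s,t)$ and taking the best answer costs only an extra polynomial factor. I also apply the same preprocessing that forces $s$ and $t$ into every bag, so that the root bag is $\{s,t\}$ and the target value is the entry for $S\cap X_{r(T)}=\{s\}$.

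The key simplification is in the partial-solution state. In Definition~\ref{def:tw:partial_solution} we tracked partitions $\mathcal S_i$ and $\mathcal T_i$ of both sides; now I drop $\mathcal T_i$ entirely and keep only $(S_i,T_i,\mathcal S_i)$, where $\mathcal S_i$ is a partition of $S_i=S\cap X_i$ whose blocks are exactly the traces on $X_i$ of the connected components of $G_i[S]$. The value $\mc_i(S_i,T_i,\mathcal S_i)$ is the maximum size of a partial solution with this signature. Each recurrence of Theorem~\ref{thm:mmc:tw^tw} carries over verbatim on the $S$-side, while every operation that previously manipulated $\mathcal T_i$ becomes trivial: at an introduce-edge node with both endpoints in $T_i$ the edge contributes nothing to either the cutset or the bookkeeping, so $\mc_i:=\mc_j$; at a forget node removing a vertex of $T_j$ we simply pass the value through with no singleton-block test; and at a join node we only combine the $S$-partitions via $\mathcal S_i=J(\mathcal S_{j_1},\mathcal S_{j_2})$. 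Introduce-edge nodes with endpoints split across $S_i$ and $T_i$ still add $1$ to the cut value.

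The one place where correctness genuinely rests is the forget node on the $S$-side, and this is the step I would argue most carefully. When a vertex $v\in S_j$ is forgotten, all edges incident to $v$ have already been introduced, so if $v$ were still in a singleton block it could never join the component of $s$ at any ancestor; hence such states must be discarded, exactly as in the minimal-cut algorithm. Iterating this rule guarantees that the unique surviving block at the root, namely $\{\{s\}\}$, certifies that $G[S]$ is connected. For the running time, the number of states per bag is at most $2^{|X_i|}$ choices of $(S_i,T_i)$ times the number of partitions of $S_i$, which is bounded by the Bell number $B_{|X_i|}=\tw^{O(\tw)}$; each recurrence (in particular the join, which ranges over pairs of partitions refining $\mathcal S_i$) is evaluated in $\tw^{O(\tw)}$ time, so the whole table is filled in $O^*(\tw^{O(\tw)})$ time, and the outer loop over pairs $(s,t)$ preserves this bound.
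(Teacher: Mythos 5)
Your proposal is correct and follows essentially the same route as the paper, which obtains Theorem~\ref{thm:cmc:tw^tw} by taking the dynamic program of Theorem~\ref{thm:mmc:tw^tw} and simply dropping the connectivity bookkeeping (the partition $\mathcal T_i$) for the side $V\setminus S$. Your added detail on the forget-node rule and the Bell-number bound on the state space matches the intended argument.
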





The dynamic programming algorithms in Theorems\ref{thm:mmc:tw^tw} and \ref{thm:cmc:tw^tw} can be seen as ones for {\em connectivity problems} such as finding a Hamiltonian cycle, a feedback vertex set, and a Steiner tree.
For such problems, we can improve the running time $\tw^{O(\tw)}$ to $2^{O(\tw)}$ using two techniques called the {\em rank-based approach} due to Bodlaender et al.~\cite{Bodlaender2015} and the {\em cut \& count technique} due to Cygan et al.~\cite{Cygan2011}. In the next two subsections, we improve the running time of the algorithms described in this section using these techniques.

\subsubsection{Rank-based approach}

In this subsection, we provide  faster $2^{O(\tw)}$-time deterministic algorithms parameterized by tree-width. To show this, we use the rank-based approach proposed by Bodlaender et al. \cite{Bodlaender2015}. 
The key idea of the rank-based approach is to keep track of {\em small} representative sets of size $2^{O(\tw)}$ that capture partial solutions of an optimal solution instead of $\tw^{O(\tw)}$ partitions.
Indeed, we can compute small representative sets within the claimed running time using {\sf reduce} algorithm \cite{Bodlaender2015}.

We begin with some definition used in the Rank-based approach.
\setcounter{definition}{29}
\begin{definition}[Set of weighted partitions \cite{Bodlaender2015}]
Let $\Pi(U)$ be the set of all partitions of some set $U$. A set of weighted partitions is ${\mathcal A} \subseteq \Pi(U)\times {\mathbb N}$, i.e., a family of pairs, each consisting of a partition of $U$ and a non-negative integer weight. 
\end{definition}

The weight of a partition corresponds to the size of a partial solution.
For $p,q\in \Pi(U)$, let $J(p, q)$ denote the join of the partition. We say that a set of weighted partitions ${\mathcal A}' \subseteq \Pi(U)\times {\mathbb N}$ {\em represents} another set ${\mathcal A} \subseteq \Pi(U)\times {\mathbb N}$, if for all $q\in \Pi(U)$ it holds that  $\max\{w\mid (p,w)\in {\mathcal A}' \land J(p, q) = \{U\}\}=\max\{w\mid (p,w)\in {\mathcal A} \land J(p, q)=\{U\}\}$.
Then Bodlaender et al. \cite{Bodlaender2015} provided {\sf reduce} algorithm that computes a {\em small} representative set of weighted partitions.
\begin{theorem}[{\sf reduce} \cite{Bodlaender2015}]\label{thm:tw:reduce}
There exists an algorithm {\sf reduce} that given a set of weighted partitions ${\mathcal A} \subseteq \Pi(U)\times {\mathbb N}$, outputs in time $|{\mathcal A}|2^{(\omega-1)|U|}|U|^{O(1)}$ a set of weighted partitions ${\mathcal A}'\subseteq {\mathcal A}$ such that ${\mathcal A}'$ represents ${\mathcal A}$ and $|{\mathcal A}| \le 2^{|U|-1}$, where $\omega < 2.3727$ denotes the matrix multiplication exponent.
\end{theorem}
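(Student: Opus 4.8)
The plan is to reduce the whole statement to a single linear-algebraic fact about partitions over $\mathbb{F}_2$ and then run a weight-prioritized Gaussian elimination. First I would introduce, for a fixed distinguished element $u^\ast \in U$, the matrix $\mathcal{M}$ whose rows are indexed by $\Pi(U)$ and whose columns are indexed by the $2^{|U|-1}$ subsets $X \subseteq U$ with $u^\ast \in X$, setting $\mathcal{M}[p, X] = 1$ exactly when every block of $p$ lies entirely in $X$ or entirely in $U \setminus X$ (that is, $p$ is refined by the cut $\{X, U \setminus X\}$). The central claim is that, over $\mathbb{F}_2$, the ``join matrix'' $\mathcal{C}$ defined by $\mathcal{C}[p,q] = [\,J(p,q) = \{U\}\,]$ factors as $\mathcal{C} = \mathcal{M}\mathcal{M}^{\top}$. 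To see this I would count, for fixed $p,q$, the columns $X$ with $\mathcal{M}[p,X]=\mathcal{M}[q,X]=1$: these are exactly the cuts refining both $p$ and $q$, equivalently the cuts refining $J(p,q)$, and if $J(p,q)$ has $b$ blocks there are precisely $2^{b-1}$ of them (one free binary choice per block, with $u^\ast$'s block pinned). Reducing modulo $2$ gives $1$ iff $b=1$, i.e.\ iff $J(p,q)=\{U\}$, which is the factorization. As an immediate corollary $\operatorname{rank}_{\mathbb{F}_2}(\mathcal{C}) \le 2^{|U|-1}$.

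Next I would describe the algorithm itself. Sort the pairs of $\mathcal{A}$ by non-increasing weight and process them one at a time, maintaining a basis of the already-selected rows of $\mathcal{M}$. A pair $(p,w)$ is put into $\mathcal{A}'$ precisely when the row $\mathcal{M}[p,\cdot]$ is linearly independent from the rows of the partitions selected so far; otherwise it is discarded. Since the selected rows stay linearly independent and live in a space of dimension $2^{|U|-1}$, at most $2^{|U|-1}$ of them are kept, giving the required output bound $|\mathcal{A}'| \le 2^{|U|-1}$, and $\mathcal{A}' \subseteq \mathcal{A}$ by construction.

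The correctness of the representation is where the factorization pays off. Fix any $q \in \Pi(U)$ and let $(p^\ast, w^\ast) \in \mathcal{A}$ attain the maximum weight among pairs with $J(p^\ast, q) = \{U\}$; I must show this maximum is also attained inside $\mathcal{A}'$. If $p^\ast$ was selected we are done, so suppose it was discarded because $\mathcal{M}[p^\ast,\cdot] = \sum_i \mathcal{M}[p_i,\cdot]$ for some selected partitions $p_i$, each processed earlier and hence of weight at least $w^\ast$. Using the factorization, $1 = \mathcal{C}[p^\ast, q] = \langle \mathcal{M}[p^\ast,\cdot], \mathcal{M}[q,\cdot]\rangle = \sum_i \langle \mathcal{M}[p_i,\cdot], \mathcal{M}[q,\cdot]\rangle = \sum_i \mathcal{C}[p_i, q] \pmod 2$, so some term $\mathcal{C}[p_i,q]$ equals $1$, i.e.\ some selected $p_i$ with weight $\ge w^\ast$ satisfies $J(p_i,q)=\{U\}$. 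Since no pair of $\mathcal{A}$ with this property has weight exceeding $w^\ast$, that weight is exactly $w^\ast$, so the two maxima coincide and $\mathcal{A}'$ represents $\mathcal{A}$.

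Finally, for the running time I would never materialize the full $\mathcal{M}$: each row $\mathcal{M}[p,\cdot]$ has a closed form (the refining cuts correspond to choices of sides for the blocks of $p$) and can be produced in $2^{|U|}|U|^{O(1)}$ time. The bottleneck is maintaining the basis while processing $|\mathcal{A}|$ rows of length $2^{|U|-1}$, and the hard part is shaving the naive $|\mathcal{A}|\,2^{2|U|}$ cost down to the claimed $|\mathcal{A}|\,2^{(\omega-1)|U|}|U|^{O(1)}$. Here I would invoke the batched Gaussian-elimination scheme of Bodlaender et al.: keep the current basis in reduced row-echelon form and reduce rows in groups of size $\approx 2^{|U|-1}$ against it via fast matrix multiplication, so that each group costs $(2^{|U|-1})^{\omega}$ and there are $\lceil|\mathcal{A}|/2^{|U|-1}\rceil$ groups, which amortizes to the stated bound. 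I expect this last step---organizing the elimination so that the matrix-multiplication exponent $\omega$ rather than $3$ governs the cost---to be the main technical obstacle, while the factorization lemma and the modulo-$2$ representation argument are the conceptual core.
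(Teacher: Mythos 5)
The paper offers no proof of this statement: it is imported verbatim from Bodlaender et al.~\cite{Bodlaender2015}, so there is no internal argument to compare against. Your reconstruction is, in substance, the original proof from that reference---the $\mathbb{F}_2$ factorization $\mathcal{C}=\mathcal{M}\mathcal{M}^{\top}$ of the join matrix through the $2^{|U|-1}$ cuts pinned at $u^{\ast}$, the weight-sorted greedy extraction of a row basis, and the parity argument showing that a discarded maximizer is covered by an already-selected partition of no smaller weight---and it is correct. The one step you defer to citation, organizing the elimination in batches so that the exponent $\omega$ rather than $3$ governs the cost, is likewise the only step the original proof delegates to standard $O(nm^{\omega-1})$ row reduction, so nothing essential is missing.
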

The {\sf reduce} algorithm allows us to compute an optimal solution without keeping all weighted partitions.
We apply {\sf reduce} algorithm to the set of partitions at each node in the $O^*({\tw}^{O(\tw)})$-time algorithm in the previous section.

\begin{theorem}\label{thm:rank-based:concut}
Given a tree decomposition of width $\tw$, there are $O^*((1+2^{\omega+1})^{\tw})$-time deterministic algorithms for \textsc{Connected Maximum $s$-$t$ Cut} and \textsc{Connected Maximum Cut}.
\end{theorem}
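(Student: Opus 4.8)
The plan is to re-express the $O^*(\tw^{O(\tw)})$-time dynamic program of Theorem~\ref{thm:cmc:tw^tw} in the language of weighted partitions and then shrink the tables at every node with the \textsf{reduce} algorithm of Theorem~\ref{thm:tw:reduce}. For \textsc{Connected Maximum Cut} we only track the connectivity of the connected side $S$, so the state at a node $i$ consists of a choice of $S_i = S \cap X_i \subseteq X_i$ together with a set of weighted partitions $\mathcal A_i[S_i] \subseteq \Pi(S_i) \times \mathbb N$: a pair $(\mathcal P, w) \in \mathcal A_i[S_i]$ records that $G_i$ has a partial solution in which $S \cap X_i = S_i$, the traces of the connected components of $G_i[S]$ on $X_i$ are exactly the blocks of $\mathcal P$, and the cut has size $w$. (For the $s$-$t$ variant we additionally insist $s \in S_i$ and $t \notin S_i$ whenever $s,t \in V_i$.) The quantity $\mc_i(S_i, \mathcal P)$ of the previous section is thus the maximum weight attached to $\mathcal P$ in $\mathcal A_i[S_i]$.

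Next I would translate each of the five node operations into the standard operations on weighted partition sets. A leaf initializes singleton partitions; an introduce-vertex node inserts the new element as a singleton block into $\mathcal A_i[S_i]$ or leaves it on the $T$ side; an introduce-edge node either merges the two blocks containing the endpoints when both lie in $S_i$, or adds $1$ to every weight when the edge crosses the cut; a forget node \emph{projects out} the forgotten vertex, discarding any partition in which it sat in a singleton block, which is exactly the connectivity check in the forget step of Theorem~\ref{thm:cmc:tw^tw} and enforces that every forgotten $S$-vertex is joined to the part surviving in the bag; and a join node takes, for the common set $S_i = S_{j_1} = S_{j_2}$, the partitions $J(\mathcal P_1, \mathcal P_2)$ over all pairs from the two children, adding the weights. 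Since $s$ is kept in every bag, the optimum is read off at the root as the maximum weight in $\mathcal A_{r}[\{s\}]$. Correctness of applying \textsf{reduce} after each operation rests on the representation property: \textsf{reduce} preserves, for every test partition $q$, the maximum weight of a stored partition $p$ with $J(p,q)=\{U\}$, and this ``join-to-a-single-block'' test is precisely what the join nodes and the final connectivity requirement evaluate, so no optimal completion is lost.

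For the running time, I would invoke \textsf{reduce} on every $\mathcal A_i[S_i]$ immediately after it is produced, keeping its size at most $2^{|S_i|-1}$. The dominating cost is the join node: for a fixed $S_i$ of size $s$, forming all joins of two reduced tables produces at most $2^{s-1}\cdot 2^{s-1} = 2^{2s-2}$ weighted partitions, and running \textsf{reduce} on this family costs $2^{2s-2}\cdot 2^{(\omega-1)s}\cdot s^{O(1)} = 2^{(\omega+1)s}\cdot s^{O(1)}$ by Theorem~\ref{thm:tw:reduce}. Summing over all subsets $S_i \subseteq X_i$ and grouping by size via the binomial theorem gives
\[
\sum_{s=0}^{\tw+1}\binom{\tw+1}{s}\, 2^{(\omega+1)s}\, s^{O(1)} = \bigl(1 + 2^{\omega+1}\bigr)^{\tw+1}\cdot \tw^{O(1)},
\]
and multiplying by the $O(n)$ nodes of the nice tree decomposition yields the claimed $O^*\bigl((1+2^{\omega+1})^{\tw}\bigr)$ bound. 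The $s$-$t$ and global versions differ only by fixing $s,t$ or by iterating over the anchor $s$ of the connected side, which costs a polynomial factor absorbed into $O^*(\cdot)$.

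The part that needs the most care is not any single recurrence but the verification that the entire collection of node operations is compatible with \textsf{reduce}, i.e., that each operation decomposes into the elementary partition-set operations for which Bodlaender et al.\ proved that representation is preserved. In particular I would check that the forget step's ``singleton block forbidden'' rule is faithfully realized by the project operation, and that pinning $s$ (and, for the $s$-$t$ version, $t$) does not break the representation invariant.
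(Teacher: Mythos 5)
Your proposal is correct and follows essentially the same route as the paper: recast the states of the $O^*(\tw^{O(\tw)})$ dynamic program as weighted partition sets over $S_i$, apply \textsf{reduce} after every node operation, and observe that the join nodes dominate with cost $\sum_{S_i\subseteq X_i} 2^{2|S_i|}2^{(\omega-1)|S_i|}\cdot|X_i|^{O(1)} = O^*((1+2^{\omega+1})^{\tw})$ by the binomial theorem. The paper's proof is exactly this computation (it likewise defers the correctness of the translation of each node operation to the standard rank-based machinery), so no substantive difference remains.
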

\begin{proof}
For a bag $X_i$, we compute the value $\mc_i(S_i, T_i, \mathcal S_i)$ for each $S_i, T_i \subseteq X_i$ with $S_i \cap T_i = \emptyset$ and $S_i \cup T_i = X_i$ and $\mathcal S_i$ is a partition of $S_i$.
For each $S_i$ and $T_i$,
we apply the {\sf reduce} algorithm to a set of weighted partitions $(\mathcal S_i, \mc_i(S_i, T_i, \mathcal S_i))$ that are obtained by the recursive formulas described in the previous section.
At each node $i$, the {\sf reduce} algorithm outputs only $2^{|S_i| - 1}$ weighted partitions for each $S_i$.
Thus, at each node except join nodes, the running time of evaluating the recursive formula is $\sum_{S_i \subseteq X_i}O^*(2^{|S_i|}) = O^*(3^{\tw})$ and of the {\sf reduce} algorithm is $\sum_{S_i \subseteq X_i}O^*(2^{|S_i|}2^{(\omega - 1)|S_i|}) = \sum_{S_i \subseteq X_i}O^*(2^{\omega|S_i|}) = O^*((1 + 2^{\omega})^\tw)$.
At each join node, since the output of evaluating the recursive formula may contain $O^*(2^{2|S_i|})$ weighted partitions for each $S_i$. 
Thus, the total running time at join node $i$ is $\sum_{S_i \subseteq X_i}O^*(2^{2|S_i|}2^{(\omega - 1)|S_i|}) = O^*((1 + 2^{\omega + 1})^\tw)$.
Hence, the theorem follows.
\end{proof}

Note that if a tree decomposition has no join nodes, namely a path decomposition, the overall running time is $O^*((1 + 2^{\omega})^\pw)$.

\begin{theorem}\label{thm:rank-based:minimalcut}
Given a tree decomposition of width $\tw$, there are $O^*(2^{(\omega +2)\tw})$-time deterministic algorithms for \textsc{Maximum Minimal  $s$-$t$  Cut} and \textsc{Maximum Minimal Cut}.
\end{theorem}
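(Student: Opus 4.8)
The plan is to adapt the dynamic program behind Theorem~\ref{thm:mmc:tw^tw}, which stores, for each node $i$ and each split $(S_i,T_i)$ of the bag $X_i$, a table indexed by pairs of partitions $(\mathcal S_i,\mathcal T_i)$ recording the connectivity patterns of the two sides of the cut. Instead of keeping all $\tw^{O(\tw)}$ such pairs, I would keep, for each $(S_i,T_i)$, only a small \emph{representative} set of weighted partition pairs and recompute it with the {\sf reduce} algorithm of Theorem~\ref{thm:tw:reduce} after evaluating each recursive formula, exactly as in the proof of Theorem~\ref{thm:rank-based:concut}. The only genuine difference from the connected case is that here we must simultaneously preserve the connectivity of \emph{both} sides of the cut, so a representative set has to be faithful with respect to a \emph{pair} of target completions $(q_S,q_T)$ rather than a single $q$.

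First I would make precise what ``representative'' means for two-sided partitions. For a fixed $(S_i,T_i)$, a pair $(\mathcal S_i,\mathcal T_i)$ extends to a minimal cut iff there are completions $q_S\in\Pi(S_i)$ and $q_T\in\Pi(T_i)$ with $J(\mathcal S_i,q_S)=\{S_i\}$ and $J(\mathcal T_i,q_T)=\{T_i\}$. Hence the relevant compatibility matrix, whose rows are the pairs $(\mathcal S_i,\mathcal T_i)$ and whose columns are the completion pairs $(q_S,q_T)$, is the Kronecker product of the two single-sided cut matrices underlying Theorem~\ref{thm:rank-based:concut}. Since the rank of a Kronecker product is the product of the ranks, it is at most $2^{|S_i|-1}\cdot 2^{|T_i|-1}=2^{|X_i|-2}$; consequently a representative set of this size exists for every $(S_i,T_i)$, and it can be extracted by a two-sided call to {\sf reduce} on the ground set $X_i$ (viewing a pair $(\mathcal S_i,\mathcal T_i)$ as the single non-crossing partition $\mathcal S_i\cup\mathcal T_i$ of $X_i$ and targeting the two blocks $S_i,T_i$), at a cost of $2^{(\omega-1)|X_i|}$ per stored weighted partition as in Theorem~\ref{thm:tw:reduce}. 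Proving that this two-sided reduction is sound, i.e.\ that it preserves optimal weights through the introduce, forget, and join formulas, is the main obstacle; I expect it to follow by arguing coordinatewise, reusing the single-sided correctness already invoked for \textsc{Connected Maximum Cut}.

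Given the size bound $2^{|X_i|-2}$ on each representative set, the running time follows by the same accounting as in Theorem~\ref{thm:rank-based:concut}, now with the exponent governed by $|X_i|=|S_i|+|T_i|$ rather than by $|S_i|$ alone. At every non-join node the dominant cost is the $2^{|X_i|}$ calls to {\sf reduce}, each costing $O^*(2^{|X_i|-2}\cdot 2^{(\omega-1)|X_i|})$, summing to $O^*(2^{(\omega+1)\tw})$. The bottleneck is the join node: for each of the $2^{|X_i|}$ splits, joining the two children's representative sets produces up to $(2^{|X_i|-2})^2=2^{2|X_i|-4}$ pairs, and reducing this set costs $O^*(2^{2|X_i|-4}\cdot 2^{(\omega-1)|X_i|})=O^*(2^{(\omega+1)|X_i|-4})$; summing over the $2^{|X_i|}$ splits gives
\[
\sum_{S_i\subseteq X_i} O^*\!\left(2^{(\omega+1)|X_i|-4}\right)=O^*\!\left(2^{(\omega+2)|X_i|}\right)=O^*\!\left(2^{(\omega+2)\tw}\right),
\]
since $|X_i|\le\tw+1$. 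This establishes the bound for \textsc{Maximum Minimal $s$-$t$ Cut}; for \textsc{Maximum Minimal Cut} I would run the $s$-$t$ algorithm for all $O(n^2)$ choices of $s$ and $t$, adding only a polynomial factor absorbed into $O^*(\cdot)$.
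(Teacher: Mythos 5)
Your proposal is correct and follows essentially the same route as the paper: the paper likewise keeps, for each split $(S_i,T_i)$ of the bag, a representative family of weighted partition pairs of size at most $2^{|S_i|-1}\cdot 2^{|T_i|-1}=2^{|X_i|-2}$, obtained by applying {\sf reduce} coordinatewise (first over the $\mathcal S_i$-coordinate for each fixed $\mathcal T_i$, then over the $\mathcal T_i$-coordinate for each surviving $\mathcal S_i$), and the running-time accounting is identical. The one caveat is your aside about a single ``two-sided'' call to {\sf reduce} on the merged partition $\mathcal S_i\cup\mathcal T_i$ of $X_i$ targeting the two blocks $S_i,T_i$: Theorem~\ref{thm:tw:reduce} as stated only handles the one-block target $\{U\}$, so that literal invocation would be unsound as it would admit completions merging blocks across $S_i$ and $T_i$, but the coordinatewise fallback you describe is exactly what the paper does.
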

\begin{proof}
For a bag $X_i$, we compute the value $\mc_i(S_i, T_i, \mathcal S_i, \mathcal T_i)$ for each $S_i, T_i \subseteq X_i$ with $S_i \cap T_i = \emptyset$ and $S_i \cup T_i = X_i$ and $\mathcal S_i, \mathcal T_i$ are partitions of $S_i$ and $T_i$, respectively.
Similar to Theorem~\ref{thm:rank-based:concut}, for each $S_i, T_i$, and $\mathcal T_i$,
we apply the {\sf reduce} algorithm to a set of weighted partitions $(\mathcal S_i, \mc_i(S_i, T_i, \mathcal S_i, \mathcal T_i))$ and then apply it again to weighted partitions $(\mathcal T_i, \mc_i(S_i, T_i, \mathcal S_i, \mathcal T_i))$ for each $S_i, T_i$, and for each remaining $\mathcal S_i$ of the first application.
Since there are at most $2^{|S_i|-1}2^{|T_i|-1} = 2^{|X_i|-2}$ weighted partitions in the representative set for each $S_i \subseteq X_i$,
the total running time is $\sum_{S_i \subseteq X_i}O^*(2^{2(|X_i|-2)}2^{(\omega-1)|X_i|}) = O^*(2^{(\omega + 2)\tw})$.
\end{proof}

\subsubsection{{Cut \& Count}}

In this subsection, we design much faster randomized algorithms by using Cut \& Count, which is the framework for solving the connectivity problems faster~\cite{Cygan2011}. 
In Cut \& Count,  we count the number of {\em relaxed} solutions modulo 2 on a tree decomposition and determine whether there exists a connected solution by cancellation tricks.

\begin{definition}[\cite{Cygan2011}]
A cut $(V_1,  V_2)$ of $V' \subseteq V$ such that $V_1\cup V_2 = V'$ and $V_1\cap V_2 = \emptyset$ is {\em consistent} if $v_1 \in V_1$ and $v_2 \in V_2$ implies $(v_1, v_2) \notin E$. 
\end{definition}
In other words, a cut $(V_1, V_2)$ of $V'$ is consistent if there are no cut edges between $V_1$ and $V_2$. 
 
Fix an arbitrary vertex $v$ in $V_1$. If $G[V]$ has $k$ components, then there exist $2^{k-1}$ consistent cuts of $V$. Thus, when $G[V]$ is connected, there only exists one consistent cut $(V_1, V_2)=(V, \emptyset)$. From this observation, $G[V]$ is connected if and only if the number of consistent cuts is odd. 
Therefore, in order to compute ``connected solutions'', it seems to suffice to count the number of consistent cuts modulo two at first glance.
However, this computation may fail to count the number of ``connected solutions'' since there can be even number of such solutions. 
To overcome this obstacle, Cygan et al.~\cite{Cygan2011} used the Isolation Lemma~\cite{MVV1987}, which ensures with high probability that the problem has a unique minimum solution.
For the detail of the Isolation Lemma, see~\cite{Cygan2015,MVV1987}.



We follow the Cut \& Count framework in \cite{Cygan2015,Cygan2011}: We apply it to determining whether there exists a minimal $s$-$t$ cut, a cut that separates $s$ and $t$, of size $k$, namely \textsc{Maximum Minimal $s$-$t$ Cut}.
Recall that $(S, V \setminus S)$ is a minimal $s$-$t$ cut of a connected graph $G = (V, E)$ if both $G[S]$ and $G[V\setminus S]$ are connected, $s \in S$, and $t \in V \setminus S$.

Let $i$ be a node of a nice tree decomposition of $T$ of $G$.

\begin{definition}
Let $\ell \in \{1, 2, \ldots, |E|\}$. Let $S^l_i, S^r_i, T^l_i, T^r_i$ be pairwise disjoint (possibly) subsets of $X_i$ such that $S^l_i \cup S^r_i \cup T^l_i \cup T^r_i = X_i$.
A \emph{partial solution} for $(S^l_i, S^r_i, T^l_i, T^r_i, \ell)$ is a cut $(S, V_i \setminus S)$ of $G_i$ such that:
\begin{itemize}
	\item $S \cap X_i = S^l_i \cup S^r_i$ and $(V_i \setminus S) \cap X_i = T^l_i \cup T^r_i$,
	\item $(S_i^l,S_i^r)$ and $(T^l_i, T^r_i)$ are consistent cuts of $S^l_i \cup S^r_i$ and $T^l_i \cup T^r_i$, respectively, 
	\item there are exactly $\ell$ cut edges between $S$ and $V_i \setminus S$ in $G_i$, and 
	\item $s\in V_i \implies s\in S_i^l$ and $t\in V_i \implies t\in T_i^l$.
\end{itemize}
\end{definition}

Before proceeding to our dynamic programming, we assign a weight $w_v$ to each vertex $v \in V$ by choosing an integer from $\{1, \ldots, 2n\}$  independently and uniformly at random.
We also use the preprocessing used in \revised{an $O^*({\tw}^{O(\tw)})$-algorithm}: add $s$ and $t$ to each node of $T$ and remove the bags introduce $s$ or $t$ from $T$.
In our dynamic programming algorithm, for each node $i$ and for $0 \le w \le 2n^2$ and $0\le \ell \le |E|$, 
we count the number of partial solutions $(S, V_i\setminus S)$ for $(S^l_i, S^r_i, T^l_i, T^r_i, \ell)$ such that the total weight of $S$ is exactly $w$, which we denote by $c(S^l_i, S^r_i, T^l_i, T^r_i, \ell, w)$.
By the Isolation Lemma, with high probability, there is a minimal $s$-$t$ cut of $G$ of size exactly $k$ if and only if $c(\{s\}, \emptyset, \{t\}, \emptyset, k, w)$ is odd 
for some $0 \le w \le 2n^2$ in the root node $r(T)$.
In the following, we describe the recursive formula for our dynamic programming.

\paragraph*{Leaf node:}  In a leaf node $i$, since $X_i= \{s, t\}$, we have $c_i(S^l_i, S^r_i, T^l_i, T^r_i, \ell, w)=1$  if $S_i^l=\{s\}$, $T^l_i=\{t\}$, $S_{1}^r=T^r_i=\emptyset$, $\ell=0$, and $w=0$. 
Otherwise, $c_i(S^l_i, S^r_i, T^l_i, T^r_i, \ell, w)=0$.

\paragraph*{Introduce vertex  $v$ node:} 
In an introduce vertex node $i$, we consider the following four cases:
\begin{eqnarray*}
c_i(S^l_i, S^r_i, T^l_i, T^r_i, \ell, w)
& := \begin{cases}
c_j(S^l_i\setminus \{v\}, S^r_i, T^l_i, T^r_i, \ell, w-w(v))& \mbox{if  $v\in S^l_i$,} \\
c_j(S^l_i, S^r_i\setminus \{v\},, T^l_i, T^r_i, \ell, w-w(v))& \mbox{if  $v\in S^r_i$,} \\
c_j(S^l_i, S^r_i, T^l_i\setminus \{v\},, T^r_i, \ell, w)& \mbox{if  $v\in T^l_i$,} \\
c_j(S^l_i, S^r_i, T^l_i, T^r_i\setminus \{v\}, \ell, w)& \mbox{if $v \in T^r_i$}. \\
  \end{cases}
\end{eqnarray*}
As $v \in X_i$, exactly one of the above cases is applied. 

\paragraph*{Introduce edge $(u, v)$ node:} 
Let $i$ be an introduce node of $T$.
Let $S^l_i, S^r_i, T^l_i, T^r_i$ be disjoint subsets of $X_i$ whose union covers $X_i$.
If exactly one of $u$ and $v$ belongs to $S^l_i\cup S^r_i$ (i.e. the other one belongs to $T^l_i\cup T^r_i$),
the edge is included in the cutset.
Suppose otherwise, that is, either $u, v \in S^l_i \cup S^r_i$ or $u, v \in T^l_i \cup T^r_i$. 
If $u$ and $v$ belong to different sets, say $u \in S^l_i$ and $v \in S^r_i$, then
$(S^l_i, S^r_i)$ is not consistent.
Therefore, there is no partial solutions in this case.
To summarize these facts, we have the following:
\begin{eqnarray*}
c_i(S^l_i, S^r_i, T^l_i, T^r_i, \ell, w)
& := \begin{cases}
	c_j(S^l_i, S^r_i, T^l_i, T^r_i, \ell-1, w)& \mbox{if $|(S^l_i\cup S^r_i) \cap \{u, v\}| = 1$,} \\
	c_j(S^l_i, S^r_i, T^l_i, T^r_i, \ell, w) & \mbox{if  $u,v$ are in the same set,}\\
	0 & \mbox{otherwise}.
  \end{cases}
\end{eqnarray*}

\paragraph*{Forget $v$ node:}
In a forget node $i$, we just sum up the number of partial solutions:
\begin{eqnarray*} 
c_i(S^l_i, S^r_i, T^l_i, T^r_i, \ell, w) := &c_j(S^l_i\cup \{v\}, S^r_i, T^l_i, T^r_i, \ell, w) + c_j(S^l_i, S^r_i\cup \{v\}, T^l_i, T^r_i, \ell, w)\\
&+c_j(S^l_i, S^r_i, T^l_i\cup \{v\}, T^r_i, \ell, w) +c_j(S^l_i, S^r_i, T^l_i, T^r_i\cup \{v\}, \ell, w).
\end{eqnarray*}

\paragraph*{Join node:}
Let $i$ be a join node and $j_1$ and $j_2$ its children.
As $X_i = X_{j_1} = X_{j_2}$, it should hold that $S^l_i = S^l_{j_1} = S^l_{j_2}$, $S^r_i = S^r_{j_1} = S^r_{j_2}$, $T^l_i = T^l_{j_1} = T^l_{j_2}$, and $T^r_i = T^r_{j_1} = T^r_{j_2}$.
The size of a partial solution $S_i$ at $i$ is the sum of the size of partial solutions $S_{j_1}$ and $S_{j_2}$ at its children and also the total weight of $S$ is the the sum of the weight of $S_{j_1}$ and $S_{j_2}$.
Thus, we have
\begin{eqnarray*} 
c_i(S^l_i, S^r_i, T^l_i, T^r_i, \ell, w) := \sum_{\ell_{j_1}+\ell_{j_2}=\ell}   \sum_{w_{j_1}+w_{j_2}=w} c_{j_1}(S^l_i, S^r_i, T^l_i, T^r_i, \ell_{j_1}, w_{j_1}) c_{j_2}(S^l_i, S^r_i, T^l_i, T^r_i, \ell_{j_2}, w_{j_2}) .
\end{eqnarray*}

The running time of evaluating the recursive formulas is $O^*(4^{|X_i|})$ for each node $i$.
Therefore, the total running is $O^*(4^{\tw})$.
We can also solve  \textsc{Maximum Minimal Cut} in time $O^*(4^{\tw})$ by applying the algorithm for \textsc{Maximum Minimal $s$-$t$ Cut} for all combinations of $s$ and $t$.

\begin{theorem}\label{thm:treewidth_single:Maxmin}
Given a tree decomposition of width $\tw$, there is a Monte-Carlo algorithm that solves \textsc{Maximum Minimal Cut} and \textsc{Maximum Minimal $s$-$t$ Cut}  in time $O^*(4^{\tw})$. It cannot give false positives
and may give false negatives with probability at most 1/2.
\end{theorem}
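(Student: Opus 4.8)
The plan is to read the five recurrences already displayed as a dynamic program that maintains the counts $c_i$, and then to combine the Cut \& Count principle with the Isolation Lemma to turn a parity computation into the claimed Monte-Carlo decision procedure.

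First I would confirm, by a bottom-up induction over the nice tree decomposition, that $c_i(S^l_i, S^r_i, T^l_i, T^r_i, \ell, w)$ equals the number of partial solutions for $(S^l_i, S^r_i, T^l_i, T^r_i, \ell)$ whose $S$-side has total weight $w$. Each recurrence is a routine case distinction against the definition of a partial solution: the introduce-vertex node assigns the new isolated vertex to one of the four classes and increments the weight only when it enters $S$; the introduce-edge node adds one to $\ell$ exactly when the edge is cut and annihilates the configurations that would violate consistency of $(S^l_i,S^r_i)$ or $(T^l_i,T^r_i)$; the forget node sums over the four former memberships of $v$; and the join node convolves the children over $\ell_{j_1}+\ell_{j_2}=\ell$ and $w_{j_1}+w_{j_2}=w$, which is correct because size and weight are additive over the glued partial solutions.

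The core of the proof is the counting (parity) argument. Fix a vertex set $S$ with $s\in S$, $t\notin S$ and $|\delta(S)|=k$. Expressing $S$ as a consistent cut $(S^l,S^r)$ with $s\in S^l$ is the same as deciding, for each component of $G[S]$ other than the one containing $s$, whether it goes left or right, so there are exactly $2^{c_S-1}$ such cuts, where $c_S$ is the number of components of $G[S]$; symmetrically $G[V\setminus S]$ yields $2^{c_T-1}$ consistent cuts with $t\in T^l$. Hence $S$ contributes $2^{(c_S-1)+(c_T-1)}$ configurations at the root, which is odd if and only if $c_S=c_T=1$, i.e. if and only if $S$ is a minimal $s$-$t$ cut. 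Summing over all such $S$ shows that $c(\{s\},\emptyset,\{t\},\emptyset,k,w)$ is congruent modulo $2$ to the number of minimal $s$-$t$ cuts of size $k$ and weight $w$.

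Finally I would invoke the Isolation Lemma~\cite{MVV1987} on the universe $V$ with the random weights drawn from $\{1,\dots,2n\}$: if a minimal $s$-$t$ cut of size $k$ exists, then with probability at least $1-n/(2n)=1/2$ the family of such cuts has a unique minimum-weight member, whose weight $w^\star\le 2n^2$ thus produces an odd count, so the algorithm accepts; if no such cut exists, then every contributing set has $(c_S-1)+(c_T-1)\ge 1$, so every count is even and the algorithm never accepts. This gives no false positives and false-negative probability at most $1/2$. Evaluating all recurrences costs $O^*(4^{|X_i|})=O^*(4^{\tw})$ per node, since each bag vertex carries one of four labels and the indices $\ell\le|E|$ and $w\le 2n^2$ together with the join convolutions only add polynomial overhead; iterating over all $O(n^2)$ pairs $(s,t)$, and over the target sizes $\ell\ge k$ with fresh weights, solves \textsc{Maximum Minimal Cut} within the same bound. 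The main obstacle is the parity analysis of the previous paragraph together with the correct use of the Isolation Lemma to prevent an even number of genuine solutions from cancelling; verifying the recurrences and the running time is comparatively mechanical.
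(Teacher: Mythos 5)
Your proposal is correct and follows essentially the same route as the paper: the same four-label Cut \& Count dynamic program over a nice tree decomposition, the same parity argument (each candidate $S$ contributes $2^{(c_S-1)+(c_T-1)}$ root configurations, odd exactly when both sides are connected), and the same use of the Isolation Lemma with weights from $\{1,\dots,2n\}$ to obtain a one-sided-error Monte-Carlo algorithm in time $O^*(4^{\tw})$. In fact you spell out the parity and no-false-positive analysis more explicitly than the paper does, and you correctly note the need to range over all target sizes $\ell\ge k$ and all pairs $(s,t)$.
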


We can also solve  \textsc{Connected Maximum Cut} and \textsc{Connected Maximum $s$-$t$ Cut}.
Since it suffices to keep track of consistent cuts of $S$, the running time is $O^*(3^{\tw})$.
\begin{theorem}\label{thm:treewidth_single:Connected}
Given a tree decomposition of width $\tw$, there is a Monte-Carlo algorithm that solves  \textsc{Connected Maximum Cut} and \textsc{Connected Maximum $s$-$t$ Cut}  in time $O^*(3^{\tw})$. It cannot give false positives
and may give false negatives with probability at most 1/2.
\end{theorem}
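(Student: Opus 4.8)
The plan is to adapt the Cut \& Count dynamic programming for \textsc{Maximum Minimal $s$-$t$ Cut} from Theorem~\ref{thm:treewidth_single:Maxmin} by dropping all the bookkeeping used to enforce connectivity of the second side $V\setminus S$. Recall that the only reason the previous algorithm tracks a consistent cut $(T^l_i,T^r_i)$ of the $V\setminus S$-part is to certify, via the parity-of-consistent-cuts argument, that $G[V\setminus S]$ is connected. For \textsc{Connected Maximum Cut} we need $G[S]$ connected but impose no requirement on $G[V\setminus S]$, so I would discard the pair $(T^l_i,T^r_i)$ entirely and instead keep a single set $T_i\subseteq X_i$ recording the vertices of the bag that lie in $V_i\setminus S$. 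The state then becomes $(S^l_i,S^r_i,T_i,\ell,w)$, where $(S^l_i,S^r_i)$ is the consistent cut of the $S$-side that drives the connectivity certificate for $G[S]$, $\ell$ counts cut edges, and $w$ is the randomized weight of $S$ used for isolation. As before I add $s,t$ to every bag so that the target count in the root is $c_{r}(\{s\},\emptyset,\{t\},k,w)$ for some $0\le w\le 2n^2$, and by the Isolation Lemma this is odd with high probability exactly when a connected $s$-$t$ cut of size $k$ exists.

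First I would rewrite each of the five recurrences from the minimal-cut algorithm in the reduced state. The leaf, introduce-vertex, forget, and join nodes are mechanical: every case that previously split $v$ (or a summand) into $T^l$ versus $T^r$ now collapses into a single branch placing $v$ into $T_i$, so for instance the forget node sums three terms (put the forgotten vertex into $S^l$, into $S^r$, or into $T$) rather than four. The introduce-edge node is where the asymmetry shows: when both endpoints lie in $S^l_i\cup S^r_i$ we still require them to be on the same side of the consistent cut (otherwise the count is $0$), but when both endpoints lie in the $V\setminus S$-side $T_i$ we impose no consistency constraint and simply carry the count through, and when exactly one endpoint is in $S^l_i\cup S^r_i$ we increment $\ell$. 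This is the step that must be written out carefully to make sure the relaxation still counts precisely the consistent cuts of $G[S]$ and nothing about $G[V\setminus S]$.

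The running time analysis is then immediate. At every node the state iterates over an ordered partition of $X_i$ into the three classes $S^l_i$, $S^r_i$, and $T_i$, giving $3^{|X_i|}$ combinations instead of the $4^{|X_i|}$ of the minimal-cut algorithm, with the usual polynomial factors from ranging over $\ell$ and $w$. The join node still combines only matching states of the two children via the convolution over $\ell_{j_1}+\ell_{j_2}=\ell$ and $w_{j_1}+w_{j_2}=w$, which remains polynomial per fixed class assignment, so the dominating cost is $O^*(3^{|X_i|})$ per node and hence $O^*(3^{\tw})$ overall. As in Theorem~\ref{thm:treewidth_single:Maxmin}, the one-sided error comes solely from the Isolation Lemma, so the algorithm never reports a false positive and errs with probability at most $1/2$; solving \textsc{Connected Maximum Cut} reduces to running \textsc{Connected Maximum $s$-$t$ Cut} over all pairs $s,t$ at a polynomial overhead.

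The main obstacle I anticipate is purely in the correctness argument for the introduce-edge node rather than in the running time: I must verify that after removing the $(T^l,T^r)$ structure the parity-based cancellation still isolates exactly the cuts whose $S$-side is connected, i.e.\ that no spurious consistency condition on edges inside $V\setminus S$ sneaks in and that an edge with both endpoints in $T_i$ neither contributes to $\ell$ nor kills the partial solution. Since the paper already establishes the full two-sided version and explicitly notes that it ``suffices to keep track of consistent cuts of $S$,'' this verification is routine, and I would simply remark that the correctness is inherited from Theorem~\ref{thm:treewidth_single:Maxmin} with the $V\setminus S$-connectivity certificate suppressed.
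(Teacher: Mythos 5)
Your proposal is correct and follows essentially the same route as the paper: the paper derives this theorem from the Cut \& Count algorithm for \textsc{Maximum Minimal $s$-$t$ Cut} by observing that it suffices to keep a consistent cut $(S^l_i,S^r_i)$ of the $S$-side only, collapsing the $V\setminus S$-side into a single class and reducing the state count from $4^{|X_i|}$ to $3^{|X_i|}$. Your more detailed treatment of the introduce-edge case and of the Isolation Lemma target at the root is exactly the elaboration the paper leaves implicit.
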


\subsection{Clique-width}
In this section, we design XP algorithms for both \textsc{Connected Maximum Cut} and \textsc{Maximum Minimal Cut} when parameterized by clique-width.
The algorithms are analogous to the dynamic programming algorithm for \textsc{Maximum Cut} given by Fomin et al.~\cite{Fomin2014}, but we need to carefully control the connectivity information in partial solutions.

Suppose that the clique-width of $G$ is $w$. Then, $G$ can be constructed by the four operations described in Definition~\ref{def:clique-width}.
This construction naturally defines a tree expressing a sequence of operations. This tree is called a {\em $w$-expression tree} of $G$ and used for describing dynamic programming algorithms for many problems based on clique-width.
Here, we rather use a different graph parameter and its associated decomposition closely related to clique-width.
We believe that this decomposition is more suitable to describe our dynamic programming.

\begin{definition}
    Let $X \subseteq V(G)$. We say that $M \subseteq X$ is a {\em twin-set} of $X$ if for any $v \in V(G) \setminus X$, either $M \subseteq N(v)$ or $M \cap N(v) = \emptyset$ holds. A twin-set $M$ is called a {\em twin-class} of $X$ if it is maximal subject to being a twin-set of $X$. $X$ can be partitioned into twin-classes of $X$.
\end{definition}

\begin{definition}
    Let $w$ be an integer. We say that $X \subseteq V(G)$ is a {\em $w$-module} of $G$ if $X$ can be partitioned into $w$ twin-classes $\{X_1, X_2, \ldots, X_{w}\}$.
    A {\em decomposition tree} of $G$ is a pair of a rooted binary tree $T$ and a bijection $\phi$ from the set of leaves of $T$ to $V(G)$.
    For each node $v$ of $T$, we denote by $L_v$ the set of leaves, each of which is either $v$ or a descendant of $v$.
    The {\em width} of a decomposition tree $(T, \phi)$ of $G$ is the minimum $w$ such that for every node $v$ in $T$, the set $\bigcup_{l \in L_v}\phi(l)$ is a $w_v$-module of $G$ with $w_v \le w$.
    The {\em module-width} of $G$ is the minimum $t$ such that there is a decomposition tree of $G$ of width $w$.
\end{definition}

Rao \cite{Rao2008} proved that clique-width and module-width are linearly related to each other.
Let $\cw(G)$ and $\mw(G)$ be the clique-width and the module-width of $G$, respectively.
We note that a similar terminology ``modular-width'' has been used in many researches, but module-width used in this paper is different from it.

\begin{theorem}[\cite{Rao2008}]\label{thm:mw-cw}
    For every graph $G$,
    $\mw(G) \le \cw(G) \le 2\mw(G)$.
\end{theorem}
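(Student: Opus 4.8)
The plan is to prove the two inequalities of Theorem~\ref{thm:mw-cw} separately, each by translating one decomposition into the other in a structure-preserving way. The unifying observation is that, for any vertex subset $X$, two vertices of $X$ lie in the same twin-class of $X$ exactly when they have the same neighbourhood in $V(G)\setminus X$; thus the number of twin-classes of $X$ measures how much ``interface information'' $X$ exposes to the rest of the graph, which is precisely what both parameters bound along a decomposition.

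First I would show $\mw(G)\le\cw(G)$. Take an optimal $w$-expression with $w=\cw(G)$ and read it as a tree whose leaves are the vertex-introduction operations, whose binary nodes are the disjoint-union operations, and whose unary nodes are the relabel and edge-add operations. Contracting the unary nodes yields a rooted binary tree $T$ together with a bijection $\phi$ from its leaves to $V(G)$. For a node $v$, the set $\bigcup_{l\in L_v}\phi(l)$ is exactly the set of vertices created inside the corresponding subexpression, and at that point they carry at most $w$ distinct labels. The key point is that two vertices sharing a label there are relabelled together by every subsequent relabel operation, and hence receive edges to every later-introduced vertex simultaneously; consequently each label class is a twin-set of $L_v$. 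Since twin-classes are the maximal twin-sets, $L_v$ partitions into at most $w$ twin-classes, so it is a $w_v$-module with $w_v\le w$. Hence $(T,\phi)$ has width at most $w$, giving $\mw(G)\le\cw(G)$.

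For the reverse inequality $\cw(G)\le 2\mw(G)$, I would process a width-$w$ decomposition tree $(T,\phi)$ bottom-up, maintaining the invariant that the subexpression built at node $v$ produces $G[L_v]$ with a labelling in which each label class is a twin-class of $L_v$, using at most $w$ labels. Two structural facts drive the construction at an internal node $v$ with children $a,b$. First, because every vertex of $L_b$ lies outside $L_a$, adjacency between $u\in L_a$ and $u'\in L_b$ depends only on the twin-class of $u$ in $L_a$ and the twin-class of $u'$ in $L_b$; thus the bipartite graph between $L_a$ and $L_b$ is a blow-up of a bipartite graph on the twin-classes. Second, a twin-class of $L_a$ (or of $L_b$) is entirely contained in a single twin-class of $L_v$, since agreeing on the neighbourhood outside $L_a$ forces agreement on the smaller neighbourhood outside $L_v$.

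Using these facts, at node $v$ I would relabel the two children onto disjoint label sets $\{1,\dots,w\}$ and $\{w+1,\dots,2w\}$, take their disjoint union, and for each pair of twin-classes that must be adjacent add all cross edges with a single edge-add operation between the corresponding labels; this realizes exactly $G[L_v]$ and is where the factor $2$, and hence the bound of $2w$ labels, enters. Finally, by the second structural fact the current $2w$ labels can be merged by relabel operations into the at most $w$ twin-classes of $L_v$, restoring the invariant. At the root $L_v=V(G)$, so the expression builds $G$ with at most $2w$ labels, yielding $\cw(G)\le 2\mw(G)$. I expect the main obstacle to be establishing these two structural facts cleanly and verifying that the relabelling onto the twin-classes of $L_v$ is well defined; once the blow-up structure is in hand, the label bookkeeping across the disjoint union is routine.
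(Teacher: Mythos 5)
Your argument is correct, but there is nothing in the paper to compare it against: Theorem~\ref{thm:mw-cw} is imported from Rao's work \cite{Rao2008} as a black box, with no proof given. What you have reconstructed is essentially the standard proof of that result --- reading the $w$-expression tree as a decomposition tree and observing that label classes are twin-sets gives $\mw(G)\le\cw(G)$, and the bottom-up construction that keeps the two children's twin-classes on disjoint label sets (hence $2w$ labels) until the cross edges are added, then merges labels along the coarsening of Observation~\ref{obs:coarse}, gives $\cw(G)\le 2\mw(G)$. Both structural facts you flag as the main obstacles (adjacency between $L_a$ and $L_b$ depending only on twin-classes, and each child twin-class lying inside a single twin-class of $L_v$) follow directly from the definition of a twin-set together with $V(G)\setminus L_v\subseteq V(G)\setminus L_a$, so the sketch is complete in all essentials.
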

Moreover, given a $w$-expression tree of $G$, we can in time $O(n^2)$ compute a decomposition tree $(T, \phi)$ of $G$ of width at most $w$ and $w_v \le w$ twin-classes of $\bigcup_{l \in L_v}\phi(l)$ for each node $v$ in $T$ \cite{Bui-Xuan2013}.

Fix a decomposition tree $(T, f)$ of $G$ whose width is $w$.
Our dynamic programming algorithm runs over the nodes of the decomposition tree in a bottom-up manner.
For each node $v$ in $T$, we let $\{X^v_1, X^v_2, \ldots, X^v_{w_v}\}$ be the twin-classes of $\bigcup_{l \in L_v}\phi(l)$. From now on, we abuse the notation to denote $\bigcup_{l \in L_v}\phi(l)$ simply by $L_v$.
A tuple of $4w_v$ integers $t= (p_1, \overline{p}_1, p_2, \overline{p}_2, \ldots, p_{w_v}, \overline{p}_{w_v}, c_1, \overline{c}_1, c_2, \overline{c}_2, \ldots, c_{w_v}, \overline{c}_{w_v})$ is {\em valid} for $v$ if it holds that $0 \le p_i, \overline{p}_i \le |X^v_i|$ with $p_i + \overline{p}_i = |X^v_i|$ and $c_i, \overline{c}_i \in \{0, 1\}$ for each $1 \le i \le w_v$.
For a valid tuple $t$ for $v$, we say that a cut $(S, L_v \setminus S)$ of $G[L_v]$ is {\em $t$-legitimate} if for each $1 \le i \le w_v$, it satisfies the following conditions:
\begin{itemize}
    \item $p_i = |S \cap X^v_i|$,
    \item $\overline{p}_i = |(L_v \setminus S) \cap X^v_i|$,
    \item $G[S \cap X^v_i]$ is connected if $c_i = 1$, and
    \item $G[(L_v \setminus S) \cap X^v_i]$ is connected if $\overline{c}_i = 1$.
\end{itemize}
The size of a $t$-legitimate cut is defined accordingly.
In this section, we allow each side of a cut to be empty and the empty graph is considered to be connected.
Our algorithm computes the value $\mc(v, t)$ that is the maximum size of a $t$-legitimate cut for each valid tuple $t$ and for each node $v$ in the decomposition tree. 

\paragraph*{Leaves (Base step):} 
For each valid tuple $t$ for a leaf $v$, $\mc(v, t) = 0$. 
Note that there is only one twin-class $X^v_1 = \{v\}$ for $v$ in this case. 

\paragraph*{Internal nodes (Induction step):} Let $v$ be an internal node of $T$ and let $a$ and $b$ be the children of $v$ in $T$.
Consider twin-classes $\CX^v = \{X^v_1, X^v_2, \ldots, X^v_{w_v}\}$, $\CX^a = \{X^a_1, X^a_2, \ldots, X^a_{w_a}\}$, and $\CX^b = \{X^b_1, X^b_2, \ldots, X^b_{w_b}\}$ of $L_v$, $L_a$, and $L_b$, respectively. Note that $\CX^a \cup \CX^b$ is a partition of $L_v$.

\begin{observation}\label{obs:coarse}
    $\CX^v$ is a partition of $L_v$ coarser than $\CX^a \cup \CX^b$.
\end{observation}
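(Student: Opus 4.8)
The plan is to show that every block of the finer partition $\CX^a \cup \CX^b$ is contained in a single block of $\CX^v$; by definition this is exactly what it means for $\CX^v$ to be coarser than $\CX^a \cup \CX^b$. Since $\CX^a \cup \CX^b$ is already a partition of $L_v$ (as $L_v$ is the disjoint union of $L_a$ and $L_b$), it suffices to prove that each twin-class $X^a_i$ of $L_a$ lies entirely inside one twin-class of $L_v$, and symmetrically for each $X^b_j$ of $L_b$.

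The key observation I would isolate first is a monotonicity property of twin-sets: if $Y \subseteq Z \subseteq V(G)$ and $M \subseteq Y$ is a twin-set of $Y$, then $M$ is also a twin-set of $Z$. This is immediate from the definition, because the defining condition ranges over the external vertices, and $V(G)\setminus Z \subseteq V(G)\setminus Y$, so enlarging the set from $Y$ to $Z$ only removes constraints. Applying this with $Y = L_a$ and $Z = L_v$ (note $L_a \subseteq L_v$), every twin-class $X^a_i$, being in particular a twin-set of $L_a$, is a twin-set of $L_v$; the same argument with $Y = L_b$ shows each $X^b_j$ is a twin-set of $L_v$.

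It then remains to argue that a twin-set of $L_v$ is contained in a single twin-class of $L_v$. For this I would note that the twin relation on $L_v$, declaring $x \sim y$ precisely when $N(x)\setminus L_v = N(y)\setminus L_v$, is an equivalence relation whose classes are exactly the maximal twin-sets, i.e.\ the twin-classes in $\CX^v$. Indeed, any twin-set consists of pairwise equivalent vertices and hence lies in one class, while each equivalence class is itself a twin-set; this is also what makes $\CX^v$ a genuine partition of $L_v$. Combining the two steps, each $X^a_i$ and each $X^b_j$ is contained in a single block of $\CX^v$, so $\CX^v$ is coarser than $\CX^a \cup \CX^b$, as claimed.

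The argument is short, and I do not anticipate a genuine obstacle. The only point requiring a little care is the equivalence-relation description of twin-classes, which simultaneously guarantees that they partition $L_v$ and that a single twin-set cannot straddle two of them; the monotonicity step itself is essentially a restatement of the defining condition over a shrinking set of external vertices.
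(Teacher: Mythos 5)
Your proof is correct and follows essentially the same route as the paper: the paper's argument is precisely the monotonicity step, namely that $V(G)\setminus L_v \subseteq V(G)\setminus L_a$ implies each $X^a_i$ (and symmetrically each $X^b_j$) remains a twin-set of $L_v$ and is therefore contained in a single twin-class of $L_v$. The only difference is that you spell out, via the equivalence-relation characterization, why a twin-set cannot straddle two twin-classes, a point the paper leaves implicit.
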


To see this, consider an arbitrary twin-class $X^a_i$ of $L_a$. By the definition of twin-sets, for every $z \in V(G) \setminus L_a$, either $X^a_i \subseteq N(z)$ or $X^a_i \cap N(z) = \emptyset$ holds. Since $V(G) \setminus L_v \subseteq V(G) \setminus L_a$, $X^a_i$ is also a twin-set of $L_v$, which implies $X^a_i$ is included in some twin-class $X^v_j$ of $L_v$. This argument indeed holds for twin-classes of $L_b$. Therefore, we have the above observation.

The intuition of our recurrence is as follows.
By Observation~\ref{obs:coarse}, every twin-class of $L_v$ can be obtained by merging some twin-classes of $L_a$ and of $L_b$.
This means that every $t_v$-legitimate cut of $G[L_v]$ for a valid tuple $t_v$ for $v$ can be obtained from some $t_a$-legitimate cut and $t_b$-legitimate cut for valid tuples for $a$ and $b$, respectively.
Moreover, for every pair of twin-classes $X^a_i$ of $L_a$ and $X^b_j$ of $L_b$, either there are no edges between them or every vertex in $X^a_i$ is adjacent to every vertex in $X^b_j$ as $X^a_i$ is a twin-set of $L_v$.
Therefore, the number of edges in the cutset of a cut $(S, L_v \setminus S)$ between $X^a_i$ and $X^b_j$ depends only on the cardinality of $X^a_i \cap S$ and $X^b_j \cap S$ rather than actual cuts $(S \cap X^a_i, (L_a \setminus S) \cap X^a_i)$ and $(S \cap X^b_i, (L_b \setminus S) \cap X^b_i)$.

Now, we formally describe this idea. Let $X^v$ be a twin-class of $L_v$.
We denote by $I_a(X^v)$ (resp. $I_b(X^v)$) the set of indices $i$ such that $X^a_i$ (resp. $X^b_i$) is included in $X^v$ and by $\CX^a(X^v)$ (resp. $\CX^b(X^v)$) the set $\{X^a_i : i \in I_a(X^v)\}$ (resp. $\{X^b_i : i \in I_b(X^v)\}$).
For $X^a \in \CX^a(X^v)$ and $X^b \in \CX^a(X^v)$, we say that $X^a$ is adjacent to $X^b$ if every vertex in $X^a$ is adjacent to every vertex in $X^b$ and otherwise $X^a$ is not adjacent to $X^b$. 
This adjacency relation naturally defines a bipartite graph whose vertex set is $\CX^a(X^v) \cup \CX^b(X^v)$.
We say that a subset of twin-classes of $\CX^a(X^v) \cup \CX^b(X^v)$ is {\em non-trivially connected} if it induces a connected bipartite graph with at least twin-classes.
Let $S \subseteq X^v$.
To make $G[S]$ (and $G[X^v \setminus S]$) connected, the following observation is useful.
\begin{observation}\label{obs:connect}
    Suppose $S \subseteq X^v$ has a non-empty intersections with at least two twin-classes of $\CX^a(X^v) \cup \CX^b(X^v)$.
    Then, $G[S]$ is connected if and only if the twin-classes having a non-empty intersection with $S$ are non-trivially connected.
\end{observation}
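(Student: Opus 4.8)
The plan is to show that the connectivity of $G[S]$ is governed entirely by the auxiliary bipartite graph $B$ on $\CX^a(X^v) \cup \CX^b(X^v)$, so that the statement reduces to a purely combinatorial fact about $B$. Throughout I would use the single structural property recorded in the discussion preceding the statement: for $X^a \in \CX^a(X^v)$ and $X^b \in \CX^b(X^v)$, either every vertex of $X^a$ is joined to every vertex of $X^b$ in $G$ (when $X^a$ and $X^b$ are adjacent in $B$) or there is no edge between them at all. The elementary consequence I would rely on repeatedly is the following gluing step: if $X^a$ and $X^b$ are adjacent in $B$ and both meet $S$, then $(S \cap X^a) \cup (S \cap X^b)$ induces a connected subgraph of $G$, since a complete bipartite graph with both sides non-empty is connected.

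For the direction asserting that non-trivial connectedness implies connectivity of $G[S]$, I would take two arbitrary vertices $x,y \in S$ and let $Y_x, Y_y$ be the twin-classes containing them. Since the twin-classes meeting $S$ induce a connected subgraph of $B$, there is a walk $Y_x = Z_0, Z_1, \dots, Z_r = Y_y$ in $B$ all of whose twin-classes meet $S$ and in which consecutive classes are $B$-adjacent and lie on opposite sides. Applying the gluing step to each consecutive pair $Z_i, Z_{i+1}$ produces an $x$-$y$ path in $G[S]$; note that this argument also covers the case where some $S \cap Z_i$ is internally disconnected, because each of its vertices reaches a common vertex of the neighbouring set $S \cap Z_{i+1}$. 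Hence $G[S]$ is connected.

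For the converse I would argue by contraposition. Assuming the twin-classes meeting $S$ do not induce a connected subgraph of $B$ (they number at least two by hypothesis, so the subgraph is genuinely disconnected), split them into two non-empty families $\mathcal P$ and $\mathcal Q$ with no $B$-edge between them, and let $S_{\mathcal P}, S_{\mathcal Q}$ be the induced partition of $S$. To conclude that $(S_{\mathcal P}, S_{\mathcal Q})$ separates $G[S]$ I must verify that $G$ has no edge between $S_{\mathcal P}$ and $S_{\mathcal Q}$. For a pair of twin-classes lying on opposite sides this is immediate from the complete-bipartite-or-empty dichotomy together with their $B$-non-adjacency. The main obstacle is the remaining case of two twin-classes on the same side of $X^v$: here one must show that no edge of $G[L_a]$ (respectively $G[L_b]$) runs between the two families, i.e.\ that inside $X^v$ the only adjacencies between distinct twin-classes are the cross-side complete-bipartite joins recorded by $B$. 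This is exactly the point at which the definition of $X^v$ as a twin-class of $L_v$ and its refinement by $\CX^a(X^v)$ and $\CX^b(X^v)$ must be exploited, and it is the step I expect to require the most care; granting it, $(S_{\mathcal P}, S_{\mathcal Q})$ is a separation and $G[S]$ is disconnected, which completes the contrapositive.
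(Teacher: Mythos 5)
Your first direction is sound, and it is essentially all that the paper's one-sentence justification contains: a cross-side adjacency in the auxiliary bipartite graph is a complete join in $G$, so walking along a connected family of twin-classes that meet $S$ links any two vertices of $S$. The trouble is the converse, and the step you defer with ``granting it'' is exactly where the argument breaks --- it cannot be granted. Being a twin-class of $L_a$ constrains only neighbourhoods \emph{outside} $L_a$, so two distinct twin-classes in $\CX^a(X^v)$ may be joined by arbitrary edges of $G[L_a]$, none of which are recorded in the bipartite graph. Concretely, let $G$ be the three-vertex path with edges $\{u,w\}$ and $\{w,z\}$, and take a decomposition tree whose root $v$ has $L_a=\{u,w\}$ and $L_b=\{z\}$. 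Since only $w$ is adjacent to $z$, the twin-classes of $L_a$ are $\{u\}$ and $\{w\}$, and both lie in the unique twin-class $X^v=V(G)$ of $L_v$. The set $S=\{u,w\}$ meets two twin-classes and $G[S]$ is connected, yet the bipartite graph induced on $\{\{u\},\{w\}\}$ has no edge, so these classes are not non-trivially connected. The ``only if'' direction of the observation is therefore false as stated, not merely delicate.

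For comparison, the paper disposes of the observation with the single sentence that vertices of adjacent twin-classes are completely joined and vertices of non-adjacent twin-classes are non-adjacent; this addresses only cross-side pairs and silently assumes that no edge of $G[L_a]$ or $G[L_b]$ runs between distinct same-side twin-classes inside $X^v$ --- precisely the case you isolated. So your instinct about where the difficulty lies is correct, but the resolution is not a more careful use of the definitions: what is missing is a hypothesis, not a calculation. The ``if'' direction (your first paragraph) survives and is what one half of Lemma~\ref{lem:cw:rec} needs; repairing the other half would require the dynamic programming to carry connectivity information across same-side twin-classes, which the per-class flags $c_i,\overline{c}_i$ do not currently record.
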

This observation immediately follows from the fact that every vertex in a twin-class is adjacent to every vertex in an adjacent twin-class and is not adjacent to every vertex in a non-adjacent twin-class.

Let $t_v = (p^v_1, \overline{p}^v_1, \ldots, p^v_{w_v}, \overline{p}^v_{w_v}, c^v_1, \overline{c}^v_2, \ldots, c^v_{w_v}, \overline{c}^v_{w_v})$ be a valid tuple for $v$.
For notational convenience, we use ${\bf p}^v$ to denote $(p^v_1, \overline{p}^v_1, \ldots, p^v_{w_v}, \overline{p}^v_{w_v})$ and ${\bf c}^v$ to denote $(c^v_1, \overline{c}^v_2, \ldots, c^v_{w_v}, \overline{c}^v_{w_v})$ for each node $v$ in $T$.
For valid tuples $t_a = ({\bf p}^a, {\bf c}^a)$ for $a$ and $t_b = ({\bf p}^b, {\bf c}^b)$ for $b$,
we say that {\em $t_v$ is consistent with the pair $(t_a, t_b)$} if for each $1\le i \le w_v$, 
\begin{itemize}
    \item[C1] $p^v_i = \sum_{j \in I_a(X^v_i)} p^a_j + \sum_{j \in I_b(X^v_i)} p^b_j$;
    \item[C2] $\overline{p}^v_i = \sum_{j \in I_a(X^v_i)} \overline{p}^a_j + \sum_{j \in I_b(X^v_i)} \overline{p}^b_j$;
    \item[C3] if $c^v_i = 1$, either (1) $\{X^a_j : j \in I_a(X^v), p^a_j > 0\} \cup \{X^b_j : j \in I_b(X^v), p^b_j > 0\}$ is non-trivially connected or (2) exactly one of $\{p^s_j : s \in \{a, b\}, 1 \le j \le w_s\}$ is positive, say $p^s_j$, and $c^s_j = 1$;
    \item[C4] if $\overline{c}^v_i = 1$, either (1) $\{X^a_j : j \in I_a(X^v), \overline{p}^a_j > 0\} \cup \{X^b_j : j \in I_b(X^v), \overline{p}^b_j > 0\}$ is non-trivially connected or (2) exactly one of $\{\overline{p}^s_j : s \in \{a, b\}, 1 \le j \le w_s\}$ is positive, say $\overline{p}^s_j$, and $\overline{c}^s_j = 1$.
\end{itemize}

\begin{lemma}\label{lem:cw:rec}
    \[
        \mc(v, t_v) = \max_{t_a, t_b} \left(\mc(a, t_a) + \mc(b, t_b) + \sum_{\substack{X^a_i \in \CX^a, X^b_j \in \CX^b\\
        X^a_i, X^b_j: \text{adjacent}}} (p^a_i\overline{p}^b_j + p^b_j\overline{p}^a_i)  \right),
    \]
    where the maximum is taken over all consistent pairs $(t_a, t_b)$.
\end{lemma}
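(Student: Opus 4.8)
The plan is to prove the recurrence by establishing both inequalities: that every $t_v$-legitimate cut of $G[L_v]$ decomposes into legitimate cuts on the two children with a consistent tuple (giving $\le$), and conversely that any consistent pair of child cuts recombines into a $t_v$-legitimate cut of the claimed size (giving $\ge$). The backbone of both directions is Observation~\ref{obs:coarse}, which guarantees that each twin-class $X^v_i$ of $L_v$ is the disjoint union of the child twin-classes indexed by $I_a(X^v_i)$ and $I_b(X^v_i)$, so that a cut $(S, L_v \setminus S)$ restricts canonically to cuts on $L_a$ and $L_b$.

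First I would fix a $t_v$-legitimate cut $(S, L_v \setminus S)$ and set $S^a = S \cap L_a$, $S^b = S \cap L_b$, reading off the induced tuples $t_a = (\mathbf{p}^a, \mathbf{c}^a)$ and $t_b = (\mathbf{p}^b, \mathbf{c}^b)$ where $p^a_j = |S \cap X^a_j|$ and $c^a_j = [\,G[S \cap X^a_j]\text{ connected}\,]$, and similarly for $b$. Conditions C1 and C2 are then immediate from $X^v_i = \bigcup_{j \in I_a(X^v_i)} X^a_j \ \cup \ \bigcup_{j \in I_b(X^v_i)} X^b_j$ by counting cardinalities on each side of the cut. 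Conditions C3 and C4 are the crux: when $c^v_i = 1$, the set $S \cap X^v_i$ induces a connected subgraph, and I would split into the case where $S$ meets at least two child twin-classes within $X^v_i$—where Observation~\ref{obs:connect} forces non-trivial connectedness, giving alternative~(1)—and the case where $S$ meets exactly one, which yields alternative~(2) with the corresponding child connectivity flag set. The key accounting is the size identity: edges of $G[L_v]$ in the cutset fall into three groups, namely those internal to $L_a$, those internal to $L_b$, and those crossing between a twin-class $X^a_i$ and an adjacent twin-class $X^b_j$; the first two are tallied by $\mc(a, t_a)$ and $\mc(b, t_b)$, and the crossing edges between an adjacent pair contribute exactly $p^a_i \overline{p}^b_j + p^b_j \overline{p}^a_i$ since every vertex of $X^a_i$ is joined to every vertex of $X^b_j$, so a crossing cut edge arises precisely from an $S$-vertex on one side paired with a non-$S$ vertex on the other. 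This shows $\mc(v, t_v) \le \max_{(t_a,t_b)}(\cdots)$.

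For the reverse inequality I would take optimal $t_a$-legitimate and $t_b$-legitimate cuts $(S^a, L_a \setminus S^a)$ and $(S^b, L_b \setminus S^b)$ realizing a consistent pair, set $S = S^a \cup S^b$, and verify that $(S, L_v \setminus S)$ is $t_v$-legitimate: the cardinality conditions follow from C1 and C2, while the connectivity conditions follow by running the argument of the forward direction in reverse, using Observation~\ref{obs:connect} under alternative~(1) and the inherited child flag under alternative~(2). Its size equals $\mc(a,t_a) + \mc(b,t_b)$ plus the crossing term by the same three-way edge partition, so the maximum on the right is attained. The main obstacle I expect is the careful treatment of the connectivity conditions C3 and C4, in particular handling the degenerate subcases—when only a single child twin-class contributes to a side of the cut within $X^v_i$, or when a class contributes zero vertices—so that the two alternatives in C3 and C4 exactly capture when Observation~\ref{obs:connect} is applicable versus when connectivity must be certified by a child flag. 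Everything else reduces to the bilinear edge-counting identity, which is routine once the adjacency structure between child twin-classes is fixed.
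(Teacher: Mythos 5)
Your proposal is correct and follows essentially the same route as the paper's proof: both directions are handled by restricting a $t_v$-legitimate cut to $L_a$ and $L_b$ (resp.\ taking the union of child cuts), invoking Observation~\ref{obs:coarse} for the cardinality conditions C1--C2, Observation~\ref{obs:connect} together with the two alternatives of C3--C4 for connectivity, and the same three-way edge partition with the bilinear count $p^a_i\overline{p}^b_j + p^b_j\overline{p}^a_i$ for adjacent twin-class pairs. If anything, your explicit verification of the consistency conditions in the forward direction is slightly more careful than the paper's, which only asserts legitimacy of the induced child tuples.
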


\begin{proof}
We first show that the left-hand side is at most the right-hand side.
Suppose $(S, L_v \setminus S)$ be a $t_v$-legitimate cut of $G[L_v]$ whose size is equal to $\mc(v, t_v)$.
Let $S_a = S \cap L_a$ and $S_b = S \cap L_b$.
We claim that $(S_a, L_a \setminus S_a)$ is a $t_a$-legitimate cut of $G[L_a]$ for some valid tuple $t_a$ for $a$. 
This is obvious since we set $p^a_i = |S_a \cap X^a_i|$, $\overline{p}^a_i = |(L_a \setminus S_a) \cap X^a_i|$, $c^a_i = 1$ if $G[S_a \cap X^a_i]$ is connected, and $c^a_i = 1$ if $G[(L_a \setminus S_a) \cap X^a_i]$ is connected, which yields a valid tuple $t_a$ for $a$.
We also conclude that  $(S_b, L_b \setminus S_b)$ is a $t_b$-legitimate cut of $G[L_b]$ for some valid tuple $t_b$ for $b$.
Moreover, the number of cut edges between twin-class $X^a_i$ of $L_a$ and twin-class $X^b_j$ of $L_b$ is $|S_a \cap X^a_i|\cdot |(L_b\setminus S_b) \cap X^b_j| + |S_b \cap X^b_j|\cdot|(L_b \setminus S_a)\cap X^a_i| = p^a_i\overline{p}^b_j + p^b_j\overline{p}^a_i$ if $X^a_i$ and $X^b_j$ is adjacent, zero otherwise.
Therefore, the left-hand side is at most the right-hand side.

To show the converse direction, suppose $(S_a, L_a \setminus S_a)$ is a $t_a$-legitimate cut of $G[L_a]$ and $(S_b, L_b \setminus S_b)$ is a $t_b$-legitimate cut of $G[L_b]$, where $t_v$ is consistent with $(t_a, t_b)$ and the sizes of the cuts are $\mc(a, t_a)$ and $\mc(b, t_b)$, respectively.
We claim that $(S_a \cup S_b, L_v \setminus (S_a \cup S_b))$ is a $t_v$-legitimate cut of $G[L_v]$.
Since $t_v$ is consistent with $(t_a, t_b)$, for each $1 \le i \le w_v$, we have $p^v_i = \sum_{j \in I_a(X^v_i)} p^a_j + \sum_{j \in I_b(X^v_i)} p^b_j = \sum_{1 \le j \le w_a}|S_a \cap X^i_v| + \sum_{1 \le j \le w_b}|S_b \cap X^i_v| = |(S_a \cup S_b) \cap X^i_v|$. Symmetrically, we have $\overline{p}^i = |(L_v \setminus (S_a \cup S_b)) \cap X^v_i|$.
If $c^v_i = 1$, by condition C3 of the consistency, either (1) $\{X^a_j : j \in I_a(X^v), p^a_j > 0\} \cup \{X^b_j : j \in I_b(X^v), p^b_j > 0\}$ is non-trivially connected or (2) exactly one of $\{p^s_j : s \in \{a, b\}, 1 \le j \le w_s\}$ is positive, say $p^s_j$, and $c^s_j = 1$. 
If (1) holds, by Observation~\ref{obs:connect}, $G[(S_a \cap S_b) \cap X^i_v]$ is connected. Otherwise, as $c^s_j = 1$, $G[S_s \cap X^i_v] = G[(S_a \cup S_b) \cap X^v_i]$ is also connected.
By a symmetric argument, we conclude that $G[(L_v \setminus (S_a \cup S_b)) \cap X^i_v]$ is connected if $\overline{c}^v_i = 1$. Therefore the cut $(S_a \cup S_b, L_v \setminus (S_a \cup S_b))$ is $t_v$-legitimate.
Since the cut edges between two twin-classes of $L_a$ is counted by $\mc(a, t_a)$ and those between two twin-classes of $L_v$ is counted by $\mc(b, t_b)$. 
Similar to the forward direction, the number of cut edges between a twin-class of $L_a$ and a twin-class of $L_b$ can be counted by the third term in the right-hand side of the equality.
Hence, the left-hand side is at least right-hand side.
\end{proof}

\begin{theorem}\label{thm:cw:dp}
    \textsc{Connected Maximum Cut} and \textsc{Maximum Minimal Cut} can be computed in time $n^{O(w)}$ provided that a $w$-expression tree of $G$ is given as input.
\end{theorem}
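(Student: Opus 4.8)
The plan is to turn the recurrence in Lemma~\ref{lem:cw:rec} into a bottom-up dynamic program over the decomposition tree $(T, \phi)$ and bound its running time by $n^{O(w)}$. First I would fix a $w$-expression tree of $G$, convert it in time $O(n^2)$ into a decomposition tree of width at most $w$ together with the twin-classes of each $L_v$ using the result of \cite{Bui-Xuan2013}, and set up a table indexed by pairs $(v, t_v)$, where $v$ ranges over the nodes of $T$ and $t_v$ ranges over valid tuples for $v$. For each such entry we store $\mc(v, t_v)$, the maximum size of a $t_v$-legitimate cut of $G[L_v]$. The base step is trivial by the leaf rule ($\mc(v,t)=0$), and the induction step is exactly the evaluation of the formula in Lemma~\ref{lem:cw:rec}: at an internal node $v$ with children $a$ and $b$, we iterate over all pairs $(t_a, t_b)$ of valid tuples for the children, check consistency (conditions C1--C4, using Observations~\ref{obs:coarse} and~\ref{obs:connect} to verify the connectivity conditions via non-trivial connectedness of the relevant twin-classes), add the cross-term $\sum p^a_i\overline{p}^b_j + p^b_j\overline{p}^a_i$ over adjacent pairs, and take the maximum over all consistent pairs.

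Next I would argue correctness. Lemma~\ref{lem:cw:rec} already establishes that the stored value equals $\mc(v,t_v)$ assuming the children's entries are correct, so correctness follows by induction on the height of $T$. At the root $r$ we have $L_r = V(G)$ and a single twin-class $X^r_1 = V(G)$; the answer to \textsc{Connected Maximum Cut} is then $\max \mc(r, t_r)$ over valid tuples $t_r = (p_1, \overline{p}_1, c_1, \overline{c}_1)$ with $c_1 = 1$ (the connected side nonempty, i.e.\ $p_1 \ge 1$), and the answer to \textsc{Maximum Minimal Cut} is the same maximum but restricted to $c_1 = \overline{c}_1 = 1$ with both $p_1 \ge 1$ and $\overline{p}_1 \ge 1$, recalling that a cut is minimal iff both sides induce connected subgraphs. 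In each case we simply report whether this optimum is at least $k$.

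Finally I would bound the running time. For a node $v$ with $w_v \le w$ twin-classes, each $p_i$ ranges over $\{0, 1, \ldots, |X^v_i|\}$ and each $c_i, \overline{c}_i$ over $\{0,1\}$, so the number of valid tuples is at most $\prod_i (|X^v_i|+1) \cdot 4^{w_v} = n^{O(w)}$. At an internal node the induction step ranges over pairs $(t_a, t_b)$, giving $n^{O(w)} \cdot n^{O(w)} = n^{O(w)}$ combinations, and checking consistency (including the non-trivial connectedness tests over at most $w$ twin-classes) and evaluating the cross-term both take time polynomial in $n$ and $w$. Since $T$ has $O(n)$ nodes, the overall running time is $n^{O(w)}$. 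The main subtlety — rather than a genuine obstacle, since Lemma~\ref{lem:cw:rec} does the heavy lifting — is the bookkeeping in the consistency conditions C3 and C4: one must correctly handle the degenerate case in which $S \cap X^v_i$ meets only a single child twin-class (where connectivity is inherited directly via $c^s_j = 1$ rather than through non-trivial connectedness of the adjacency bipartite graph), and verify that the connectivity flags propagate correctly when twin-classes merge. I would therefore present the running-time and correctness argument carefully while invoking Lemma~\ref{lem:cw:rec} for the per-node correctness of the recurrence.
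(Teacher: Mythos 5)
Your proposal is correct and follows essentially the same route as the paper's proof: convert the $w$-expression tree into a decomposition tree of width at most $w$, evaluate the recurrence of Lemma~\ref{lem:cw:rec} bottom-up over $n^{O(w)}$ valid tuples per node, and read off the answer at the root from the tuples with $c_1=1$ (and additionally $\overline{c}_1=1$ for \textsc{Maximum Minimal Cut}). The only differences are cosmetic (which reference is invoked for the conversion, and your slightly more explicit handling of the nonemptiness conditions at the root), so no further changes are needed.
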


\begin{proof}
    From a $w$-expression tree of $G$, we can obtain a decomposition tree $(T, \phi)$ of width at most $w$ in $O(n^2)$ time using Rao's algorithm \cite{Rao2008}.
    Based on this decomposition, we evaluate the recurrence in Lemma~\ref{lem:cw:rec} in a bottom-up manner. The number of valid tuples for each node of $T$ is at most $4^wn^w$. For each internal node $v$ and for each valid tuple $t_v$ for $v$, we can compute $\mc(v, t_v)$ in $(4^wn^w)^2n^{O(1)}$ time. Overall, the running time of our algorithm is $n^{O(w)}$.
    Let $r$ be the root of $T$.
    For \textsc{Connected Maximum Cut}, by the definition of legitimate cuts, we should take the maximum value among $\mc(r, (i, n-i, 1, j))$ for $1 \le i < n$ and $j \in \{0, 1\}$. Note that as $L_v$ has only one twin-class, the length of valid tuples is exactly four. For \textsc{Maximum Minimal Cut}, we should take the maximum value among $\mc(r, (i, n-i, 1, 1))$ for $1 \le i < n$.
\end{proof}

Since there is an algorithm that, given a graph $G$ and an integer $k$, either conclude that the clique-width of $G$ is more than $k$ or find a $(2^{k-1}-1)$-expression tree of $G$ in time $O(n^3)$ \cite{Hlineny2008,Oum2006,Oum2008}, {\sc Maximum Minimal Cut} and {\sc Connected Maximum Cut} are XP parameterized by the clique-width of the input graph.

\subsection{Twin-cover}\label{sec:twincover}
\textsc{Maximum Cut} is FPT when parameterized by  twin-cover number \cite{Ganian2015}.
In this section, we show that \textsc{Connected Maximum Cut} and \textsc{Maximum Minimal Cut} are also FPT when parameterized by  twin-cover number.

\begin{theorem}\label{thm:twincover:concut}
\textsc{Connected Maximum Cut} can be solved in time $O^*(2^{2^{\tc}+\tc})$.
\end{theorem}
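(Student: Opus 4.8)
The plan is to exploit the structure imposed by a small twin-cover. First I would compute a minimum twin-cover $X$ with $|X| = \tc$ (finding one is FPT, or it may be taken as given) and recall from the excerpt that $G[V \setminus X]$ is a disjoint union of cliques and that all vertices of a single clique $Z$ share the same neighbourhood in $X$, which I call the \emph{type} $\tau(Z) \subseteq X$ of $Z$. There are at most $2^{\tc}$ distinct types. The crucial observation is that for a connected cut $(S, V \setminus S)$ the contribution of a clique $Z$ of type $\tau$ to $\delta(S)$ depends only on $\tau$, on $a := |Z \cap S|$, and on $S_X := S \cap X$: writing $\tau_S = |\tau \cap S_X|$ and $\tau_{\bar S} = |\tau| - \tau_S$, this contribution equals $a\,\tau_{\bar S} + (|Z| - a)\,\tau_S + a(|Z| - a)$, a concave quadratic in $a$ whose maximiser over $\{0, \dots, |Z|\}$ is found in constant time.

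First I would enumerate the two pieces of information that govern connectivity. I guess $S_X = S \cap X$ (at most $2^{\tc}$ choices) and, for each of the at most $2^{\tc}$ types $\tau$, an \emph{activation bit} $b_\tau \in \{0,1\}$ recording whether at least one clique of type $\tau$ places a vertex in $S$ (at most $2^{2^{\tc}}$ choices). These two guesses already pin down the connectivity of $G[S]$, which is the only global constraint for \textsc{Connected Maximum Cut}. Indeed, a nonempty clique-part $Z \cap S$ is internally a clique and is joined to exactly $\tau(Z) \cap S_X$, so activating a type $\tau$ makes all of $\tau \cap S_X$ pairwise connected inside $G[S]$. I would therefore build an auxiliary graph on the vertex set $S_X$ consisting of the edges of $G[S_X]$ together with a clique on $\tau \cap S_X$ for every activated type $\tau$, and declare the guess \emph{feasible} only when this auxiliary graph is connected and, in addition, every activated type $\tau$ satisfies $\tau \cap S_X \neq \emptyset$ (otherwise its clique-part would form a separate component). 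The boundary case $S_X = \emptyset$, where $G[S]$ can be connected only if $S$ lies inside a single clique, I would dispose of by a separate polynomial scan over all cliques.

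For each feasible guess I then maximise $|\delta(S)|$ independently over the cliques: for every clique $Z$ I choose $a$ maximising the quadratic above, subject to the activation pattern, namely $a = 0$ for all cliques of a deactivated type and $a \ge 1$ for at least one clique of each activated type. The latter constraint is handled by taking the unconstrained optima and, if no clique of an activated type received a vertex, switching the single clique whose forced move to a positive value costs the least. Summing the clique contributions with the number of cut edges inside $G[S_X]$ gives the best cut consistent with the guess. Since the true optimal connected cut realises exactly one guess — its own $S_X$ and its own activation pattern — and is feasible for that guess, returning the maximum over all feasible guesses is optimal. The running time is $2^{\tc} \cdot 2^{2^{\tc}}$ guesses times a polynomial factor, i.e. $O^*(2^{2^{\tc} + \tc})$.

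I expect the main obstacle to be the correctness argument that these $\tc + 2^{\tc}$ bits capture connectivity \emph{exactly}: one must prove that the auxiliary-graph criterion is equivalent to connectivity of $G[S]$, and that the activation constraints in the optimisation are enforced so that each guess counts precisely the cuts realising that pattern (deactivated types truly contribute no vertex, activated types truly contribute one). The cut-size maximisation itself is routine once the per-clique quadratic is set up; the delicate part is the book-keeping for the activation constraint and for the degenerate cases $S_X = \emptyset$ and empty cut sides.
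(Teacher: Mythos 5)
Your proposal is correct and follows essentially the same route as the paper's proof: guess $S \cap X$ (at most $2^{\tc}$ choices) and an activation pattern over the at most $2^{\tc}$ clique types (at most $2^{2^{\tc}}$ choices), check connectivity of $G[S]$ from these guesses alone, and optimize the per-clique quadratic contributions independently. Your write-up is in fact somewhat more explicit than the paper's on the auxiliary-graph connectivity criterion and on enforcing that each activated type actually receives a vertex, but these are elaborations of the same argument rather than a different approach.
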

\begin{proof}
We first compute a minimum twin-cover $X$ of $G = (V, E)$ in time $O^*(1.2738^{\tc})$~\cite{Ganian2015}.
Now, we have a twin-cover $X$ of size $\tc$. Recall that $G[V\setminus X]$ consists of vertex disjoint cliques and for each $u, v \in Z$ in a clique $Z$ of $G[V \setminus X]$, $N(u) \cap X = N(v) \cap X$.

We iterate over all possible subsets $X'$ of $X$ and compute the size of a  maximum cut $(S, V\setminus S)$ of $G$ with $S \cap X = X'$.

If $X' = \emptyset$, exactly one of the cliques of $G[V \setminus X]$ intersects $S$ as $G[S]$ is connected.
Thus, we can compute a maximum cut by finding a maximum cut for each clique of $G[V \setminus X]$, which can be done in polynomial time.

Suppose otherwise that $X' \neq \emptyset$.
We define a {\em type} of each clique $Z$ of $G[V \setminus X]$.
The type of $Z$, denoted by $T(Z)$, is $N(Z) \cap X$. Note that there are at most $2^{\tc}-1$ types of cliques in $G[V \setminus X]$.

For each type of cliques, we guess that $S$ has an intersection with this type of cliques. There are at most $2^{2^\tc - 1}$ possible combinations of types of cliques.
Let $\mathcal T$ be the set of types in $G[V \setminus X]$. 
For each guess $\mathcal T' \subseteq \mathcal T$, we try to find a maximum cut $(S, V \setminus S)$ such that $G[S]$ is connected, $S \cap X = X'$, for each $T \in \mathcal T'$, at least one of the cliques of type $T$ has an intersection with $S$, and for each $T \notin \mathcal T'$, every clique of type $T$ has no intersection with $S$.
We can easily check if $G[S]$ will be connected as $S$ contains a vertex of a clique of type $T \in \mathcal T'$.
Consider a clique $Z$ of type $T(Z) = X'' \subseteq X$. 
Since every vertex in $Z$ has the same neighborhood in $X$, we can determine the number of cut edges incident to $Z$ from the cardinality of $S \cap Z$.
More specifically, if $|S \cap Z| = p$, the number of cut edges incident to $Z$ is equal to $p(|Z|-p)+p|X''\cap (X \setminus X')| + (|Z|-p)|X'' \cap X'|$.
Moreover, we can independently maximize the number of cut edges incident to $Z$ for each clique $Z$ of $G[V \setminus X]$.

Overall, for each $X' \subseteq X$ and for each set of types $\mathcal T'$, we can compute a maximum connected cut with respect to $X'$ and $\mathcal T'$ in polynomial time.
Therefore, the total running time is bounded by $O^*(2^{2^\tc+\tc})$.
\end{proof}

\begin{theorem}\label{thm:twincover:maxmin}
\textsc{Maximum Minimal Cut} can be solved in time $O^*(2^{\tc}3^{2^{\tc}})$.
\end{theorem}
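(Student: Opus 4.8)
The plan is to adapt the twin-cover algorithm for \textsc{Connected Maximum Cut} from Theorem~\ref{thm:twincover:concut}, but now I must simultaneously guarantee that \emph{both} sides $G[S]$ and $G[V \setminus S]$ are connected, since a cut is minimal precisely when both induced subgraphs are connected. As before, I first compute a minimum twin-cover $X$ of size $\tc$ in time $O^*(1.2738^{\tc})$ using the algorithm of Ganian~\cite{Ganian2015}, so that $G[V \setminus X]$ is a disjoint union of cliques, and every vertex in a given clique $Z$ has the same neighborhood $N(Z) \cap X$ in $X$. I would iterate over all $2^{\tc}$ possible subsets $X' = S \cap X$, fixing which vertices of the twin-cover lie on the $S$-side.

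The key new ingredient, reflected in the $3^{2^{\tc}}$ factor of the claimed running time, is that I classify each clique $Z$ by its \emph{type} $T(Z) = N(Z) \cap X$ (there are at most $2^{\tc}$ types), and then, instead of a binary guess about whether a type intersects $S$ as in Theorem~\ref{thm:twincover:concut}, I make a \emph{ternary} guess for each type. The three states should record, for each type $T$, whether the cliques of type $T$ contribute vertices only to $S$, only to $V \setminus S$, or to both sides. This ternary branching over the $2^{\tc}$ types gives the $3^{2^{\tc}}$ factor, and combined with the $2^{\tc}$ choices of $X'$ yields the overall bound $O^*(2^{\tc} 3^{2^{\tc}})$. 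For each fixed $X'$ and each ternary assignment to the types, I then verify connectivity of both sides: connectivity across the cut is mediated entirely through $X$, so I would check that the $S$-side vertices (those in $X'$ together with the chosen cliques or clique-parts) induce a connected graph and likewise for the complement, using the fact that two cliques on the same side are connected to each other exactly when they share a common neighbor in their respective twin-cover halves.

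Once the global connectivity pattern is fixed by $X'$ and the ternary type-assignment, the remaining task is to maximize the cut size by choosing, for each individual clique $Z$, how many of its vertices go to $S$. As in the proof of Theorem~\ref{thm:twincover:concut}, if $|S \cap Z| = p$ and $T(Z) = X'' \subseteq X$, the number of cut edges incident to $Z$ is $p(|Z| - p) + p\,|X'' \cap (X \setminus X')| + (|Z| - p)\,|X'' \cap X'|$, a quantity depending only on $p$, so the contribution of each clique can be optimized independently in polynomial time by trying all $p$ from $0$ to $|Z|$. The only subtlety is that the ternary guess imposes a \emph{constraint} on the admissible values of $p$: a clique whose type is designated ``$S$-side only'' must have $p = |Z|$ (or at least $p \ge 1$ with its complement empty, depending on how the states are set up), one designated ``$(V \setminus S)$-side only'' must have $p = 0$, and one allowed to straddle may take any $p$; moreover at least one clique of each ``both-sides'' or appropriately-designated type must actually realize the required intersection so that the guessed connectivity pattern is honored. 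I would therefore perform the per-clique optimization subject to these per-type constraints.

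The main obstacle I anticipate is the connectivity bookkeeping for the $V \setminus S$ side, which is genuinely harder than in the one-sided connected case. When a clique straddles the cut, it contributes vertices to both $S$ and $V \setminus S$, and these contributions must be reconciled with the global connectivity requirement on \emph{both} sides simultaneously; in particular, if $X' = \emptyset$ or $X \setminus X' = \emptyset$, one side has no twin-cover vertices to glue cliques together, forcing all of that side to lie within a single clique, which must be handled as a special case exactly as the $X' = \emptyset$ case was treated in Theorem~\ref{thm:twincover:concut}. Getting the ternary-state semantics precisely right — so that the guess faithfully captures which sides each type touches, enforces that every designated side is nonempty where required, and lets the independent per-clique maximization proceed without double-counting or violating connectivity — is the delicate part of the argument. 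The edge-counting and the $O^*(2^{\tc} 3^{2^{\tc}})$ running-time analysis are then routine.
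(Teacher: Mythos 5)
Your proposal is correct and follows essentially the same route as the paper's proof: compute a minimum twin-cover, enumerate the $2^{\tc}$ choices of $X' = S \cap X$, make a ternary guess per clique type (only $S$, only $V \setminus S$, or both), check connectivity of both sides from the type-level information, and optimize each clique's split independently using the same cut-edge formula. The details you flag (the degenerate cases $X' = \emptyset$ or $X' = X$, and enforcing that a ``both-sides'' type is actually realized) are exactly the points the paper handles, so there is nothing further to add.
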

\begin{proof}
We design an $O^*(2^{\tc}3^{2^{\tc}})$-time algorithm for \textsc{Maximum Minimal Cut}, where $\tc$ is the size of a minimum twin-cover of $G = (V, E)$.
This is quite similar to the one for \textsc{Connected Maximum Cut} developed in Section\ref{sec:twincover}.
As with an algorithm for \textsc{Connected Maximum Cut}, we first compute a minimum twin-cover $X$ in time $O^*(1.2738^{\tc})$~\cite{Ganian2015}.
Then we guess all $2^{\tc}$ possible subsets $X'\subseteq X$ and compute the size of maximum cut $(S,V\setminus S)$ of $G$ with $S\cap X=X'$.

If $X'=\emptyset$, exactly one of the cliques of $G[V\setminus X]$ intersects $S$ due to the connectivity of $G[S]$. Thus, we can compute a maximum cut in polynomial time.
Note that $G[V\setminus S]$ is also connected because $X\subseteq  V\setminus S$.
Similarly, when $S\cap X= \emptyset$, we can compute a maximum cut in polynomial time.
We are also done for the case where $X' = X$ by a symmetric argument.
Thus, in the following, we assume that our guess $X'$ is non-empty and proper subset of $X$.

For each guess $X' \subseteq X$, we further guess each type of cliques in $G[V \setminus X]$ has an intersection with only $S$, with only $V \setminus S$, or with both $S$ and $V \setminus S$
For each guess, we can easily check $S$ and $V \setminus S$ will be connected and maximize the size of a cut in polynomial time as in Section~\ref{sec:twincover}.
Since there are at most $2^{\tc}$ types of cliques in $G[V \setminus X]$, the total running time is $O^*(2^{\tc}3^{2^{\tc}})$.
\end{proof}

\subsection{Solution size}\label{sec:FPT:solutionsize}
In this section, we give FPT algorithms parameterized by the solution size for \textsc{Connected Maximum Cut} and  \textsc{Maximum Minimal Cut}. 
To show this, we use the following theorem.
\revised{
\begin{theorem}[\cite{Birmele2007}]\label{EGT}
The Cartesian product $C_ k\times K_2$ of a $k$-circuit with $K_2$ is called a \emph{$k$-prism}.
If $G$ contains no $k$-prism as a minor, $\tw(G)=O(k^2)$.
\end{theorem}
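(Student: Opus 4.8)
The plan is to prove the contrapositive: if $\tw(G)$ is large, say $\tw(G) > c\,k^2$ for a suitable absolute constant $c$, then $G$ contains a $k$-prism $C_k \times K_2$ as a minor. Rather than manipulate tree decompositions directly, I would certify large treewidth by an obstruction. By the treewidth--bramble duality of Seymour and Thomas, the maximum order of a bramble of $G$ equals $\tw(G)+1$; here a bramble is a family of pairwise-touching connected subgraphs, and its order is the minimum size of a vertex set meeting all of them. Thus it suffices to show that a bramble of order $\Omega(k^2)$ yields a $k$-prism minor. Recall that exhibiting such a minor amounts to finding $2k$ pairwise-disjoint connected branch sets $A_1,\dots,A_k,B_1,\dots,B_k$ with $A_i$ adjacent to $A_{i+1}$, $B_i$ adjacent to $B_{i+1}$ (indices cyclic), and $A_i$ adjacent to $B_i$ for every $i$; equivalently, two vertex-disjoint branch-set cycles joined by $k$ disjoint rungs meeting them in a \emph{consistent} cyclic order.

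One could instead try to route through the grid-minor theorem of Robertson and Seymour \cite{Robertson1986}, extracting a large grid and reading a prism off its outer annulus, but the polynomial grid-minor bound for general graphs is of high degree and would not deliver the clean quadratic dependence, so I would argue from the bramble directly. First I would extract the skeleton of the two cycles: using the high order of the bramble, select a long sequence of bramble elements and link consecutive ones by disjoint connecting paths (consecutive elements touch, so such links exist), producing, in the minor sense, two vertex-disjoint connected ``tracks'' $T_A$ and $T_B$, each contractible to a cycle of length $k$, together with $k$ internally disjoint rung paths joining $T_A$ to $T_B$. The touching property is exactly what guarantees that at each step a connecting path avoiding the previously used vertices still exists, provided the remaining part of the bramble cannot be covered by the vertices committed so far; this is where the order of the bramble is consumed.

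The main obstacle is to make the cycles and rungs consistently ordered, so that the contracted object is genuinely $C_k \times K_2$ and not merely some cubic graph on the same branch sets. I would handle this inductively, maintaining a partial prism consisting of two arcs with correctly ordered rungs already attached, and at each step routing the next rung so that its two endpoints fall into the matching positions along $T_A$ and $T_B$, using the bramble to reach the required regions while respecting the cyclic order fixed so far. Rerouting or rechoosing earlier rungs may be forced, and controlling how many vertices this consumes is precisely what raises the required bramble order from linear to quadratic in $k$: each of the $\Theta(k)$ rungs must be routed while avoiding a separator of size $\Theta(k)$ carved out by the portion of the structure already committed.

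Finally I would tally the budget. Building two length-$k$ cycles together with $k$ rungs, where each routing step must avoid up to $O(k)$ previously used branch vertices, requires a bramble that survives the deletion of $O(k^2)$ vertices; hence a bramble of order $c\,k^2$ forces a $k$-prism minor, which gives $\tw(G)=O(k^2)$ in the contrapositive. The two delicate points are the simultaneous disjointness of the roughly $3k$ linking paths, which I would secure by always routing within the ``large side'' of the separation certified by the bramble, and the consistency of the cyclic order, which is the genuinely combinatorial heart of the argument; the remainder is bookkeeping on how much of the bramble each step spends.
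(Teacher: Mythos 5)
This statement is not proved in the paper at all: it is imported verbatim from the cited reference (Birmel\'e et al.), so there is no in-paper proof to compare against. Judged on its own terms, your sketch does identify the strategy actually used in that reference --- certify large treewidth by a bramble via Seymour--Thomas duality and convert a bramble of order $\Omega(k^2)$ into a prism minor --- so the overall plan is the right one and the quadratic budget heuristic ($k$ rungs, each routed around an $O(k)$-sized committed structure) is the correct intuition for where the $k^2$ comes from.

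However, as a proof there is a genuine gap, and you have located it yourself without closing it: the step that produces two disjoint cycle-tracks together with $k$ rungs whose endpoints appear in the \emph{same} cyclic order on both tracks is only described as an intention (``I would handle this inductively\dots\ rerouting or rechoosing earlier rungs may be forced''). Nothing in the touching property of a bramble by itself forces the rung endpoints to interleave consistently; a bramble guarantees that connecting paths exist into the uncovered part, but it does not control \emph{where} along the already-built tracks those paths attach, and an adversarial attachment pattern yields a cubic graph on the right branch sets that is not $C_k \times K_2$. Fixing this requires a concrete mechanism --- e.g., growing the two tracks and the rungs simultaneously from a linked sequence of bramble elements so that the order is imposed by construction, or a Menger-type argument giving $k$ disjoint paths between prescribed ordered segments --- together with an explicit accounting that each such step costs only $O(k)$ of the bramble's order. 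Without that, the claim that a bramble of order $c\,k^2$ forces a $k$-prism is asserted rather than proved, and this is precisely the combinatorial heart of Birmel\'e--Bondy--Reed's argument.
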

Then we have the following theorem.
\begin{theorem}\label{FPT:solution}
\textsc{Connected Maximum Cut} and \textsc{Maximum Minimal Cut} can be solved in time $O^*(2^{O(k^2)})$ where $k$ is the solution size.
\end{theorem}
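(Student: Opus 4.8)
The plan is to use Theorem~\ref{EGT} as a win-win dichotomy on the input graph $G$. Specifically, I would argue that the presence of a large prism minor in $G$ forces the existence of a large solution, so that whenever the target value $k$ has not already been certified, the treewidth of $G$ must be small enough to run the dynamic programming algorithms of Section~\ref{sec:treewidth}. First I would fix a constant $c$ and distinguish two cases according to whether $G$ contains a $(c k)$-prism as a minor or not, for a suitable choice of the constant factor.

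\textbf{Large-prism case.} Suppose $G$ contains a large $k'$-prism as a minor, where $k' = \Theta(k)$. A $k'$-prism $C_{k'} \times K_2$ is a $3$-regular graph on $2k'$ vertices, and it admits both a large connected cut and a large minimal cut: for instance, taking one of the two $k'$-cycles as $S$ gives a connected set whose cutset (the $k'$ rungs of the prism, plus some cycle edges) has size $\Omega(k')$, and one can similarly exhibit a minimal cut of size $\Omega(k')$ since both sides of such a cut induce connected subgraphs. The key step is then to \emph{lift} such a cut from the prism minor back to a cut of $G$ of the corresponding size: each branch set of the minor model is a connected subgraph of $G$, and assigning each branch set entirely to one side of the cut according to its role in the prism produces a cut of $G$ whose cutset contains at least one $G$-edge for every prism edge that crosses. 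The connectivity of each side in $G$ follows because branch sets are connected and the contracted adjacency is preserved, so the resulting cut of $G$ is connected (respectively minimal) and has size $\Omega(k') \ge k$. Hence in this case we may immediately answer \textbf{yes}.

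\textbf{Small-treewidth case.} Otherwise $G$ has no $(ck)$-prism minor, so by Theorem~\ref{EGT} we have $\tw(G) = O(k^2)$. We then compute (or approximate) a tree decomposition of this width and run the tree-width algorithms of Theorem~\ref{thm:treewidth_single:Connected} and Theorem~\ref{thm:treewidth_single:Maxmin}, which solve \textsc{Connected Maximum Cut} and \textsc{Maximum Minimal Cut} in time $O^*(3^{\tw})$ and $O^*(4^{\tw})$ respectively. Substituting $\tw = O(k^2)$ yields a running time of $O^*(2^{O(k^2)})$ for both problems. Since we only need a decomposition of width $O(k^2)$, a constant-factor treewidth approximation suffices and does not affect the asymptotic bound.

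The main obstacle I expect is making the large-prism case fully rigorous: one must verify both that a $k'$-prism genuinely contains a connected cut \emph{and} a minimal cut of size linear in $k'$ (the minimal case is the more delicate of the two, since it requires \emph{both} sides to induce connected subgraphs), and that the lifting argument preserves the relevant connectivity constraints when branch sets are collapsed. A secondary, more technical point is matching constants: the prism threshold $ck$ must be chosen large enough that the lifted cut has size at least $k$, while Theorem~\ref{EGT} only guarantees $\tw(G) = O(k^2)$ in the complementary case, so the two bounds must be reconciled by a single choice of $c$. Once these are settled, combining the two cases gives the claimed $O^*(2^{O(k^2)})$ algorithm.
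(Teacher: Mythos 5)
Your proposal follows essentially the same win--win strategy as the paper's proof: test whether $\tw(G)=O(k^2)$ using a single-exponential treewidth algorithm, run the $O^*(3^{\tw})$ and $O^*(4^{\tw})$ algorithms of Section~\ref{sec:treewidth} in the small-treewidth case, and otherwise answer yes by extracting from the $k$-prism minor guaranteed by Theorem~\ref{EGT} a minimal (hence also connected) cut of size at least $k$. The lifting step you flag as the delicate point is treated just as tersely in the paper (the $k$ rung edges are simply ``extended'' to the cutset of some minimal cut of $G$), so your account matches the published argument.
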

\begin{proof}
We first determine whether the tree-width of $G$ is $O(k^2)$ in time $O^*(2^{O(k)})$ by using the algorithm in \cite{bod2016}. 
If $\tw(G)=O(k^2)$, the algorithm in \cite{bod2016} outputs a tree decomposition of width $O(k^2)$. Thus, we apply the dynamic programming algorithms based on tree decompositions described in Section~\ref{sec:treewidth}, and the running time is $O^*(2^{O(k^2)})$.
Otherwise, we can conclude that $G$ has a minimal cut (and also a connected cut) of size at least $k$.
To see this, consider a $k$-prism minor of $G$. Then, we take $k$ ``middle edges'' corresponding to $K_2$ in the $k$-prism minor and add some edges to make these edges form a cutset of some minimal cut of $G$. The size of such a cut is at least $k$ and hence $G$ has a minimal cut and a connected cut of size at least $k$.
\end{proof}
}

For \textsc{Connected Maximum Cut}, we can further improve the running time by giving an $O^*(9^k)$-time algorithm.

In \cite{Fellows2009}, Fellows et al. proposed a ``Win/Win'' algorithm that outputs in linear time either a spanning tree of $G$ having at least $k$ leaves, or a path decomposition of $G$ of width at most $2k$. If $G$ has such a spanning tree, we can construct a cut $(S, V \setminus S)$ of size at least $k$ by taking the internal vertices of the tree for $S$.
Clearly, $G[S]$ is connected, and hence we are done in this case. 
Otherwise, we have a path decomposition of width at most $2k$. Thus, we can compute \textsc{Connected Maximum Cut} on such a path decomposition by using an $O^*(3^{\tw})$-algorithm in Section \ref{sec:treewidth}. 
\begin{theorem}\label{thm:FPTk:Connected}
There is a Monte-Carlo algorithm that solves \textsc{Connected Maximum Cut}  in time $O^*(9^{k})$. It cannot give false positives
and may give false negatives with probability at most 1/2.
\end{theorem}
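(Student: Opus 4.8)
The plan is to set up a Win/Win dichotomy driven by the maximum number of leaves in a spanning tree of $G$. First I would run the linear-time algorithm of Fellows et al.~\cite{Fellows2009}, which on our (connected) input $G$ returns either a spanning tree with at least $k$ leaves or a path decomposition of $G$ of width at most $2k$. The point of this dichotomy is that the first outcome already \emph{certifies} a large connected cut, whereas the second lands us in the bounded-width regime where the tree-decomposition algorithm of Section~\ref{sec:treewidth} applies and yields exactly the desired running time. Throughout I may assume $k \ge 3$, since instances with $k \le 2$ are decided directly.

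In the spanning-tree branch, let $T$ be the returned spanning tree and take $S$ to be the set of internal (non-leaf) vertices of $T$. Deleting the leaves of a tree yields a subtree, so $T[S]$, and hence $G[S]$, is connected; and since $T$ has at least $k \ge 3$ leaves it has at least one internal vertex, so $S \neq \emptyset$. Each of the $\ge k$ leaves is joined to its (internal) parent by a distinct tree edge crossing $(S, V \setminus S)$, so $|\delta(S)| \ge k$. Thus $(S, V \setminus S)$ is a connected cut of size at least $k$, and I would answer YES deterministically in this case.

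In the path-decomposition branch I obtain a path decomposition, which is in particular a tree decomposition, of some width $w \le 2k$. I would feed it to the Cut \& Count based algorithm of Theorem~\ref{thm:treewidth_single:Connected}, which decides \textsc{Connected Maximum Cut} in time $O^*(3^{w})$ given a decomposition of width $w$. With $w \le 2k$ this runs in $O^*(3^{2k}) = O^*(9^k)$, and since this is the only randomized component, the whole procedure inherits its one-sided error: no false positives, and false negatives with probability at most $1/2$.

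I expect the main obstacle to be justifying the dichotomy rather than either branch in isolation: one must be confident that the two cases returned by the routine of Fellows et al.~\cite{Fellows2009} are jointly exhaustive, and that the spanning-tree side really produces a \emph{connected} cut of size at least $k$. The two facts this rests on — that the internal vertices of a spanning tree induce a connected subgraph, and that the cutset contains at least one distinct edge per leaf — are elementary once isolated, and the path-width side is correct by direct appeal to Theorem~\ref{thm:treewidth_single:Connected}. The only delicate bookkeeping is confirming that the error probability is carried solely by the Cut \& Count step, so that the combined procedure remains Monte-Carlo with the claimed guarantee.
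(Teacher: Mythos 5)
Your proposal is correct and follows essentially the same route as the paper: the Fellows et al.\ Win/Win dichotomy, taking the internal vertices of a spanning tree with at least $k$ leaves as a connected cut of size at least $k$ in one branch, and running the $O^*(3^{\tw})$ Cut \& Count algorithm of Theorem~\ref{thm:treewidth_single:Connected} on the width-$2k$ path decomposition in the other. The extra details you supply (connectivity of the internal vertices, one distinct cut edge per leaf, the error probability residing solely in the Cut \& Count step) are exactly the points the paper leaves implicit.
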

Also, using the rank-based algorithm in Theorem~\ref{thm:rank-based:concut}, we obtain an $O^*(38.2^k)$-time deterministic algorithm for \textsc{Connected Maximum Cut}. Note that our rank-based algorithm in Theorem~\ref{thm:rank-based:concut} runs in time $O^*((1+2^{\omega})^{\pw})$ on a path decomposition and $(1+2^{\omega})^2< 38.2$, where $\omega<2.3727$ is the exponent of matrix multiplication.
\begin{theorem}\label{thm:FPTk:Connected:rank-based}
There is an $O^*(38.2^k)$-time deterministic algorithm for \textsc{Connected Maximum Cut}.
\end{theorem}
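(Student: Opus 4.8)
The plan is to follow the same Win/Win dichotomy used for the randomized $O^*(9^k)$ algorithm of Theorem~\ref{thm:FPTk:Connected}, but to replace the Cut \& Count dynamic program by the deterministic rank-based algorithm of Theorem~\ref{thm:rank-based:concut}. First I would invoke the algorithm of Fellows et al.~\cite{Fellows2009}, which in linear time returns either (i) a spanning tree of $G$ with at least $k$ leaves, or (ii) a path decomposition of $G$ of width at most $2k$.

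In case (i) I would argue exactly as in Theorem~\ref{thm:FPTk:Connected}: taking $S$ to be the set of internal vertices of the spanning tree yields a connected induced subgraph $G[S]$, and each of the at least $k$ leaves contributes a cut edge to $\delta(S)$, so $(S, V \setminus S)$ is a connected cut of size at least $k$ and the instance is a yes-instance. This step is immediate and requires no further work.

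The substantive case is (ii). Here I would run the deterministic rank-based dynamic program of Theorem~\ref{thm:rank-based:concut} on the given path decomposition. The key point is that a path decomposition contains no join nodes, so the dominating term in the running-time analysis of Theorem~\ref{thm:rank-based:concut}---the $O^*((1+2^{\omega+1})^{\tw})$ contribution coming from evaluating the recurrence at join nodes---simply does not arise. What remains is the per-node cost at introduce-vertex, introduce-edge, and forget nodes together with the cost of the {\sf reduce} algorithm, which combine to a running time of $O^*((1+2^{\omega})^{\pw})$, exactly as noted immediately after Theorem~\ref{thm:rank-based:concut}. Substituting $\pw \le 2k$ gives $O^*((1+2^{\omega})^{2k}) = O^*\bigl(((1+2^{\omega})^2)^k\bigr)$, and since $\omega < 2.3727$ we have $2^{\omega} < 5.18$, hence $(1+2^{\omega})^2 < 38.2$, yielding the claimed bound $O^*(38.2^k)$.

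The only place demanding care is the running-time accounting in case (ii): one must confirm that, on a path decomposition, the absence of join nodes genuinely lowers the exponent base from $1+2^{\omega+1}$ to $1+2^{\omega}$, i.e. that the $O^*((1+2^{\omega})^{\pw})$ bound stated for the join-free setting applies verbatim here. Since that bound is already established as part of Theorem~\ref{thm:rank-based:concut}, the remaining work is purely the constant arithmetic $(1+2^{\omega})^2 < 38.2$, and there is no additional conceptual obstacle beyond what the rank-based machinery of~\cite{Bodlaender2015} already supplies.
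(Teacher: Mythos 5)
Your proposal matches the paper's argument exactly: the same Win/Win dichotomy of Fellows et al.\ (spanning tree with $k$ leaves versus path decomposition of width at most $2k$), followed by the rank-based algorithm of Theorem~\ref{thm:rank-based:concut} on the path decomposition, using the join-free bound $O^*((1+2^{\omega})^{\pw})$ and the arithmetic $(1+2^{\omega})^2 < 38.2$. The paper states this more tersely, but there is no substantive difference.
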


As for kernelization, it is not hard to see that \textsc{Connected Maximum Cut} and \textsc{Maximum Minimal Cut} do not admit a polynomial kernelization unless NP~$\subseteq~$coNP$/$poly since both problems are trivially OR-compositional \cite{Bodlaender2009}; at least one of graphs $G_1, G_2, \ldots G_t$ have a connected/minimal cut of size at least $k$ if and only if their disjoint union $G_1 \cup G_2 \cup \cdots \cup G_t$ has.

\begin{theorem}\label{thm:NoPolyKernel}
Unless ${\rm NP}~\subseteq {\rm coNP}/{\rm poly}$, \textsc{Maximum Minimal Cut} and \textsc{Connected Maximum Cut} admit no polynomial kernel parameterized by the solution size.
\end{theorem}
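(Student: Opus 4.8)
The plan is to apply the OR-composition framework of Bodlaender et al.~\cite{Bodlaender2009}: an NP-hard parameterized problem that admits an OR-composition cannot have a polynomial kernel unless ${\rm NP}\subseteq{\rm coNP}/{\rm poly}$. Concretely, I would exhibit a polynomial-time algorithm that takes $t$ instances $(G_1,k),\dots,(G_t,k)$ sharing the \emph{same} parameter $k$ and outputs a single instance $(G,k)$ whose parameter is unchanged (hence trivially polynomially bounded and independent of $t$) such that $(G,k)$ is a yes-instance if and only if at least one $(G_i,k)$ is. Since both problems are NP-complete (for instance by Theorems~\ref{thm:split:connected} and~\ref{thm:split:maxmin}), the framework then yields the claimed kernel lower bound.

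The composition itself is simply the disjoint union $G:=G_1\cup G_2\cup\cdots\cup G_t$, computable in time polynomial in $\sum_i|G_i|$. For \textsc{Connected Maximum Cut}, any connected cut $(S,V\setminus S)$ of size at least $k\ge 1$ has $S\neq\emptyset$ with $G[S]$ connected, so $S$ lies entirely in one component $G_i$; as there are no edges between components, $\delta_G(S)=\delta_{G_i}(S)$, and conversely any connected cut of some $G_i$ is a connected cut of $G$ of the same size. Hence $G$ has a connected cut of size at least $k$ iff some $G_i$ does. For \textsc{Maximum Minimal Cut} I would invoke the definition of minimal cuts for disconnected graphs fixed in Section~\ref{sec:preli}: every minimal cutset of $G$ is a minimal cutset of a single connected component, and vice versa. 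Consequently $G$ has a minimal cut of size at least $k$ iff some $G_i$ does, which is exactly the OR-behavior required.

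The only delicate point---and the place I would argue carefully---is this last equivalence for \textsc{Maximum Minimal Cut}, since all earlier sections assume a connected input whereas the disjoint union is disconnected by design. Here one must rely precisely on the convention from the preliminaries that a cut of a disconnected graph is minimal exactly when its cutset is a minimal cutset of one component, rather than on the ``both sides induce connected subgraphs'' characterization, which only holds in the connected case. Once this is settled, both reductions are OR-compositions with parameter $k'=k=\mathrm{poly}(k)$, and the composition theorem of Bodlaender et al.\ delivers the theorem.
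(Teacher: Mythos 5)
Your proposal is correct and matches the paper's argument exactly: the paper proves Theorem~\ref{thm:NoPolyKernel} by observing that both problems are OR-compositional under disjoint union and invoking the framework of Bodlaender et al.~\cite{Bodlaender2009}, relying on the convention from Section~\ref{sec:preli} that a minimal cut of a disconnected graph corresponds to a minimal cutset of one of its components. Your write-up is in fact more careful than the paper's one-sentence justification, particularly in flagging the disconnected-graph convention for \textsc{Maximum Minimal Cut}, but the route is the same.
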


\section{Conclusion and Remark}
In this paper, we studied two variants of \textsc{Max Cut}, called \textsc{Connected Maximum Cut} and \textsc{Maximum Minimal Cut}. 
We showed that both problems are NP-complete  even on planar bipartite graphs and split graphs. 
For the parameterized complexity, we gave FPT algorithms parameterized by tree-width, twin-cover number, and the solution size, respectively.
Moreover, we designed XP-algorithms parameterized by clique-width.


Finally, we mention our problems on weighted graphs.
It is not hard to see that \textsc{Connected Maximum Cut} and \textsc{Maximum Minimal Cut} remain to be FPT with respect to tree-width.
However, our results with respect to clique-width and twin-cover number would not be extended to weighted graphs since both problems are NP-hard on $0$-$1$ edge-weighted complete graphs.



\bibliography{ref}


\if0
\section{Styles of lists, enumerations, and descriptions}\label{sec:itemStyles}

List of different predefined enumeration styles:

\begin{itemize}
\item \verb|\begin{itemize}...\end{itemize}|
\item \dots
\item \dots
\end{itemize}

\begin{enumerate}
\item \verb|\begin{enumerate}...\end{enumerate}|
\item \dots
\item \dots
\end{enumerate}

\begin{alphaenumerate}
\item \verb|\begin{alphaenumerate}...\end{alphaenumerate}|
\item \dots
\item \dots
\end{alphaenumerate}

\begin{romanenumerate}
\item \verb|\begin{romanenumerate}...\end{romanenumerate}|
\item \dots
\item \dots
\end{romanenumerate}

\begin{bracketenumerate}
\item \verb|\begin{bracketenumerate}...\end{bracketenumerate}|
\item \dots
\item \dots
\end{bracketenumerate}

\begin{description}
\item[Description 1] \verb|\begin{description} \item[Description 1]  ...\end{description}|
\item[Description 2] Fusce eu leo nisi. Cras eget orci neque, eleifend dapibus felis. Duis et leo dui. Nam vulputate, velit et laoreet porttitor, quam arcu facilisis dui, sed malesuada risus massa sit amet neque.
\item[Description 3]  \dots
\end{description}

\section{Theorem-like environments}\label{sec:theorem-environments}

List of different predefined enumeration styles:

\begin{theorem}\label{testenv-theorem}
Fusce eu leo nisi. Cras eget orci neque, eleifend dapibus felis. Duis et leo dui. Nam vulputate, velit et laoreet porttitor, quam arcu facilisis dui, sed malesuada risus massa sit amet neque.
\end{theorem}

\begin{lemma}\label{testenv-lemma}
Fusce eu leo nisi. Cras eget orci neque, eleifend dapibus felis. Duis et leo dui. Nam vulputate, velit et laoreet porttitor, quam arcu facilisis dui, sed malesuada risus massa sit amet neque.
\end{lemma}

\begin{corollary}\label{testenv-corollary}
Fusce eu leo nisi. Cras eget orci neque, eleifend dapibus felis. Duis et leo dui. Nam vulputate, velit et laoreet porttitor, quam arcu facilisis dui, sed malesuada risus massa sit amet neque.
\end{corollary}

\begin{proposition}\label{testenv-proposition}
Fusce eu leo nisi. Cras eget orci neque, eleifend dapibus felis. Duis et leo dui. Nam vulputate, velit et laoreet porttitor, quam arcu facilisis dui, sed malesuada risus massa sit amet neque.
\end{proposition}

\begin{exercise}\label{testenv-exercise}
Fusce eu leo nisi. Cras eget orci neque, eleifend dapibus felis. Duis et leo dui. Nam vulputate, velit et laoreet porttitor, quam arcu facilisis dui, sed malesuada risus massa sit amet neque.
\end{exercise}

\begin{definition}\label{testenv-definition}
Fusce eu leo nisi. Cras eget orci neque, eleifend dapibus felis. Duis et leo dui. Nam vulputate, velit et laoreet porttitor, quam arcu facilisis dui, sed malesuada risus massa sit amet neque.
\end{definition}

\begin{example}\label{testenv-example}
Fusce eu leo nisi. Cras eget orci neque, eleifend dapibus felis. Duis et leo dui. Nam vulputate, velit et laoreet porttitor, quam arcu facilisis dui, sed malesuada risus massa sit amet neque.
\end{example}

\begin{note}\label{testenv-note}
Fusce eu leo nisi. Cras eget orci neque, eleifend dapibus felis. Duis et leo dui. Nam vulputate, velit et laoreet porttitor, quam arcu facilisis dui, sed malesuada risus massa sit amet neque.
\end{note}

\begin{note*}
Fusce eu leo nisi. Cras eget orci neque, eleifend dapibus felis. Duis et leo dui. Nam vulputate, velit et laoreet porttitor, quam arcu facilisis dui, sed malesuada risus massa sit amet neque.
\end{note*}

\begin{remark}\label{testenv-remark}
Fusce eu leo nisi. Cras eget orci neque, eleifend dapibus felis. Duis et leo dui. Nam vulputate, velit et laoreet porttitor, quam arcu facilisis dui, sed malesuada risus massa sit amet neque.
\end{remark}

\begin{remark*}
Fusce eu leo nisi. Cras eget orci neque, eleifend dapibus felis. Duis et leo dui. Nam vulputate, velit et laoreet porttitor, quam arcu facilisis dui, sed malesuada risus massa sit amet neque.
\end{remark*}

\begin{claim}\label{testenv-claim}
Fusce eu leo nisi. Cras eget orci neque, eleifend dapibus felis. Duis et leo dui. Nam vulputate, velit et laoreet porttitor, quam arcu facilisis dui, sed malesuada risus massa sit amet neque.
\end{claim}

\begin{claim*}\label{testenv-claim2}
Fusce eu leo nisi. Cras eget orci neque, eleifend dapibus felis. Duis et leo dui. Nam vulputate, velit et laoreet porttitor, quam arcu facilisis dui, sed malesuada risus massa sit amet neque.
\end{claim*}

\begin{proof}
Fusce eu leo nisi. Cras eget orci neque, eleifend dapibus felis. Duis et leo dui. Nam vulputate, velit et laoreet porttitor, quam arcu facilisis dui, sed malesuada risus massa sit amet neque.
\end{proof}

\begin{claimproof}
Fusce eu leo nisi. Cras eget orci neque, eleifend dapibus felis. Duis et leo dui. Nam vulputate, velit et laoreet porttitor, quam arcu facilisis dui, sed malesuada risus massa sit amet neque.
\end{claimproof}
\fi
\end{document}